\documentclass[letterpaper, 9pt, conference]{ieeeconf}
\IEEEoverridecommandlockouts  
\usepackage{cite}
\usepackage{booktabs}
\usepackage{multirow}
\usepackage[hidelinks]{hyperref}

\usepackage{mathrsfs,amsmath,amssymb,stmaryrd,amsfonts,mathtools}
\usepackage{multirow}
\usepackage[linesnumbered,ruled,vlined]{algorithm2e}
\usepackage{graphicx,algpseudocode}
\usepackage{color}
\usepackage{footnote}
\makesavenoteenv{tabular}
\usepackage{tabularx}
\usepackage{graphicx}
\usepackage{color}
\usepackage{makecell}

\def\defas{\ensuremath{\mathrel{:=}}}

\DeclareMathOperator{\id}{id}

\def\Set#1#2{\ensuremath{
\left\{#1\,\middle|\,#2\right\}
}}

\def\dist{\mathrm{dist}}

\def\dha{\mathbf{d}_{\mathrm{H}}^{\mathrm{a}}}

\def\dh{\mathbf{d}_{\mathrm{H}}^{\mathrm{s}}}

\def\Comp{\mathcal{K}(\mathbb{R}^{n})}

\def\CompDOA{\mathcal{K}(\tilde{\mathcal{D}}_{\mathcal{A}}^{\mathcal{X}})}

\def\norm#1{\|  #1 \| }

\DeclarePairedDelimiter\abs{\lvert}{\rvert}

\DeclareMathDelimiter{\orbrack}{\mathopen}{operators}{"5D}{largesymbols}{"03}
\DeclareMathDelimiter{\clbrack}{\mathclose}{operators}{"5B}{largesymbols}{"02}
\def\intcc#1{\ensuremath{[#1]}}
\def\intoo#1{\ensuremath{\orbrack#1\clbrack}}
\def\intoc#1{\ensuremath{\orbrack#1]}}
\def\intco#1{\ensuremath{[#1\clbrack}}

\def\Hintcc#1{\ensuremath{\llbracket#1\rrbracket}}

\newtheorem{theorem}{Theorem}
\newtheorem{lemma}[theorem]{Lemma}

\newtheorem{remark}{Remark}

\newtheorem{corollary}{Corollary}

\newtheorem{assumption}{Assumption}

\title{Safe and Robust Domains of Attraction for Discrete-Time  Systems: A Set-Based Characterization  and Certifiable Neural Network Estimation}
\author{Mohamed  Serry,  Maxwell Fitzsimmons, and Jun Liu
\thanks{M. S, M. F, and J. L are with the department of Applied Mathematics, University of Waterloo, Waterloo, Ontario, Canada.  
This work was funded in part by the Natural Sciences and Engineering Research Council of Canada  (e-mail: \{mserry,~mfitzsimmons,~j.liu\}@uwaterloo.ca). M.S is also with the Department of Mechanical and Mechatronics Engineering at the University of Waterloo.}}

\date{}

\usepackage{etoolbox}
\makeatletter

\begin{document}
\maketitle

\begin{abstract}
Analyzing nonlinear systems with attracting robust invariant sets (RISs) requires estimating their domains of attraction (DOAs). Despite extensive research, accurately characterizing DOAs for general nonlinear systems remains challenging due to both theoretical and computational limitations, particularly in the presence of uncertainties and state constraints. In this paper, we propose a novel framework for the accurate estimation of safe (state-constrained) and robust DOAs for discrete-time nonlinear uncertain systems with continuous dynamics, open safe sets, compact disturbance sets, and  uniformly locally $\ell_p$-stable compact RISs. The notion of uniform $\ell_p$ stability is quite general and encompasses, as  special cases, uniform exponential and  polynomial stability.  The DOAs are characterized via newly introduced value functions defined on metric spaces of compact sets. We establish their fundamental mathematical properties and derive the associated Bellman-type (Zubov-type) functional equations. Building on this characterization, we develop a physics-informed neural network (NN) framework to learn the corresponding value functions by embedding the derived Bellman-type equations directly into the training process. To obtain certifiable estimates of the safe robust DOAs from the learned neural approximations, we further introduce a verification procedure that leverages existing formal verification tools. The effectiveness and applicability of the proposed methodology are demonstrated through four numerical examples involving nonlinear uncertain systems subject to state constraints, and its performance is compared with existing methods from the literature.

\end{abstract}

\begin{keywords}~Safety, robustness, nonlinear uncertain systems value functions, Bellman-type equations, neural networks, formal verification
\end{keywords}

\section{Introduction}
The safe robust domain of attraction (DOA) of a discrete-time uncertain dynamical system, with a prescribed attracting  robust invariant set (RIS), is defined as the set of initial states from which all trajectories are guaranteed to converge to that RIS, while satisfying given state constraints for all time steps and for all admissible uncertainties. This set characterizes both the safety and robust asymptotic reachability properties of the system.  DOAs play a critical role in the analysis and design of safe and reliable control strategies, motivating extensive research on their computation and approximation (see, e.g., \cite{liu2025physics,serry2024safe,serry2025safe,chesi2004estimating}). In this paper, we consider the problem of estimating the safe robust DOAs for general discrete-time nonlinear uncertain systems with continuous right-hand sides. Dynamical systems with uncertainties have long posed significant challenges in system analysis and design, as such uncertainties can induce unstable or undesirable behaviors. 

In the context of discrete-time uncertain systems, DOAs are often studied for asymptotically stable equilibrium points (EPs) (i.e., singleton attracting RISs). In the case of unperturbed and unconstrained dynamics, several efforts have been established (see, e.g., \cite{datseris2022effortless}).  Robust DOAs for EPs  are typically estimated using the framework of common Lyapunov functions. Traditional approaches typically assume fixed Lyapunov templates, such as common quadratic Lyapunov functions \cite{amato2002note}, the pointwise maximum of several positive definite quadratic functions \cite{goebel2006dual}, and parameter-dependent Lyapunov functions \cite{coutinho2013local}. The parameters of these functions are usually determined by solving a set of Lyapunov-type matrix inequality conditions (see, e.g., \cite{kau2005new, coutinho2010robust}).  It is worth noting that Lyapunov functions constructed for nominal systems possess a degree of robustness to sufficiently small uncertainties \cite{kalman1960control}, i.e., the asymptotic stability of EPs is preserved under certain classes of small perturbations. More recently, interval-based extensions of Lyapunov conditions have been proposed to estimate DOAs from a given initial Lyapunov function, while explicitly accounting for uncertainty in the system dynamics \cite{lu2024estimating}. The approaches above may yield conservative under-approximations of the true  robust DOAs, largely because they rely on fixed-form Lyapunov function templates. Their performance can degrade even further when state constraints are incorporated, since these fixed templates cannot adapt to or capture the geometry induced by such constraints.

Recently, there has been growing interest in learning-based approaches for estimating DOAs across various classes of systems, where neural networks (typically called neural Lyapunov function) are trained to satisfy Lyapunov-like conditions (see, e.g., \cite{serry2025safe,liu2025physics, chen2021learning}). To certify that the learned neural Lyapunov functions yield valid DOA estimates, formal verification tools, such as those utilizing interval arithmetic and mixed-integer programming, are employed \cite{shi2024certified, wu2023neural}. Although neural networks offer considerable flexibility and training efficiency, their verification remains computationally intensive due to the need for  state space discretization. Nonetheless, recent progress in neural network verification, particularly in linear bound propagation and branch-and-bound techniques, has enhanced the scalability and efficiency of such processes \cite{bunel2020branch, xu2021fast, wang2021beta, ferrari2022complete, zhou2024scalable}.

Safe robust DOAs with prescribed asymptotically stable EPs for discrete-time uncertain systems can be characterized exactly as sublevel sets of suitably defined value functions that satisfy Bellman-type functional equations \cite{xue2020characterization}. 
This characterization offers strong theoretical advantages: if the associated value function is accurately approximated, its sublevel sets can yield tight and potentially large estimates of the safe robust DOA. Existing numerical approaches for solving these functional equations have primarily relied on grid-based discretization methods \cite{xue2020characterization}. 
Such methods are computationally expensive, do not provide formal guarantees due to unquantified discretization errors, and are generally restricted to low-dimensional systems. 
Alternatively, sum-of-squares (SOS) optimization techniques \cite{xue2020robust} have been employed; however, these approaches are limited to systems that admit polynomial representations.  Moreover, the analyses in \cite{xue2020characterization, xue2020robust} rely on several restrictive technical assumptions, including local Lipschitz continuity of the system dynamics and boundedness of the safe set. In addition, pointwise evaluation of the value functions proposed in these works is computationally demanding, as it requires solving infinite-horizon, nonconvex, nonlinear optimization problems. The corresponding Bellman equations also involve technically challenging supremum and infimum operators. Collectively, these difficulties significantly hinder their integration into computationally efficient learning-based frameworks.

In this work, we develop a novel framework for estimating safe robust DOAs associated with prescribed attracting RISs for discrete-time, nonlinear, uncertain systems. 
The proposed approach integrates three key components: 
(i) newly developed theoretical set-based value functions that characterize safe robust DOAs, 
(ii) the expressive approximation capabilities of neural networks for representing these value functions in high-dimensional settings, and 
(iii) recent advances in neural network verification to ensure safety and robustness guarantees of the resulting DOA estimates.

We propose tailored value functions defined on metric spaces of compact sets that characterize the safe robust DOA while permitting tractable evaluation through reachable-set approximations. 
These value functions satisfy associated Bellman-type functional equations, whose residual errors can be practically evaluated via suitable embeddings between set-valued representations and finite-dimensional spaces, thereby enabling physics-informed learning-based computation. 
In contrast to previous approaches, our formulation relaxes several restrictive technical assumptions. 
Specifically, we require only continuity of the system’s right-hand side, openness of the safe set, and uniform $\ell_p$ stability of the RIS, which substantially broadens the scope of applicability.

Although neural network approximations provide flexible and expressive representations of the proposed value functions, they are not sufficient on their own for certifiable DOA estimation, as standard training procedures do not explicitly account for approximation errors. 
To overcome this limitation, we develop a verification framework that certifies DOA estimates obtained from neural network approximations. 
The proposed framework extends the scheme in \cite{serry2025safe} to discrete-time uncertain systems and prescribed RISs, rather than EPs. 
Moreover, it is compatible with state-of-the-art neural network verification tools, including $\alpha,\beta$-CROWN \cite{zhang2018efficient, xu2020automatic, xu2021fast, wang2021beta, zhou2024scalable, shi2024genbab} and dReal \cite{gao2013dreal}.

The organization of this paper is as follows. Section~\ref{sec:Preliminaries} introduces the necessary preliminaries and notation. The problem setup is presented in Section~\ref{sec:ProblemFormulation}. Section~\ref{sec:DOUAProperties} outlines key properties of the safe robust DOAs. The newly proposed value functions and their theoretical properties and characterizations using Bellman-type equations are presented in Section~\ref{sec:ValueFunctions}. Neural network-value function approximations and certifiable DOA estimation are detailed in Section~\ref{sec:VFApproximation}. Section~\ref{sec:NumericalExamples} illustrates the proposed method through numerical examples. Finally, concluding remarks are provided in Section~\ref{sec:Conclusion}.

\section{Notation and preliminaries}
\label{sec:Preliminaries}
Let $\mathbb{R}$, $\mathbb{R}_+$, $\mathbb{Z}$, and $\mathbb{Z}_{+}$ denote the sets of real numbers, nonnegative real numbers, integers, and nonnegative integers, respectively, and define $\mathbb{N} := \mathbb{Z}_{+} \setminus \{0\}$. 
For $a,b \in \mathbb{R}$ with $a \le b$, let $\intcc{a,b}$, $\intoo{a,b}$, $\intco{a,b}$, and $\intoc{a,b}$ denote the closed, open, and half-open intervals in $\mathbb{R}$ with endpoints $a$ and $b$. 
Their discrete counterparts are denoted by $\intcc{a;b}$, $\intoo{a;b}$, $\intco{a;b}$, and $\intoc{a;b}$, respectively, where, for example, $\intcc{a;b} \defas \intcc{a,b} \cap \mathbb{Z}$.  In $\mathbb{R}^{n}
$, the relations $<$, $\leq$, $\geq$, and
$>$ are defined component-wise, e.g., $a < b$, where $a,b\in \mathbb{R}^{n}$, iff $a_i < b_i$ for
all $i\in  \intcc{1;n}$. For $a, b \in \mathbb{R}^n$, with $a \leq b$,
the closed hyper-interval (or hyper-rectangle) $\Hintcc{a,b}$ denotes the set $\Set{x\in \mathbb{R}^{n}}{a\leq x\leq b}$. Let $\norm{\cdot}$ denote any vector norm on $\mathbb{R}^{n}$ and $\mathbb{B}_{n}$ be the $n$-dimensional closed unit ball induced by $\norm{\cdot}$. The $n$-dimensional zero vector is denoted  by $0_{n}$ and the $n$-dimensional vector of ones is denoted by $1_{n}$. Let $\id_{n}$ denote the $n\times n$ identity matrix. For $A\in \mathbb{R}^{n\times m}$, $\norm{A}$  denotes the matrix norm of $A$ induced by the vector norm $\norm{\cdot}$.

The interior and the boundary  of $X\subseteq \mathbb{R}^{n}$ are denoted by $\mathrm{int}(X)$ and $\partial X$, respectively. The Minkowski sum of $X,Y\subseteq \mathbb{R}^{n}$ is defined as
$
X+Y\defas \{x+y,x\in X,~y\in Y\}.
$
 For convenience, we will abuse the above notation for singleton sets, where, for $x\in \mathbb{R}^{n}$ and $Y\subseteq \mathbb{R}^{n}$,  we write 
 $x+Y$ to mean $\{x\}+Y$. 
The class of nonempty compact subsets of $X\subseteq \mathbb{R}^{n}$ is denoted by $\mathcal{K}(X)$.
We have the following standard result:
 \begin{lemma}[See the proof in  Appendix \ref{Proof_lem:enlargement_of_compact}]\label{lem:enlargement_of_compact}
     Let $X\subseteq \mathbb{R}^{n}$ be nonempty and  open and  $\Omega \in \mathcal{K}(X)$. Then there exists $\delta>0$ such that 
     $
\Omega+\delta  \mathbb{B}_{n}\subseteq X.
     $
 \end{lemma}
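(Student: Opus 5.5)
The plan is to reduce the statement to a positive lower bound on the distance from $\Omega$ to the complement of $X$, obtained by minimizing a continuous function over the compact set $\Omega$. If $X=\mathbb{R}^{n}$, any $\delta>0$ works, so we assume $X^{c}\defas \mathbb{R}^{n}\setminus X$ is nonempty. Define
\[
d(x)\defas \inf_{y\in X^{c}}\norm{x-y},\qquad x\in\mathbb{R}^{n}.
\]
By the triangle inequality, $\abs{d(x)-d(x')}\le \norm{x-x'}$, so $d$ is continuous (indeed $1$-Lipschitz).

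Next we check that $d$ is strictly positive on $X$. Fix $x\in X$. Since $X$ is open and all norms on $\mathbb{R}^{n}$ induce the same topology, there is $r>0$ with $x+r\mathbb{B}_{n}\subseteq X$. Then every $y\in X^{c}$ satisfies $\norm{x-y}>r$, so $d(x)\ge r>0$.

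Since $\Omega$ is nonempty and compact and $d$ is continuous, $d$ attains its minimum on $\Omega$ at some $x^{*}\in\Omega\subseteq X$. Set $\delta\defas d(x^{*})/2>0$.

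Finally, let $z\in\Omega+\delta\mathbb{B}_{n}$, so $z=x+v$ with $x\in\Omega$ and $\norm{v}\le\delta$. If $z\in X^{c}$, then $d(x)\le\norm{x-z}=\norm{v}\le\delta<d(x^{*})\le d(x)$, which is a contradiction. Hence $z\in X$, which gives $\Omega+\delta\mathbb{B}_{n}\subseteq X$.

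None of these steps is a real obstacle. The only points needing care are handling the case $X^{c}=\emptyset$ separately and using the strict inequality $\delta<d(x^{*})$, so that the closed ball $\delta\mathbb{B}_{n}$ does not reach a boundary point of $X$.
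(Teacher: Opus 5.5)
Your proof is correct, but it takes a different route from the paper's.

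The paper works directly with the covering definition of compactness:
\begin{itemize}
\item for each $x\in\Omega$ it picks $\delta_x>0$ with $x+\delta_x\mathbb{B}_n\subseteq X$;
\item it extracts a finite subcover of $\Omega$ by the half-radius balls $x_i+\tfrac12\delta_{x_i}\mathbb{B}_n$;
\item it sets $\delta=\min_i \tfrac12\delta_{x_i}$, and the triangle inequality then gives $\Omega_i+\delta\mathbb{B}_n\subseteq x_i+\delta_{x_i}\mathbb{B}_n\subseteq X$ for each piece $\Omega_i=\Omega\cap(x_i+\tfrac12\delta_{x_i}\mathbb{B}_n)$.
\end{itemize}
This is essentially a Lebesgue-number argument.

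You instead introduce $d=\dist(\cdot,\mathbb{R}^n\setminus X)$, show it is $1$-Lipschitz and strictly positive on $X$, and apply the extreme value theorem on $\Omega$.

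Your route is shorter and gives a sharper conclusion: $\dist(\Omega,\mathbb{R}^n\setminus X)>0$, and every $\delta$ strictly below this quantity works. You are right that strictness matters, because the infimum over the closed set $\mathbb{R}^n\setminus X$ is attained. The price is a separate case for $X=\mathbb{R}^n$, where $d$ would be an infimum over the empty set, plus reliance on continuity and the extreme value theorem. The paper's argument needs no case split and uses only the finite-subcover property of compact sets, but it yields a non-explicit $\delta$ that depends on the chosen subcover.
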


 Given a point $x\in \mathbb{R}^{n}$ and a set $\Omega\subseteq \mathbb{R}^{n}$, 
 $$
 \dist(x,\Omega)\defas \inf_{y\in \Omega}\norm{x-y}.
 $$
 Let $\dha$ and $\dh$ denote the asymmetric/directed and symmetric Hausdorff distances, respectively, on nonempty bounded subsets of $\mathbb{R}^{n}$, induced by the vector norm $\norm{\cdot}$; That is, 
the asymmetric and symmetric Hausdorff distances between two nonempty
bounded subsets $\Omega, \Gamma \subseteq \mathbb{R}^n$
 are defined as:
$$
\dha(\Omega,\Gamma)\defas\sup_{x\in \Omega}\dist(x,\Gamma),
$$
and
$$
\dh(\Omega,\Gamma)
\defas \max\{\dha(\Omega,\Gamma),\dha(\Gamma,\Omega)\}.
$$
Equivalently, 
$$
\dha(\Omega,\Gamma)=\inf \Set{\varepsilon>0}{\Omega \subseteq \Gamma + \varepsilon \mathbb{B}_{n}},
$$

and
$$
\dh(\Omega,\Gamma)
\defas
\inf
\Set{ \varepsilon > 0}{%
\Omega \subseteq \Gamma + \varepsilon \mathbb{B}_{n},
\Gamma \subseteq \Omega + \varepsilon \mathbb{B}_{n}
}.
$$

When symmetry is not specified, the Hausdorff distance refers to its symmetric version $\dh$. We have the following standard result:

\begin{lemma}[See the proof in  Appendix \ref{Proof_lem:CompactSetsMetric}]\label{lem:CompactSetsMetric}
Let $X\subseteq \mathbb{R}^{n}$. Then $\mathcal{K}(X)$, equipped with the Hausdorff distance  $\dh$, is a metric space. Additionally, if $X$ is open, then $\mathcal{K}(X)$ is open w.r.t. the Hausdorff distance. 
\end{lemma}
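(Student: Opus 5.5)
The plan has two parts: first check the metric axioms for $\dh$ on $\mathcal{K}(X)$, then deduce openness from Lemma~\ref{lem:enlargement_of_compact}. Since $\mathcal{K}(X)\subseteq\mathcal{K}(\mathbb{R}^{n})$, it suffices to show that $\dh$ is a metric on $\mathcal{K}(\mathbb{R}^{n})$; restricting a metric to a subset gives a metric. The openness claim is then read as follows: $\mathcal{K}(X)$ is an open subset of the metric space $(\mathcal{K}(\mathbb{R}^{n}),\dh)$.

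\textbf{Metric axioms.} For $\Omega,\Gamma\in\mathcal{K}(\mathbb{R}^{n})$ the quantity $\dh(\Omega,\Gamma)$ is finite, because both sets are bounded and so $\dist(x,\Gamma)\le \sup_{x\in\Omega,y\in\Gamma}\norm{x-y}<\infty$. It is also nonnegative, and symmetric by its definition as a maximum. If $\Omega=\Gamma$, then $\dist(x,\Gamma)=0$ for every $x\in\Omega$, so $\dh(\Omega,\Gamma)=0$.

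Conversely, suppose $\dh(\Omega,\Gamma)=0$. Then every $x\in\Omega$ has $\dist(x,\Gamma)=0$, and since $\Gamma$ is closed this forces $x\in\Gamma$. Hence $\Omega\subseteq\Gamma$, and by symmetry $\Gamma\subseteq\Omega$. This is the only place where compactness (in fact, closedness) is really needed.

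For the triangle inequality we first prove the pointwise bound $\dist(x,\Lambda)\le \norm{x-y}+\dist(y,\Lambda)$ for all $y$. Taking the infimum over $y\in\Gamma$ gives $\dist(x,\Lambda)\le \dist(x,\Gamma)+\dha(\Gamma,\Lambda)$. Taking the supremum over $x\in\Omega$ then gives
\[
\dha(\Omega,\Lambda)\le \dha(\Omega,\Gamma)+\dha(\Gamma,\Lambda).
\]
Applying this in both directions and taking the maximum yields the symmetric triangle inequality $\dh(\Omega,\Lambda)\le\dh(\Omega,\Gamma)+\dh(\Gamma,\Lambda)$.

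\textbf{Openness.} Fix $\Omega\in\mathcal{K}(X)$ with $X$ open. If $X=\emptyset$, then $\mathcal{K}(X)$ is empty and trivially open, so assume $X\neq\emptyset$. Lemma~\ref{lem:enlargement_of_compact} gives $\delta>0$ with $\Omega+\delta\mathbb{B}_{n}\subseteq X$.

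Take any $\Gamma\in\mathcal{K}(\mathbb{R}^{n})$ with $\dh(\Omega,\Gamma)<\delta$. Then $\dha(\Gamma,\Omega)<\delta$, so each $y\in\Gamma$ has $\dist(y,\Omega)<\delta$. Since $\Omega$ is compact, this infimum is attained, and there is $x\in\Omega$ with $\norm{y-x}\le\delta$. Therefore $y\in\Omega+\delta\mathbb{B}_{n}\subseteq X$. This shows $\Gamma\in\mathcal{K}(X)$, so the open $\dh$-ball of radius $\delta$ around $\Omega$ lies in $\mathcal{K}(X)$, which proves openness.

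I do not expect a real obstacle in either part. The only points needing care are attainment of the distance, used in the definiteness and openness steps, and keeping track of strict versus non-strict inequalities. Both are handled by the compactness of $\Omega$ and $\Gamma$.
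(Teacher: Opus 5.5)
Your proposal is correct and follows the paper's approach: openness comes from Lemma~\ref{lem:enlargement_of_compact}, by showing that every set in a small $\dh$-ball around $\Omega$ lies in $\Omega+\delta\mathbb{B}_{n}\subseteq X$. The differences are cosmetic. The paper cites a standard reference for the metric axioms, which you verify directly. It also uses radius $\delta/2$, so the infimum form of $\dh$ gives the containment without attainment, whereas you use radius $\delta$ together with attainment of $\dist(y,\Omega)$ on the compact set $\Omega$.
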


Given $f\colon X\rightarrow Y$ and  $P\subseteq X$, the image  of $f$ on $P$ is defined as  $f(P)\defas\Set{f(x)}{x\in P}$.  Given $f\colon X \rightarrow X$ and  $x\in X$,  $f^{0}(x)\defas x$, and for  $M\in \mathbb{N}$, we define $f^{M}(x)$ recursively as follows:   $f^{k}(x)=f(f^{k-1}(x)),~k\in \intcc{1;M}.
$
Given two sets $X$ and $Y$, ${Y}^{X}$ denotes the set of all maps $f\colon X\rightarrow Y$.

Let $f : \mathbb{R}^{n} \to \mathbb{R}^{m}$ be a feedforward neural network with $N$ hidden layers. 
The network is defined recursively as follows. Set
$
z_{0} = x \in \mathbb{R}^{n}.
$
For each hidden layer $k \in \intcc{1;N}$, define
\[
z_{k} = \sigma_k\!\left( \mathtt{W}^{(k)} z_{k-1} + \mathtt{b}^{(k)} \right),
\]
where $\mathtt{W}^{(k)} \in \mathbb{R}^{d_k \times d_{k-1}}$ is the weight matrix, 
$\mathtt{b}^{(k)} \in \mathbb{R}^{d_k}$ is the bias vector, and $d_0 = n$.  The integer $d_k$ denotes the number of neurons in the $k$th hidden layer. For each $k \in \intcc{1;N}$, the activation map
$
\sigma_k : \mathbb{R}^{d_k} \to \mathbb{R}^{d_k}
$
is induced by a scalar function 
$
\varphi_k : \mathbb{R} \to \mathbb{R},
$
and acts componentwise, that is, for every $v = (v_1,\dots,v_{d_k})^\intercal \in \mathbb{R}^{d_k}$,
$
\sigma_k(v)
=
\big(
\varphi_k(v_1), \dots, \varphi_k(v_{d_k})
\big)^\intercal.
$
The output layer is defined by
\[
z_{N+1} = \mathtt{W}^{(N+1)} z_{N} + \mathtt{b}^{(N+1)},
\]
where $\mathtt{W}^{(N+1)} \in \mathbb{R}^{m \times d_N}$ and 
$\mathtt{b}^{(N+1)} \in \mathbb{R}^{m}$. The network mapping is then given by
$
f(x) = z_{N+1}.
$
Equivalently, $f$ can be written as the finite composition of affine maps and componentwise nonlinear activation operators.

\section{Problem formulation}\label{sec:ProblemFormulation}

Consider the discrete-time  uncertain system 
\begin{equation}\label{eq:System}
x_{k+1}=f(x_{k},w_{k}),~k\in \mathbb{Z}_{+},
\end{equation}
where  $x_{k}\in \mathbb{R}^{n}$ is the system state, $w_{k}\in \mathbb{R}^{m}$ is the disturbance or uncertainty, satisfying $
    w_{k}\in W,~k\in \mathbb{Z}_{+},
    $
    where   $W\in \mathcal{K}(\mathbb{R}^{m})$ is a known disturbance set, and  $f\colon \mathbb{R}^{n}\times \mathbb{R}^{m}\rightarrow \mathbb{R}^{n}$ is a continuous function over $\mathbb{R}^{n}\times \mathbb{R}^{m}$ specifying the dynamics of the system. Using the function $f$, we define the map  $F\colon \mathcal{K}(\mathbb{R}^{n})\rightarrow \mathcal{K}(\mathbb{R}^{n})$ as follows:
$$
F(X)\defas f(X,W)=\bigcup_{(x,w)\in X\times W}f(x,w),~X\in \mathcal{K}(\mathbb{R}^{n}).
$$
The map $F$ is well-defined due to  the continuity of $f$ and non-emptiness and  compactness of $W$. In addition: 
\begin{lemma}[See the proof in  Appendix \ref{Proof_lem:FCts}]\label{lem:FCts}
    $F$ is continuous w.r.t. Hausdorff distance $\dh$.
\end{lemma}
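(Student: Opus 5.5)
To prove that $F$ is continuous with respect to $\dh$, we fix $X\in\Comp$ and $\varepsilon>0$, and look for $\delta>0$ such that $\dh(F(X),F(Y))\le\varepsilon$ whenever $Y\in\Comp$ satisfies $\dh(X,Y)<\delta$. The idea is to use uniform continuity of $f$ on a compact neighbourhood of $X\times W$. Set $K\defas (X+\mathbb{B}_{n})\times W$. This set is compact in $\mathbb{R}^{n}\times\mathbb{R}^{m}$, since $X+\mathbb{B}_{n}$ is the image of the compact set $X\times\mathbb{B}_{n}$ under addition and $W$ is compact. Because $f$ is continuous, it is uniformly continuous on $K$. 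Hence there is $\eta>0$ such that $\norm{f(x,w)-f(y,w)}\le\varepsilon$ for all $(x,w),(y,w)\in K$ with $\norm{x-y}\le\eta$. We then choose $\delta\defas\min\{\eta,1\}$. Only the first argument varies here, because the disturbance $w$ is the same on both sides.

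\emph{Step 1 (the inclusion $F(Y)\subseteq F(X)+\varepsilon\mathbb{B}_n$).} Suppose $\dh(X,Y)<\delta$. Take any $y\in Y$ and $w\in W$. Since $\dist(y,X)<\delta$ and $X$ is compact, there is $x\in X$ with $\norm{x-y}<\delta$. Then $y\in X+\mathbb{B}_{n}$, so both $(x,w)$ and $(y,w)$ lie in $K$, and $\norm{x-y}\le\eta$. Uniform continuity gives $\norm{f(y,w)-f(x,w)}\le\varepsilon$, that is, $f(y,w)\in F(X)+\varepsilon\mathbb{B}_n$.

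\emph{Step 2 (the reverse inclusion $F(X)\subseteq F(Y)+\varepsilon\mathbb{B}_n$).} This is symmetric. For $x\in X$, compactness of $Y$ and $\dist(x,Y)<\delta$ give some $y\in Y$ with $\norm{x-y}<\delta$, and the same estimate applies. The two inclusions together give $\dh(F(X),F(Y))\le\varepsilon$, which proves continuity at $X$. Since $X$ was arbitrary, $F$ is continuous on $\Comp$.

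The only delicate point is that uniform continuity must be used on a compact set containing every relevant point of $Y$, and $Y$ is not fixed. This is why we enlarge $X$ by the unit ball and cap $\delta$ at $1$: it guarantees $Y\subseteq X+\mathbb{B}_n$ for every admissible $Y$. Compactness of $X$ is essential at this step, because continuity on all of $\mathbb{R}^n$ does not give uniform continuity.
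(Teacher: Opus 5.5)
Your proposal is correct and follows essentially the same argument as the paper: uniform continuity of $f$ on the compact set $(X+r\mathbb{B}_{n})\times W$ (you take $r=1$), with $\delta$ capped by the enlargement radius so that every admissible $Y$ lies in the enlarged set, and then the two symmetric inclusions obtained by keeping the disturbance $w$ fixed. Your version is slightly more careful at the end, since it concludes $\dh(F(X),F(Y))\le\varepsilon$ where the paper writes a strict inequality.
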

A trajectory $\varphi_{x}^{\pi}\colon\mathbb{Z}_{+}\rightarrow \mathbb{R}^{n}$ of system \eqref{eq:System}, induced by an initial condition $x\in \mathbb{R}^{n}$ and a disturbance signal $\pi \colon \mathbb{Z}_{+}\rightarrow W$, satisfies
 $$
 \varphi_{x}^{\pi}(0)=x,~
    \varphi_{x}^{\pi}(k+1)=f(\varphi_{x}^{\pi}(k),\pi(k)),~k\in \mathbb{Z}_{+}.
    $$
The  reachable set of $X\subseteq \mathbb{R}^{n}$ at time $k\in \mathbb{Z}_{+}$, under the dynamics of \eqref{eq:System}, is defined as:
    $$
\mathcal{R}(X,k)\defas\{y\in \mathbb{R}^{n}| \exists (x,\pi) \in X\times W^{\mathbb{Z}_{+}} ~\text{s.t.}~ y=\varphi_{x}^{\pi}(k)\}.
    $$   
The following properties of reachable sets are essential in our analysis in this work:
\begin{lemma}[See the proof in  Appendix \ref{Proof_lem:SemiGroup}]\label{lem:SemiGroup} For all $(X,k)\in \mathcal{K}(\mathbb{R}^{n})\times\mathbb{N},$
    $$
    \mathcal{R}(X,k)=F(\mathcal{R}(X,k-1))=\mathcal{R}(F(X),k-1)=F^{k}(X).
    $$
\end{lemma}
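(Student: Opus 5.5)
We argue by induction on $k$ directly from the definition of $\mathcal{R}$, after first fixing two base facts. First, $\mathcal{R}(X,0)=X$, since $\varphi_x^\pi(0)=x$ and $W^{\mathbb{Z}_+}$ is nonempty. Second, $\mathcal{R}(X,1)=F(X)$: a point $y=\varphi_x^\pi(1)=f(x,\pi(0))$ lies in $f(X,W)$. Conversely, given $y=f(x,w)$, we take the constant signal $\pi\equiv w$, so that $y=\varphi_x^\pi(1)$. Because $F$ maps $\mathcal{K}(\mathbb{R}^n)$ into itself, induction will also show that every $\mathcal{R}(X,k)$ is nonempty and compact, so all the expressions involved are well-defined.

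\textbf{First equality.} We show $\mathcal{R}(X,k)=F(\mathcal{R}(X,k-1))$ for $k\in\mathbb{N}$ by double inclusion. For "$\subseteq$": if $y=\varphi_x^\pi(k)$, then $y=f(\varphi_x^\pi(k-1),\pi(k-1))$ with $\varphi_x^\pi(k-1)\in\mathcal{R}(X,k-1)$ and $\pi(k-1)\in W$. For "$\supseteq$": if $y=f(z,w)$ with $z=\varphi_x^\pi(k-1)$, we define the modified signal $\tilde\pi$ equal to $\pi$ on $\intcc{0;k-2}$ and to $w$ elsewhere. Trajectories up to time $k-1$ depend only on $\pi(0),\dots,\pi(k-2)$, which a trivial induction confirms, so $\varphi_x^{\tilde\pi}(k-1)=z$ and hence $\varphi_x^{\tilde\pi}(k)=y$. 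Iterating this equality together with $\mathcal{R}(X,0)=X$ gives $\mathcal{R}(X,k)=F^k(X)$ for every $k$.

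\textbf{Second equality.} Since the previous step holds for all compact $X$, we may apply it to the compact set $F(X)$ to get $\mathcal{R}(F(X),k-1)=F^{k-1}(F(X))=F^k(X)$. This identifies all four expressions.

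\textbf{Main obstacle.} The only delicate step is the concatenation of disturbance signals in "$\supseteq$": one must check that altering $\pi$ at times $\ge k-1$ leaves the trajectory unchanged up to time $k-1$. This causality property follows by a short induction on the trajectory recursion.
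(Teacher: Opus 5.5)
Your proof is correct and follows essentially the paper's route: induction from $\mathcal{R}(X,1)=F(X)$, extending the disturbance signal by one step to get $F(\mathcal{R}(X,k-1))=\mathcal{R}(X,k)$, and applying the resulting identity to the compact set $F(X)$ to get $\mathcal{R}(F(X),k-1)=F^{k-1}(F(X))=F^{k}(X)$. The paper also sketches a signal-shifting argument for that last equality, which your version shows is unnecessary, and you spell out the causality and index bookkeeping that the paper states loosely (it writes $\pi(k+1)$ where $\pi(k)$ is meant).
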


\begin{lemma}[Follows from Lemmas \ref{lem:FCts} and  \ref{lem:SemiGroup}]\label{lem:ReachableSetMap}
    If $X\in \Comp$, then $\mathcal{R}(X,k)\in \Comp$  for all $k\in \mathbb{Z}_{+}$. Also, for fixed $k\in \mathbb{Z}_{+}$, $\mathcal{R}(\cdot,k)\colon \Comp\rightarrow \Comp$ is continuous w.r.t. the Hausdorff distance.
\end{lemma}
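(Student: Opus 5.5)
The plan is to reduce both claims to Lemma \ref{lem:SemiGroup}, which identifies $\mathcal{R}(X,k)$ with the iterate $F^{k}(X)$ for every $X\in\Comp$ and $k\in\mathbb{N}$, and then to use the facts that $F$ maps $\Comp$ into itself and is continuous (Lemma \ref{lem:FCts}).

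First we handle $k=0$ separately, since Lemma \ref{lem:SemiGroup} is stated only for $k\in\mathbb{N}$. By definition, $\varphi_{x}^{\pi}(0)=x$ for every $x$ and every disturbance signal $\pi$. Because $W$ is nonempty, $W^{\mathbb{Z}_{+}}$ is nonempty, so $\mathcal{R}(X,0)=X$. Hence $\mathcal{R}(X,0)\in\Comp$ whenever $X\in\Comp$. Moreover, $\mathcal{R}(\cdot,0)$ is the identity map on $\Comp$, which is trivially continuous with respect to $\dh$. This is consistent with the convention $F^{0}=\id$.

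For $k\in\mathbb{N}$ we argue by induction on $k$ that $F^{k}(X)\in\Comp$ for all $X\in\Comp$. The base case $k=0$ is the remark above. For the inductive step, $F$ is well defined as a map $\Comp\to\Comp$: the set $X\times W$ is compact and $f$ is continuous, so $f(X\times W)$ is nonempty and compact. Applying $F$ to $F^{k-1}(X)\in\Comp$ therefore gives $F^{k}(X)\in\Comp$. Lemma \ref{lem:SemiGroup} then yields $\mathcal{R}(X,k)=F^{k}(X)\in\Comp$.

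For continuity, fix $k\in\mathbb{N}$. By Lemma \ref{lem:SemiGroup}, the map $\mathcal{R}(\cdot,k)$ coincides on $\Comp$ with the $k$-fold composition $F\circ\cdots\circ F$. A finite composition of continuous self-maps of the metric space $(\Comp,\dh)$ is continuous, and the metric-space structure is supplied by Lemma \ref{lem:CompactSetsMetric}. Each factor $F$ is continuous by Lemma \ref{lem:FCts}, so $\mathcal{R}(\cdot,k)$ is continuous.

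There is no real obstacle here. The only points that need care are the $k=0$ case, which lies outside the range of Lemma \ref{lem:SemiGroup}, and making explicit that $F$ maps $\Comp$ into itself, so that the composition $F^{k}$ is well defined on $\Comp$.
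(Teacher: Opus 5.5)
Your proof is correct and follows the paper's intended argument. You identify $\mathcal{R}(\cdot,k)$ with $F^{k}$ via Lemma \ref{lem:SemiGroup}, note that $F$ is a continuous self-map of $\Comp$ under the Hausdorff distance (Lemma \ref{lem:FCts}), and conclude by composition; your explicit treatment of $k=0$ (where $\mathcal{R}(X,0)=X$ because $W^{\mathbb{Z}_{+}}\neq\emptyset$) fills in a case the paper leaves implicit.
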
 
Let $\mathcal{X} \subseteq \mathbb{R}^{n}$ be an open safe set, and let 
$\mathcal{A} \in \mathcal{K}(\mathcal{X})$ be a robust invariant set (RIS), that is,
$$
x\in \mathcal{A}\Rightarrow f(x,W)\subseteq \mathcal{A}.
$$

We impose the following assumption.

\begin{assumption}[Uniform local $\ell_{p}$ stability]\label{Assumptions}
The RIS $\mathcal{A}$ is uniformly locally $\ell_{p}$-stable.
That is, there exist constants
\[
(r,M)\in \intoo{0,\infty}\times \intco{1,\infty},
\]
a number $p\in\intoo{0,\infty}$,
and a function
\[
\lambda:\intcc{0,r}\times\mathbb{Z}_{+}\to \mathbb{R}_{+},
\]
such that:
\begin{enumerate}
\item For each $k\in\mathbb{Z}_{+}$, the mapping 
$s\mapsto \lambda(s,k)$ is continuous and nondecreasing on $\intcc{0,r}$, 
with $\lambda(0,k)=0$. Moreover, for each fixed $s\in\intcc{0,r}$, 
the sequence $k\mapsto \lambda(s,k)$ is nonincreasing on $\mathbb{Z}_{+}$ 
and satisfies $\lambda(s,0)\le s$;
\item The numerical series
\[
\sum_{k=0}^{\infty} \lambda(r,k)^{p}
\]
converges;
\item For all $x\in\mathcal{X}$ with $\dist(x,\mathcal{A})\le r$
and all disturbance sequences $\pi\in W^{\mathbb{Z}_{+}}$,
\[
\dist\!\bigl(\varphi_{x}^{\pi}(k),\mathcal{A}\bigr)
\le
M\,\lambda\!\bigl(\dist(x,\mathcal{A}),k\bigr),
\qquad \forall k\in\mathbb{Z}_{+}.
\]
\end{enumerate}
\end{assumption}

\begin{remark}\label{rem:lp_properties}
Under Assumption~\ref{Assumptions}, the following hold:

\begin{enumerate}
\item For every $s\in\intcc{0,r}$,
$\sum_{k=0}^{\infty} \lambda(s,k)^{p} < \infty$. Indeed, since $\lambda(\cdot,k)$ is nondecreasing,
\(
0\le \lambda(s,k)\le \lambda(r,k)
\)
for all $s\in\intcc{0,r}$, and the comparison test applies.

\item The series
$
\sum_{k=0}^{\infty} \lambda(s,k)^{p}
$
converges uniformly on $\intcc{0,r}$.
This follows from the Weierstrass M-test with
$M_k:=\lambda(r,k)^{p}$.

\item Consequently, the function
$
s\longmapsto \sum_{k=0}^{\infty} \lambda(s,k)^{p}
$
is continuous on $\intcc{0,r}$.

\item For every $s\in\intcc{0,r}$,
$
\lambda(s,k)\to 0$
as $k\to\infty$.

\item Convergence to $0$ and non-negativity imply $\lambda(s,k)\in[0,1]$ for sufficiently large $k$. Hence, if $\bar p\ge p$ then
$
\lambda(s,k)^{\bar p}\le \lambda(s,k)^{p}$,
for sufficiently large $k$, and therefore
$\sum_{k=0}^{\infty} \lambda(s,k)^{\bar p} < \infty
\quad \forall s\in\intcc{0,r}.$

\end{enumerate}
\end{remark}
\begin{remark}[Generality of $\ell_p$ stability]\label{rem:lp_generality}
Assumption~\ref{Assumptions} resembles standard $\mathcal{KL}$-stability assumptions (see, e.g., \cite{kellett2014compendium}), augmented with additional summability requirements. Despite these stronger conditions, $\ell_{p}$-stability still encompasses several common decay regimes, including 
polynomial decay with
\[
\lambda(s,k)
=
\frac{s}{(1+bsk)^{q}},
\]
where $b,q\in\intoo{0,\infty}$,
and exponential decay with
\[
\lambda(s,k)= s\,b^{k},
\]
for some $b\in\intoo{0,1}$.

\end{remark}

The safe robust DOA to $\mathcal{A}$ within $\mathcal{X}$,
$\mathcal{D}_{\mathcal{A}}^{\mathcal{X}}\subseteq \mathbb{R}^{n}$, is defined as
$$
\mathcal{D}_{\mathcal{A}}^{\mathcal{X}}\defas \left\{x\in \mathbb{R}^{n} \middle\vert \begin{array}{c}\forall \pi\in W^{\mathbb{Z}_{+}},~ 
     \varphi_{x}^{\pi}(k)\in \mathcal{X}~ \forall k\in \mathbb{Z}_{+}, \\
      \lim_{k\rightarrow \infty} \dist( \varphi_{x}^{\pi}(k),\mathcal{A})=0\end{array}
\right\}.
$$
Any subset of $\mathcal{D}_{\mathcal{A}}^{\mathcal{X}}$ containing $\mathcal{A}$  and being invariant under the dynamics of \eqref{eq:System}  is called a safe robust region of attraction (ROA) to $\mathcal{A}$ within $\mathcal{X}$. Our goal is to compute a large safe robust ROA that  closely approximates  $\mathcal{D}_{\mathcal{A}}^{\mathcal{X}}$.

\section{Properties of DOA}
\label{sec:DOUAProperties}
In this section, we study the properties of the  safe robust DOA. To do so, we will first study the properties of the safe robust domain of uniform attraction (DOUA):
$$
\tilde{\mathcal{D}}_{\mathcal{A}}^{\mathcal{X}}\defas \left\{x\in \mathbb{R}^{n} \middle\vert \begin{array}{c} \mathcal{R}(\{x\},k)\subseteq  \mathcal{X}~\forall k\in \mathbb{Z}_{+}, \\
       \lim_{k\rightarrow \infty} \dha( \mathcal{R}(\{x\},k),\mathcal{A})=0\end{array}
\right\},
$$
and then we will show at the end of this section that this set is  equal to the safe robust DOA. This equivalence is particularly noteworthy. In continuous-time settings, the DOA is typically shown to coincide only with the interior of the DOUA (see, e.g., \cite{grune2015zubov}), or equality is established under additional structural assumptions, such as convexity of the images of the right-hand side \cite{camilli2001generalization}. In contrast, we establish exact equality in the present discrete-time setting without imposing such restrictive conditions.
\begin{lemma}\label{lem:DOAOpen}$\tilde{\mathcal{D}}_{\mathcal{A}}^{\mathcal{X}}$ is open.
\end{lemma}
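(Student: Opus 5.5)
Fix $x_0\in\tilde{\mathcal{D}}_{\mathcal{A}}^{\mathcal{X}}$; the plan is to produce $\eta>0$ with $x_0+\eta\mathbb{B}_n\subseteq\tilde{\mathcal{D}}_{\mathcal{A}}^{\mathcal{X}}$. The idea is to use uniform attraction to bring the whole reachable set of $\{x_0\}$ into a region where Assumption~\ref{Assumptions} holds. Finitely many steps before that are handled by continuity of reachable sets (Lemma~\ref{lem:ReachableSetMap}) together with openness of $\mathcal{X}$ (Lemma~\ref{lem:enlargement_of_compact}).

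\textbf{Step 1 (a small neighbourhood of $\mathcal{A}$ inside the DOUA).} Since $\mathcal{A}\in\mathcal{K}(\mathcal{X})$ and $\mathcal{X}$ is open, Lemma~\ref{lem:enlargement_of_compact} gives $\delta_0>0$ with $\mathcal{A}+\delta_0\mathbb{B}_n\subseteq\mathcal{X}$. I would choose $\rho\in\intoo{0,\min\{r,\delta_0\}}$ so small that $M\rho<\delta_0$. Take $y$ with $\dist(y,\mathcal{A})\le\rho$ and any $\pi\in W^{\mathbb{Z}_+}$. If the trajectory stays in $\mathcal{X}$, then item~3 of Assumption~\ref{Assumptions} gives $\dist(\varphi_y^\pi(k),\mathcal{A})\le M\lambda(\rho,k)\le M\lambda(\rho,0)\le M\rho<\delta_0$.

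The obstacle is that item~3 is stated only for initial states in $\mathcal{X}$ and does not itself require the trajectory to remain in $\mathcal{X}$. I read it as giving the bound for all $k$ whenever $y\in\mathcal{X}$. Then the trajectory stays in $\mathcal{A}+\delta_0\mathbb{B}_n\subseteq\mathcal{X}$, and $y\in\mathcal{X}$ holds because $\rho<\delta_0$. If a more cautious reading is needed, I would argue by induction on $k$ instead: each iterate lies in $\mathcal{X}$ as long as the bound holds up to that time.

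The bound is uniform in $y$ and $\pi$ and decays to $0$ by item~4 of Remark~\ref{rem:lp_properties}. Hence $\dha(\mathcal{R}(\{y\},k),\mathcal{A})\le M\lambda(\rho,k)\to0$ and $\mathcal{R}(\{y\},k)\subseteq\mathcal{X}$. So $\mathcal{A}+\rho\mathbb{B}_n\subseteq\tilde{\mathcal{D}}_{\mathcal{A}}^{\mathcal{X}}$, and more precisely every $y$ with $\dist(y,\mathcal{A})\le\rho$ lies in it.

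\textbf{Step 2 (reach the neighbourhood in finite time).} Since $\dha(\mathcal{R}(\{x_0\},k),\mathcal{A})\to0$, I pick $K$ with $\dha(\mathcal{R}(\{x_0\},K),\mathcal{A})<\rho/2$. Each set $\mathcal{R}(\{x_0\},k)$ for $k\in\intcc{0;K}$ is compact (Lemma~\ref{lem:ReachableSetMap}) and contained in $\mathcal{X}$. Lemma~\ref{lem:enlargement_of_compact} then gives $\delta_k>0$ with $\mathcal{R}(\{x_0\},k)+\delta_k\mathbb{B}_n\subseteq\mathcal{X}$. Set $\varepsilon=\min\{\delta_0',\dots,\delta_K',\rho/2\}$, where $\delta_k'\defas\delta_k$.

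By continuity of $\mathcal{R}(\cdot,k)$ at $\{x_0\}$ for each of the finitely many $k\le K$, there is $\eta>0$ such that $\norm{x-x_0}\le\eta$, which means $\dh(\{x\},\{x_0\})\le\eta$, implies $\dh(\mathcal{R}(\{x\},k),\mathcal{R}(\{x_0\},k))<\varepsilon$ for all $k\in\intcc{0;K}$. Consequently $\mathcal{R}(\{x\},k)\subseteq\mathcal{X}$ for $k\le K$. Also $\dha(\mathcal{R}(\{x\},K),\mathcal{A})<\rho$, by the triangle inequality for $\dha$.

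\textbf{Step 3 (concatenate).} Every point $y\in\mathcal{R}(\{x\},K)$ satisfies $\dist(y,\mathcal{A})\le\rho$, so Step~1 applies to it. By Lemma~\ref{lem:SemiGroup}, $\mathcal{R}(\{x\},K+j)=\mathcal{R}(\mathcal{R}(\{x\},K),j)=\bigcup_{y}\mathcal{R}(\{y\},j)$. The uniform bound $M\lambda(\rho,j)$ from Step~1 then gives $\mathcal{R}(\{x\},K+j)\subseteq\mathcal{X}$ and $\dha(\mathcal{R}(\{x\},K+j),\mathcal{A})\le M\lambda(\rho,j)\to0$. Hence $x\in\tilde{\mathcal{D}}_{\mathcal{A}}^{\mathcal{X}}$ for every $x$ with $\norm{x-x_0}\le\eta$, which proves openness.

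The main care point is Step~1: the bound from item~3 of Assumption~\ref{Assumptions} must be uniform over all starting points in the $\rho$-neighbourhood and must keep trajectories inside $\mathcal{X}$. Uniformity is what allows the union over $y$ in Step~3.
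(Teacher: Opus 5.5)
Your proof is correct and follows essentially the same route as the paper's: take a margin $\mathcal{A}+\delta_0\mathbb{B}_n\subseteq\mathcal{X}$ and a radius $\rho\le r$ with $M\rho$ below that margin, wait finitely many steps until $\mathcal{R}(\{x_0\},K)$ is within $\rho/2$ of $\mathcal{A}$, use continuity of $\mathcal{R}(\cdot,k)$ for $k\le K$ together with the enlargement lemma, and then apply the uniform bound $M\lambda(\rho,j)$ of item~3 of Assumption~\ref{Assumptions}. Your explicit check that points of $\mathcal{R}(\{x\},K)$ lie in $\mathcal{X}$, which item~3 requires, is a detail the paper leaves implicit; it follows there from $\tilde{r}\le\theta/M\le\theta$.
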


\begin{proof}
  Recall the definitions of $M,~r,~\lambda$ in Assumption \ref{Assumptions}. Let $\theta\in \intoo{0,\infty}$ be such that 
    $
\mathcal{A}+\theta \mathbb{B}_{n}\subseteq \mathcal{X},
    $
 due to Lemma \ref{lem:enlargement_of_compact}. Fix $x_{0}\in \tilde{\mathcal{D}}_{\mathcal{A}}^{\mathcal{X}}$. Due to the convergence of $\mathcal{R}(\{x_{0}\},\cdot)$ to $\mathcal{A}$ (w.r.t. $\dha$) within $\mathcal{X}$, there exists $N\in \mathbb{Z}_{+}$ such that  
    $
\mathcal{R}(\{x_{0}\},j)\in \mathcal{X}~\forall j\in \intcc{0;N-1}
    $
and 
    $
\mathcal{R}(\{x_{0}\},N)\subseteq\mathcal{A}+({\tilde{r}}/{2})\mathbb{B},
    $
where  $\tilde{r}\in \mathbb{R}_{+}$ satisfies
$$
0<\tilde{r}\leq \min \left \{\frac{\theta}{M}, r\right \}.
$$
Let $\rho_{j}\in \mathbb{R}_{+},~j\in \intcc{0;N-1}$, be positive numbers satisfying
$
\mathcal{R}(\{x_{0}\},j)+\rho_{j}\mathbb{B}_{n}\subseteq \mathcal{X},~j\in \intcc{0;N-1}.
$
Such numbers exist due to the nonemptiness and  openness of $\mathcal{X}$, the  compactness of the values of $\mathcal{R}(\{x_{0}\},\cdot)$, and Lemma \ref{lem:enlargement_of_compact}.
Recall from Lemma \ref{lem:SemiGroup} that 
$
\mathcal{R}(\{y\},j)=F^{j}(\{y\})
$
for all $y\in \mathbb{R}^{n}$ and $j\in \mathbb{Z}_{+}$. Using  the continuity result from Lemma \ref{lem:ReachableSetMap}, there exists $\delta\in \intoo{0,\infty}$ such that, for all $x\in x_{0}+\delta \mathbb{B}_{n}$, 
$$
\dh(F^{j}(\{x\}),F^{j}(\{x_{0}\})) \leq  \rho_{j},~j\in \intcc{0;N-1},
$$
and 
$$
\dh(F^{N}(\{x\}),F^{N}(\{x_{0}\}))\leq \frac{\tilde{r}}{2}.
 $$
Consequently, we have, for all $x\in x_{0}+\delta \mathbb{B}_{n}$,
\begin{align*}
\mathcal{R}(\{x\},j)=F^{j}(\{x\})\subseteq \mathcal{R}(\{x_{0}\},j)+\rho_{j}\mathbb{B}_{n}\subseteq \mathcal{X},
\end{align*}
$j\in \intcc{0;N-1}$, and
\begin{align*}
\mathcal{R}(\{x\},N)=&F^{N}(\{x\})
\subseteq F^{N}(\{x_{0}\})+({\tilde{r}}/{2})\mathbb{B}_{n}\\
& \subseteq\mathcal{A}+\frac{\tilde{r}}{2}\mathbb{B}_{n}+\frac{\tilde{r}}{2}\mathbb{B}_{n}=\mathcal{A}+\tilde{r}\mathbb{B}_{n}.
\end{align*}
The local $\ell_{p}$ stability then indicates that, for all $x\in x_{0}+\delta \mathbb{B}_{n}$, 
    \begin{align*}
\dha(F^{N+k}(\{x\}),\mathcal{A})&\leq M \lambda( \dha(F^{N}(\{x\}),\mathcal{A}),k) \leq M\lambda(\tilde{r},k),
    \end{align*}
$k\in \mathbb{Z}_{+}$. Hence, for all $x\in x_{0}+\delta \mathbb{B}_{n}$,
$\lim_{j\rightarrow \infty }\dha(\mathcal{R}(\{x\},j),\mathcal{A})=0.
$
The above inequality and    the definition of $\tilde{r}$ imply that, for all $x\in x_{0}+\delta \mathbb{B}_{n}$,
$$
F^{N+k}(\{x\})\subseteq  \mathcal{A}+ M \tilde{r} \mathbb{B}_{n}\subseteq   \mathcal{A}+\theta \mathbb{B}_{n}\subseteq \mathcal{X},~k\in \mathbb{Z}_{+}.
$$
    Therefore, $x\in\tilde{\mathcal{D}}_{\mathcal{A}}^{\mathcal{X}}$ for all $x\in x_{0}+\delta \mathbb{B}_{n}$. As $x_{0}\in \tilde{\mathcal{D}}_{\mathcal{A}}^{\mathcal{X}}$ is arbitrary, the proof is complete.
\end{proof}
The proof of the lemma above can be easily adapted to show the following:
\begin{lemma} \label{lem:delta-margin}
    For each $x\in \tilde{\mathcal{D}}_{\mathcal{A}}^{\mathcal{X}}$, there exists $\delta_{x}>0$ (that depends on $x$) such that 
    $
\mathcal{R}(x+\delta_{x} \mathbb{B}_{n},k)\subseteq \mathcal{X}~\forall k\in \mathbb{Z}_{+}
    $
    and 
    $
\lim_{k\rightarrow \infty}\dha(\mathcal{R}(x+\delta_{x} \mathbb{B}_{n},k),\mathcal{A})=0.
    $
\end{lemma}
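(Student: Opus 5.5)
We adapt the proof of Lemma~\ref{lem:DOAOpen}, replacing the singleton $\{x\}$ by the compact ball $x_{0}+\delta\mathbb{B}_{n}$. The key observation is that $x_{0}+\delta\mathbb{B}_{n}\in\Comp$ and $\dh(x_{0}+\delta\mathbb{B}_{n},\{x_{0}\})=\delta$. Hence the continuity of $\mathcal{R}(\cdot,j)=F^{j}(\cdot)$ on $\Comp$ from Lemma~\ref{lem:ReachableSetMap} applies directly to balls, not only to singletons.

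Fix $x_{0}\in\tilde{\mathcal{D}}_{\mathcal{A}}^{\mathcal{X}}$ and choose $\theta$, $\tilde r\le\min\{\theta/M,r\}$, $N$, and $\rho_{j}$ for $j\in\intcc{0;N-1}$ exactly as in the proof of Lemma~\ref{lem:DOAOpen}. By continuity of $F^{j}$ at $\{x_{0}\}$ for the finitely many indices $j\in\intcc{0;N}$, pick $\delta_{x_{0}}>0$ such that every $K\in\Comp$ with $\dh(K,\{x_{0}\})\le\delta_{x_{0}}$ satisfies
\[
\dh(F^{j}(K),F^{j}(\{x_{0}\}))\le\rho_{j},\quad j\in\intcc{0;N-1},
\qquad
\dh(F^{N}(K),F^{N}(\{x_{0}\}))\le\tilde r/2 .
\]
Apply this with $K=x_{0}+\delta_{x_{0}}\mathbb{B}_{n}$. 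For $j<N$ this gives $\mathcal{R}(K,j)\subseteq\mathcal{R}(\{x_{0}\},j)+\rho_{j}\mathbb{B}_{n}\subseteq\mathcal{X}$. For $j=N$ it gives $\mathcal{R}(K,N)\subseteq\mathcal{A}+\tilde r\mathbb{B}_{n}$.

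Next, let $y\in\mathcal{R}(K,N)$. Then $\dist(y,\mathcal{A})\le\tilde r\le r$. Also $y\in\mathcal{X}$, because $\mathcal{A}+\tilde r\mathbb{B}_{n}\subseteq\mathcal{A}+\theta\mathbb{B}_{n}\subseteq\mathcal{X}$ since $M\ge1$. Item 3 of Assumption~\ref{Assumptions} together with monotonicity of $\lambda(\cdot,k)$ then gives $\dist(\varphi_{y}^{\pi}(k),\mathcal{A})\le M\lambda(\tilde r,k)$ for every $\pi$. By Lemma~\ref{lem:SemiGroup}, $\mathcal{R}(K,N+k)=\mathcal{R}(\mathcal{R}(K,N),k)$, since every trajectory from $K$ passes through $\mathcal{R}(K,N)$ at time $N$. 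Taking the supremum over $y$ yields
\[
\dha(\mathcal{R}(K,N+k),\mathcal{A})\le M\lambda(\tilde r,k)\to0
\]
by Remark~\ref{rem:lp_properties}. Moreover, $\mathcal{R}(K,N+k)\subseteq\mathcal{A}+M\tilde r\mathbb{B}_{n}\subseteq\mathcal{A}+\theta\mathbb{B}_{n}\subseteq\mathcal{X}$. Combining the cases $k<N$ and $k\ge N$ gives both claims of the lemma.

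The only step needing care is the uniformity over the whole ball. It is handled by working with the set-valued map $F$ on $\Comp$ rather than pointwise, together with the uniform bound $M\lambda(\tilde r,k)$, which does not depend on the individual point $y$. We expect no genuine obstacle here.
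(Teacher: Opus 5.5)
Your proof is correct and takes essentially the route the paper intends: you adapt the proof of Lemma~\ref{lem:DOAOpen} by applying the Hausdorff continuity of $F^{j}$ at $\{x_{0}\}$ directly to the compact ball $x_{0}+\delta_{x_{0}}\mathbb{B}_{n}$, which lies at distance $\delta_{x_{0}}$ from $\{x_{0}\}$, and for $k\ge N$ you use the bound $M\lambda(\tilde r,k)$, which does not depend on the point. You also check explicitly that points of $\mathcal{R}(K,N)$ lie in $\mathcal{X}$, which item~3 of Assumption~\ref{Assumptions} requires and which the paper's argument leaves implicit.
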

Building upon the above result, we have the following important  property:
\begin{lemma}\label{lem:CompactSubsetsOfDOUA}
    For any $X\in \CompDOA$, we have  
        $
\mathcal{R}(X,k)\subseteq \mathcal{X}~\forall k\in \mathbb{Z}_{+},
    $
    and 
    $
\lim_{k\rightarrow \infty}\dha(\mathcal{R}(X,k),\mathcal{A})=0.
    $
\end{lemma}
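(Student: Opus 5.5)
The plan is a compactness argument built on Lemma~\ref{lem:delta-margin}. Fix $X\in\CompDOA$. For every $x\in X$, Lemma~\ref{lem:delta-margin} gives $\delta_x>0$ such that $\mathcal{R}(x+\delta_x\mathbb{B}_n,k)\subseteq\mathcal{X}$ for all $k\in\mathbb{Z}_+$ and $\dha(\mathcal{R}(x+\delta_x\mathbb{B}_n,k),\mathcal{A})\to0$. The sets $\mathrm{int}(x+\delta_x\mathbb{B}_n)$, $x\in X$, form an open cover of $X$. By compactness we extract finitely many points $x_1,\dots,x_L\in X$ with
\[
X\subseteq \bigcup_{i=1}^{L}\bigl(x_i+\delta_{x_i}\mathbb{B}_n\bigr).
\]

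Next we use that reachable sets respect unions and inclusions. Directly from the definition of $\mathcal{R}$, for any $Y,Z\subseteq\mathbb{R}^n$ we have $Y\subseteq Z\Rightarrow\mathcal{R}(Y,k)\subseteq\mathcal{R}(Z,k)$ and $\mathcal{R}(\bigcup_i Y_i,k)=\bigcup_i\mathcal{R}(Y_i,k)$. This holds because a point lies in $\mathcal{R}(\cdot,k)$ exactly when it equals $\varphi_x^\pi(k)$ for some initial state $x$ in the set, so it needs no compactness. Hence for every $k$,
\[
\mathcal{R}(X,k)\subseteq\bigcup_{i=1}^{L}\mathcal{R}(x_i+\delta_{x_i}\mathbb{B}_n,k)\subseteq\mathcal{X},
\]
which gives the first claim.

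For the limit, we use that the directed Hausdorff distance of a finite union to $\mathcal{A}$ is the maximum of the individual distances, together with monotonicity of $\dha(\cdot,\mathcal{A})$ under inclusion. Both follow from $\dha(\Omega,\mathcal{A})=\sup_{y\in\Omega}\dist(y,\mathcal{A})$. These give
\[
\dha(\mathcal{R}(X,k),\mathcal{A})\le\max_{i\in\intcc{1;L}}\dha\bigl(\mathcal{R}(x_i+\delta_{x_i}\mathbb{B}_n,k),\mathcal{A}\bigr).
\]
The right-hand side tends to $0$, since it is a maximum of finitely many sequences that each tend to $0$.

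The main substantive obstacle sits in Lemma~\ref{lem:delta-margin}, which is taken as given. To justify it, we would adapt the proof of Lemma~\ref{lem:DOAOpen} by replacing the singleton $\{x\}$ with the compact ball $x_0+\delta\mathbb{B}_n$. The continuity of $F^j$ in $\dh$ (Lemma~\ref{lem:ReachableSetMap}) controls $F^j(x_0+\delta\mathbb{B}_n)$ for $j\le N$, because $\dh(x_0+\delta\mathbb{B}_n,\{x_0\})=\delta$. The $\ell_p$ bound, applied pointwise to every initial state in $F^N(x_0+\delta\mathbb{B}_n)\subseteq\mathcal{A}+\tilde r\mathbb{B}_n$, then yields the uniform decay.

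Given that lemma, the only remaining care is to use finiteness of the subcover. An infinite family would not give a uniform rate of convergence.
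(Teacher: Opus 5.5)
Your proposal is correct and follows the paper's own proof. Both take a finite subcover of $X$ by the neighborhoods from Lemma~\ref{lem:delta-margin}, use that reachable sets commute with unions, and bound the directed Hausdorff distance by a maximum over finitely many terms. You are slightly more careful than the paper: you take open interiors to get a genuine open cover, and you state explicitly the monotonicity step $\mathcal{R}(X,k)\subseteq\bigcup_i\mathcal{R}(x_i+\delta_{x_i}\mathbb{B}_n,k)$.
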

\begin{proof}
For each $x\in X$, let $\delta_{x}>0$ be defined as in Lemma \ref{lem:delta-margin}. We have 
$
X \subseteq \bigcup_{x\in X}x+\delta_{x}\mathbb{B}_{n}. 
$
 The compactness of $X$ implies the existence of finite $N\in \mathbb{N}$ and a finite set $\{x_{i}\}_{i=1}^{N}\subseteq X$ such that 
    $
X\subseteq \bigcup_{i\in \intcc{1;N}}x_{i}+\delta_{x_{i}}\mathbb{B}_{n}.
    $
    We have, using Lemma \ref{lem:delta-margin}, 
    $$
\mathcal{R}(\bigcup_{i\in \intcc{1;N}}x_{i}+\delta_{x_{i}}\mathbb{B}_{n},k)=\bigcup_{i\in \intcc{1;N}}\mathcal{R}(x_{i}+\delta_{x_{i}}\mathbb{B}_{n},k)\subseteq \mathcal{X},
    $$
    $ k\in \mathbb{Z}_{+}$, and 
    \begin{align*}
\lim_{k\rightarrow \infty}\dha(\mathcal{R}(\bigcup_{i\in \intcc{1;N}}x_{i}+\delta_{x_{i}}\mathbb{B}_{n},k),\mathcal{A})\leq\\ \lim_{k\rightarrow \infty} \sup_{i\in \intcc{1;n}}\dha(x_{i}+\delta_{x_{i}}\mathbb{B}_{n},\mathcal{A})=0.
    \end{align*}
\end{proof}
Next, we show the invariance of the safe robust DOUA under the dynamics of \eqref{eq:System}:
\begin{lemma}\label{lem:compDOA_inv}
    For all $X \in \CompDOA$, $F(X)\in \CompDOA$  (invariance of $\CompDOA$ under $F$). 
\end{lemma}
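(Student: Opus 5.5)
We need to show that $F(X)$ is a nonempty compact subset of $\tilde{\mathcal{D}}_{\mathcal{A}}^{\mathcal{X}}$ whenever $X\in\CompDOA$.

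Compactness and nonemptiness are immediate. Since $X\in\Comp$, $F(X)\in\Comp$ by the well-definedness of $F$, which rests on the continuity of $f$ and the compactness and nonemptiness of $W$. It therefore remains to show the inclusion $F(X)\subseteq\tilde{\mathcal{D}}_{\mathcal{A}}^{\mathcal{X}}$.

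Fix $y\in F(X)$, so that $y=f(x,w)$ for some $x\in X$ and $w\in W$. We verify the two defining conditions of the DOUA for $y$.

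\emph{Containment in $\mathcal{X}$.} Since $\{y\}\subseteq F(\{x\})$ and $F$ is monotone with respect to inclusion, Lemma \ref{lem:SemiGroup} gives
$$
\mathcal{R}(\{y\},k)=F^{k}(\{y\})\subseteq F^{k+1}(\{x\})=\mathcal{R}(\{x\},k+1)
$$
for every $k\in\mathbb{Z}_{+}$. Because $x\in\tilde{\mathcal{D}}_{\mathcal{A}}^{\mathcal{X}}$, the right-hand side lies in $\mathcal{X}$ for all $k$, and hence so does $\mathcal{R}(\{y\},k)$.

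\emph{Attraction to $\mathcal{A}$.} The directed Hausdorff distance is monotone in its first argument: $\Omega\subseteq\Omega'$ implies $\dha(\Omega,\mathcal{A})\le\dha(\Omega',\mathcal{A})$. Applying this to the inclusion above,
$$
\dha(\mathcal{R}(\{y\},k),\mathcal{A})\le\dha(\mathcal{R}(\{x\},k+1),\mathcal{A}),
$$
and the right-hand side tends to $0$ as $k\to\infty$ by the definition of the DOUA for $x$. Therefore $y\in\tilde{\mathcal{D}}_{\mathcal{A}}^{\mathcal{X}}$.

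Alternatively, one could apply Lemma \ref{lem:CompactSubsetsOfDOUA} to $X$ and use $\mathcal{R}(F(X),k)=\mathcal{R}(X,k+1)$ from Lemma \ref{lem:SemiGroup}. This controls the whole set $F(X)$ at once, but membership in the DOUA is a pointwise condition, so the pointwise argument above is all that is needed.

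No step is a real obstacle. The only points requiring care are the inclusion monotonicity of $F^{k}$, which follows directly from the union definition of $F$, and the monotonicity of $\dha$ in its first argument.
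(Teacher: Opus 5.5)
Your proof is correct, but it takes a different route from the paper's. The paper argues at the level of sets. It applies Lemma~\ref{lem:CompactSubsetsOfDOUA} to $X$ to get $\mathcal{R}(X,k)\subseteq\mathcal{X}$ for all $k$ and $\dha(\mathcal{R}(X,k),\mathcal{A})\to 0$. It then shifts time using $\mathcal{R}(F(X),k)=\mathcal{R}(X,k+1)$ from Lemma~\ref{lem:SemiGroup}, which is exactly the alternative you mention at the end. You instead work pointwise. Each $y\in F(X)$ lies in $F(\{x\})$ for a single $x\in X$, and the inclusion $\mathcal{R}(\{y\},k)\subseteq\mathcal{R}(\{x\},k+1)$, together with monotonicity of $\dha$ in its first argument, transfers the defining properties of $x$ to $y$.

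Your route buys economy. It uses only the definition of $\tilde{\mathcal{D}}_{\mathcal{A}}^{\mathcal{X}}$, so it bypasses Lemma~\ref{lem:CompactSubsetsOfDOUA} and, behind it, Lemma~\ref{lem:delta-margin}, the openness of $\mathcal{X}$, and Assumption~\ref{Assumptions}. In effect you prove forward invariance of $\tilde{\mathcal{D}}_{\mathcal{A}}^{\mathcal{X}}$ itself ($f(x,W)\subseteq\tilde{\mathcal{D}}_{\mathcal{A}}^{\mathcal{X}}$ for every $x\in\tilde{\mathcal{D}}_{\mathcal{A}}^{\mathcal{X}}$), which is equivalent to the lemma. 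It also makes explicit the step from set-level statements to pointwise membership. The paper leaves this step tacit when it draws its conclusion about $F(X)$ from properties of $\mathcal{R}(F(X),\cdot)$; the step needs exactly the monotonicity $\mathcal{R}(\{y\},k)\subseteq\mathcal{R}(F(X),k)$ for $y\in F(X)$ that your argument uses.

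The paper's version, in turn, stays within the set-valued viewpoint of Section~\ref{sec:DOUAProperties}. It also records that $\dha(\mathcal{R}(F(X),k),\mathcal{A})\to 0$ uniformly over all of $F(X)$, though that uniformity is more than membership in $\CompDOA$ requires.
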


\begin{proof}
Using Lemma \ref{lem:CompactSubsetsOfDOUA}, we have,   for   $X\in \CompDOA$, $\mathcal{R}(X,k)\in \mathcal{X}~\forall k\in \mathbb{Z}_{+},$ and $\lim_{k\rightarrow \infty}\dha(\mathcal{R}(X,k),\mathcal{A})=0.$
    This implies, using the semi-group property (Lemma \ref{lem:SemiGroup}), that  
    $
    \mathcal{R}(X,k+1)=\mathcal{R}(F(X), k)\in \mathcal{X}~\forall k\in \mathbb{Z}_{+},
    $ and 
    $
\lim_{k\rightarrow \infty}\dha(\mathcal{R}(X,k+1),\mathcal{A})=\lim_{k\rightarrow \infty}\dha(\mathcal{R}(F(X),k),\mathcal{A})=0.
  $
  Hence, $F(X)\in \tilde{\mathcal{D}}_{\mathcal{A}}^{\mathcal{X}}$.
\end{proof}   

Now, we show the equivalence between the DOA and the DOUA.

\begin{theorem}
    $
\mathcal{D}_{\mathcal{A}}^{\mathcal{X}}=\tilde{\mathcal{D}}_{\mathcal{A}}^{\mathcal{X}}.
    $
\end{theorem}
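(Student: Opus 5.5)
The inclusion $\tilde{\mathcal{D}}_{\mathcal{A}}^{\mathcal{X}}\subseteq\mathcal{D}_{\mathcal{A}}^{\mathcal{X}}$ is immediate, and the reverse inclusion follows from a compactness argument on the space of disturbance sequences combined with Assumption~\ref{Assumptions}.

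\emph{Easy inclusion.} Let $x\in\tilde{\mathcal{D}}_{\mathcal{A}}^{\mathcal{X}}$ and $\pi\in W^{\mathbb{Z}_{+}}$. By definition of the reachable set, $\varphi_{x}^{\pi}(k)\in\mathcal{R}(\{x\},k)\subseteq\mathcal{X}$. Moreover,
\[
\dist(\varphi_{x}^{\pi}(k),\mathcal{A})\le\dha(\mathcal{R}(\{x\},k),\mathcal{A})\to 0 .
\]
Hence $x\in\mathcal{D}_{\mathcal{A}}^{\mathcal{X}}$.

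\emph{Reverse inclusion: safety.} Let $x\in\mathcal{D}_{\mathcal{A}}^{\mathcal{X}}$. Every $y\in\mathcal{R}(\{x\},k)$ equals $\varphi_{x}^{\pi}(k)$ for some $\pi$, so $\mathcal{R}(\{x\},k)\subseteq\mathcal{X}$ for all $k$. What remains is uniform convergence, i.e.\ $\sup_{\pi}\dist(\varphi_{x}^{\pi}(k),\mathcal{A})\to0$. This is the main obstacle, since pointwise convergence in $\pi$ must be upgraded to uniform convergence.

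\emph{Compactness of disturbance sequences.} Equip $W^{\mathbb{Z}_{+}}$ with the product topology; it is compact by Tychonoff's theorem, since $W$ is compact. For each fixed $k$, the map $\pi\mapsto\varphi_{x}^{\pi}(k)$ depends only on $\pi(0),\dots,\pi(k-1)$. By induction using continuity of $f$, this map is continuous.

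\emph{Finite entrance time.} Fix $0<\tilde r\le r$ and, for $N\in\mathbb{Z}_{+}$, define
\[
U_{N}\defas\Set{\pi\in W^{\mathbb{Z}_{+}}}{\dist(\varphi_{x}^{\pi}(N),\mathcal{A})<\tilde r}.
\]
Each $U_{N}$ is open, because $\dist(\cdot,\mathcal{A})$ is continuous and the map above is continuous. The sets $U_{N}$ cover $W^{\mathbb{Z}_{+}}$, because every trajectory from $x$ converges to $\mathcal{A}$. By compactness, finitely many of them cover, so there is $N^{*}$ such that every $\pi$ admits some $N_{\pi}\le N^{*}$ with $\dist(\varphi_{x}^{\pi}(N_{\pi}),\mathcal{A})<\tilde r$.

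\emph{Uniform decay.} Let $z=\varphi_{x}^{\pi}(N_{\pi})$. Then $z\in\mathcal{X}$ by safety and $\dist(z,\mathcal{A})\le r$. Let $\pi'(j)\defas\pi(N_{\pi}+j)$ be the shifted sequence, so that $\varphi_{x}^{\pi}(N_{\pi}+j)=\varphi_{z}^{\pi'}(j)$. Item 3 of Assumption~\ref{Assumptions} applied to $z$ and $\pi'$ gives
\[
\dist(\varphi_{x}^{\pi}(N_{\pi}+j),\mathcal{A})\le M\lambda(\tilde r,j),\qquad j\in\mathbb{Z}_{+},
\]
using monotonicity of $\lambda$ in $s$. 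For $k\ge N^{*}$ we have $k-N_{\pi}\ge k-N^{*}$, and $\lambda(\tilde r,\cdot)$ is nonincreasing. Therefore
\[
\dist(\varphi_{x}^{\pi}(k),\mathcal{A})\le M\lambda(\tilde r,k-N^{*})
\]
uniformly in $\pi$. Taking the supremum over $\pi$ bounds $\dha(\mathcal{R}(\{x\},k),\mathcal{A})$ by the same quantity. This tends to $0$ by Remark~\ref{rem:lp_properties}, item 4, so $x\in\tilde{\mathcal{D}}_{\mathcal{A}}^{\mathcal{X}}$.
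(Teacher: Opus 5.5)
Your proof is correct, and it takes a genuinely different route from the paper.

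The paper argues by contradiction via viability. It sets $\Omega=\mathcal{D}_{\mathcal{A}}^{\mathcal{X}}\setminus\tilde{\mathcal{D}}_{\mathcal{A}}^{\mathcal{X}}$ and shows that every $x\in\Omega$ has a successor $f(x,w)\in\Omega$. Otherwise, by invariance of $\mathcal{D}_{\mathcal{A}}^{\mathcal{X}}$, the set $F(\{x\})$ would be a compact subset of $\tilde{\mathcal{D}}_{\mathcal{A}}^{\mathcal{X}}$, and Lemma~\ref{lem:CompactSubsetsOfDOUA} together with the semigroup property would force $x\in\tilde{\mathcal{D}}_{\mathcal{A}}^{\mathcal{X}}$. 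The paper then cites an external result to obtain a whole trajectory remaining in $\Omega$. Finally, it uses openness of $\tilde{\mathcal{D}}_{\mathcal{A}}^{\mathcal{X}}$ (Lemma~\ref{lem:DOAOpen}) to get $\mathcal{A}+\delta\mathbb{B}_{n}\subseteq\tilde{\mathcal{D}}_{\mathcal{A}}^{\mathcal{X}}$. That trajectory therefore stays more than $\delta$ away from $\mathcal{A}$, which contradicts $\Omega\subseteq\mathcal{D}_{\mathcal{A}}^{\mathcal{X}}$.

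In effect, the paper constructs the infinite ``bad'' trajectory that your finite-subcover step rules out. You place the compactness on the sequence space $W^{\mathbb{Z}_{+}}$ rather than on reachable sets in state space.

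Your route buys several things:
\begin{itemize}
\item It is self-contained: it needs none of Lemmas~\ref{lem:DOAOpen}, \ref{lem:delta-margin}, \ref{lem:CompactSubsetsOfDOUA}, nor the cited viability proposition.
\item It yields a quantitative uniform bound, $\dha(\mathcal{R}(\{x\},k),\mathcal{A})\le M\lambda(\tilde r,k-N^{*})$ for $k\ge N^{*}$.
\item It never uses openness of $\mathcal{X}$, so it proves the equality without that standing assumption.
\end{itemize}

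The paper's route, in exchange, stays entirely within the set-valued framework built on $F$ and the Hausdorff metric. It gets the equality almost for free from the openness and compact-subset lemmas it needs anyway for the value-function analysis.
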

  \begin{proof}
    Define $\hat F:\mathcal X\to \mathcal K(\mathcal X)$ to be $\hat F(x)=F(\{x\}),~x\in \mathcal{X}$. 
    Then, $\hat F$ is a compact valued continuous multifunction. 
    Moreover, define $\hat F^{-}(V)\defas\{x\in \mathcal X: \hat F(x)\cap V\neq \emptyset \}$ and $\hat F^{+}(V)\defas\{x\in \mathcal X: \hat F(x)\subseteq V\}$ for $V\subseteq \mathcal{X}$. 

    First, note that $\mathcal{D}_{\mathcal{A}}^{\mathcal{X}}$ is invariant under $\hat F$, that is, for all $x\in \mathcal{D}_{\mathcal{A}}^{\mathcal{X}}$ we have $\hat F(x)\subseteq \mathcal{D}_{\mathcal{A}}^{\mathcal{X}}$. 
    Second, note that $\mathcal D^{\mathcal X}_{\mathcal{A}}\supseteq  {\mathcal {\tilde D}^{\mathcal X}_{\mathcal{A}}}$.
    These facts follow quickly from the definitions and the preceding lemmas. 
    So we only need to show that $\mathcal D^{\mathcal X}_{\mathcal{A}}\subseteq  {\mathcal {\tilde D}^{\mathcal X}_{\mathcal{A}}}$. 
    
    Let $\Omega={\mathcal { D}^{\mathcal X}_{\mathcal{A}}}\setminus {\mathcal {\tilde D}^{\mathcal X}_{\mathcal{A}}}$. We claim that $\Omega\subseteq \hat F^{-}(\Omega)$. 
    To see this, if $x\in \Omega\cap (\mathbb R^n\setminus \hat F^{-}(\Omega))$, we get $\hat F(x)\subseteq \mathcal {D}^{\mathcal X}_{\mathcal{A}}$ (as $x\in \mathcal { D}^{\mathcal X}_{\mathcal{A}}$) and $\hat F(x) \subseteq \mathbb R^n\setminus \Omega$ (this immediately follows from the definition of $\hat F^{-}$). 
    Thus, $\hat F(x) \subseteq \mathcal { D}^{\mathcal X}_{\mathcal{A}}\cap  \mathbb R^n\setminus \Omega =\mathcal { D}^{\mathcal X}_{\mathcal{A}}\setminus (\mathcal { D}^{\mathcal X}_{\mathcal{A}}\setminus{\mathcal {\tilde D}^{\mathcal X}_{\mathcal{A}}})=\mathcal {\tilde D}^{\mathcal X}_{\mathcal{A}}$.
    Since $\hat F(x)$ is compact, we have $\hat F(x)\in \mathcal K(\mathcal {\tilde D}^{\mathcal{X}}_{\mathcal{A}})$. 
    Applying Lemma~\ref{lem:CompactSubsetsOfDOUA} with $X$ replaced by $F(\{x\})=\hat F(x)$ and recalling Lemma~\ref{lem:SemiGroup}, we find that $x\in \mathcal {\tilde D}^{\mathcal{X}}_{\mathcal{A}}$. 
    But this contradicts the definition of $\Omega$. 
    Therefore, $\Omega\subseteq \hat F^{-}(\Omega)$.

    Any set $B\subseteq \mathbb R^n$  satisfying $B\subseteq \hat F^{-}(B)$ is often called a viable set (of $\hat F$). Viable sets are often considered dual to invariant sets. 
    More importantly, $B$ is viable if and only if for all $x\in B$ there is $\pi \in W^{\mathbb Z_+}$ with $\varphi_x^\pi(k) \in B~\forall k\in \mathbb{Z}_{+}$. 
    See~\cite[Proposition~3.1.8]{Fitzsimmons2023Properties} for the proof. 
    
    The openness of $\mathcal {\tilde D}^{\mathcal X}_{\mathcal{A}}$ and the compactness of $\mathcal{A}$ imply the existence of some $\delta>0$ such that $\mathcal{A}+\delta \mathbb{B}_{n}\subset \mathcal {\tilde D}^{\mathcal X}_{\mathcal{A}}$ (Lemma \ref{lem:enlargement_of_compact}). Therefore, it follows that $\Omega\subseteq \mathcal X \setminus (\mathcal{A}+\delta \mathbb B_n$). 
    Thus any trajectory which remains in $\Omega$ cannot converge to $\mathcal{A}$ w.r.t. $\dist$. 
    Should $\Omega\neq \emptyset$ this would be a contradiction (As $\Omega\subseteq \mathcal {D}^{\mathcal X}_\mathcal{A}$, so every trajectory starting in $\Omega$ converges to $\mathcal{A}$). 
    Thus $\Omega=\emptyset$ and the conclusion follows.
\end{proof}

\section{Value functions} 
\label{sec:ValueFunctions}
In this section, we introduce novel value functions, defined on metric spaces of compact sets, which characterize the safe robust DOA (equivalently, the safe robust DOUA). In our construction of these value functions, we define the  functions $$\alpha\colon \mathbb{R}^{n}\rightarrow \mathbb{R}_{+}$$ and $$\gamma \colon \mathbb{R}^{n} \rightarrow \mathbb{R}_{+}\bigcup\{\infty\},
$$
satisfying the following properties: 
\begin{itemize}
    \item 
    $\alpha$ is continuous, satisfying
\begin{equation}\label{eq:AlphaBounds}
\underline{\alpha}\dist(x,\mathcal{A})^{\bar{p}}\leq\alpha(x)\leq \overline{\alpha}\dist(x,\mathcal{A})^{\bar{p}},~x\in \mathbb{R}^{n},
\end{equation}
for some $\underline{\alpha}, \overline{\alpha}\in \intoo{0,\infty}$ and $\bar{p}\in \intco{p,\infty}$. 

\item $\gamma$ is finite and  continuous over $\mathcal{X}$ such that there exists  $\underline{\gamma}\in \intoo{0,\infty}$ with  
        \begin{equation}\label{eq:GammaBound}
\gamma(x)\geq \underline{\gamma}~\forall x\in \mathcal{X}, 
        \end{equation}
         $$x\notin \mathcal{X}\Rightarrow \gamma(x)=\infty,
         $$
        and, for any sequence $\{x_{k}\}\subseteq \mathbb{R}^{n}$,  $$x_{k}\rightarrow x\in \partial \mathcal{X}\Rightarrow \gamma(x_{n})\rightarrow \infty.
        $$
   
  \end{itemize}   
\begin{remark}\label{rem:Alpha}
The existence  of $\alpha$ with the aforementioned properties is guaranteed (for example,  we can set $\alpha(x)= \dist(x,\mathcal{A})^{p},~x\in \mathbb{R}^{n}$).
\end{remark}
    \begin{remark}\label{rem:Gamma} \label{rem:transformation}
    The existence  of the  function $\gamma$ is  guaranteed for  various classes of sets.  For instance, if $\mathcal{X}$ is a strict 1-sublevel set of a continuous function $g\colon \mathbb{R}^{n}\rightarrow \mathbb{R}$, i.e., $\mathcal{X}=\{x\in \mathbb{R}^{n}|g(x)<1\},$
    then we can define 
    $$
\gamma(x)\defas \gamma_{0}+\frac{1}{\mathrm{ReLu}(1-g(x))},
$$
    where $1/0\defas \infty$, $\mathrm{ReLU}\colon \mathbb{R}\rightarrow \mathbb{R}$ is the rectifier linear unit function defined as $\mathrm{ReLU}(x)=(x+|x|)/2,~x\in \mathbb{R}$, and ${\gamma_{0}}>0$ is any arbitrary positive constant. With this definition of $\gamma$, the lower bound condition holds with $\underline{\gamma}=\gamma_{0}$. For $\mathcal{X}=\mathbb{R}^{n}$, we can simply set $\gamma(\cdot)=1$. 
\end{remark}
\begin{remark}\label{rem:Superposition}
Suppose the safe set $\mathcal{X}$ is given as the intersection of 
$N_{\mathrm{safe}}$ open sets $\mathcal{X}_{i}$, 
$i \in \intcc{1;N_{\mathrm{safe}}}$, i.e.,
\[
\mathcal{X}
=
\bigcap_{i=1}^{N_{\mathrm{safe}}}
\mathcal{X}_{i},
\]
where each $\mathcal{X}_{i}$ contains the RIS $\mathcal{A}$ and is defined as a strict $1$-sublevel set of a continuous function 
$g_{{i}} : \mathbb{R}^n \to \mathbb{R}$, namely,
$\mathcal{X}_{i}
=
\Set{x \in \mathbb{R}^n}{g_{{i}}(x) < 1}$.
Then $\mathcal{X}$ itself can be represented as a strict $1$-sublevel set of the function 
$g : \mathbb{R}^n \to \mathbb{R}$ defined by
$
g(x)
\defas
\max_{i \in \intcc{1;N_{\mathrm{safe}}}}
g_{{i}}(x),~ x \in \mathbb{R}^n$.

Moreover, the function $\gamma$ may be constructed using the obtained function $g$ as described in Remark~\ref{rem:Gamma}. 
Alternatively, a superposition-based construction can be employed: 
for each $\mathcal{X}_{i}$, compute $\gamma_{i}$ according to Remark~\ref{rem:Gamma}, and define
$\gamma(\cdot)
\defas
\sum_{i=1}^{N_{\mathrm{safe}}}
\gamma_{{i}}(\cdot)$.
\end{remark}

Define the value functions $\mathcal{V},\mathcal{W} \colon \mathcal{K}(\mathbb{R}^{n})\rightarrow \mathbb{R}_{+}\bigcup \{\infty\}$ as follows:
\begin{align}\label{eq:Lyapunov1}
\mathcal{V}(X)&\defas \sum_{k=0}^{\infty} \Psi(\mathcal{R}(X,k)),\\
\label{eq:Lyapunov2}
\mathcal{W}(X)&\defas 1-\exp(-\mathcal{V}(X)),
\end{align}
where $\Psi\colon \Comp\rightarrow \mathbb{R}_{+}\bigcup\{\infty\}$ is given as:
\begin{equation}\label{eq:Psi}
\Psi(X)\defas \sup_{y_{1}\in X}\gamma(y_{1})\sup_{y_{2}\in X}\alpha(y_{2}),~X\in \Comp, 
\end{equation}
and we define $\exp(-\infty)\defas 0$. The presented value functions characterize the safe robust DOUA (equivalently, DOA) as shown below:
\begin{theorem}\label{thm:DOASublevel}
    $
\mathbb{V}_{\infty}\defas \{x\in \mathbb{R}^{n}| \mathcal{V}(\{x\})<\infty\}=\mathbb{W}_{1}\defas\{x\in \mathbb{R}^{n}| \mathcal{W}(\{x\})<1\} =\tilde{\mathcal{D}}_{\mathcal{A}}^{\mathcal{X}}.
    $
\end{theorem}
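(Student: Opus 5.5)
The equality $\mathbb{V}_{\infty}=\mathbb{W}_{1}$ is immediate from \eqref{eq:Lyapunov2}. Since $\mathcal{V}\ge 0$ and $\exp(-\infty)\defas 0$, we have $\mathcal{W}(\{x\})<1$ iff $\exp(-\mathcal{V}(\{x\}))>0$ iff $\mathcal{V}(\{x\})<\infty$. The substance is therefore $\mathbb{V}_{\infty}=\tilde{\mathcal{D}}_{\mathcal{A}}^{\mathcal{X}}$, which I prove by two inclusions.

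\emph{Step 1: $\mathbb{V}_{\infty}\subseteq\tilde{\mathcal{D}}_{\mathcal{A}}^{\mathcal{X}}$.} Let $\mathcal{V}(\{x\})<\infty$. Then every term $\Psi(\mathcal{R}(\{x\},k))$ is finite. Since $\alpha\ge 0$, this needs care: a product $\infty\cdot 0$ could occur if $\mathcal{R}(\{x\},k)\subseteq\mathcal{A}$. I will adopt the convention $\infty\cdot 0=\infty$, or note that the paper presumably intends it. Alternatively, observe that if some point of $\mathcal{R}(\{x\},k)$ lies outside $\mathcal{X}$ while the set lies in $\mathcal{A}\subseteq\mathcal{X}$, we get a contradiction, so whenever $\sup\gamma=\infty$ we have $\sup\alpha>0$. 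Hence $\sup_{y\in\mathcal{R}(\{x\},k)}\gamma(y)<\infty$, which forces $\mathcal{R}(\{x\},k)\subseteq\mathcal{X}$ for all $k$. Moreover, $\Psi(\mathcal{R}(\{x\},k))\to 0$, and by \eqref{eq:GammaBound} and \eqref{eq:AlphaBounds},
\[
\Psi(\mathcal{R}(\{x\},k))\ge\underline{\gamma}\,\underline{\alpha}\,\dha(\mathcal{R}(\{x\},k),\mathcal{A})^{\bar p}.
\]
Here I use compactness, so that $\sup_{y}\dist(y,\mathcal{A})$ equals $\dha$, together with monotonicity of $s\mapsto s^{\bar p}$. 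Thus $\dha(\mathcal{R}(\{x\},k),\mathcal{A})\to 0$, and $x\in\tilde{\mathcal{D}}_{\mathcal{A}}^{\mathcal{X}}$.

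\emph{Step 2: $\tilde{\mathcal{D}}_{\mathcal{A}}^{\mathcal{X}}\subseteq\mathbb{V}_{\infty}$.} This is the main step. Fix $x$ in the DOUA and set $X_k\defas\mathcal{R}(\{x\},k)$. Choose $\theta>0$ with $\mathcal{A}+\theta\mathbb{B}_n\subseteq\mathcal{X}$ (Lemma~\ref{lem:enlargement_of_compact}), and $\tilde r\in\intoo{0,\min\{r,\theta/(2M)\}}$.

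By the convergence in the DOUA definition, there is $N$ with $\dha(X_N,\mathcal{A})\le\tilde r$. By Assumption~\ref{Assumptions}(3) applied pointwise to trajectories issuing from $X_N$, together with Lemma~\ref{lem:SemiGroup} and monotonicity of $\lambda(\cdot,k)$, we get for all $k\in\mathbb{Z}_+$
\[
\dha(X_{N+k},\mathcal{A})\le M\lambda(\tilde r,k).
\]
Because $\lambda(\tilde r,k)\le\tilde r$, all $X_{N+k}$ lie in the compact set $K\defas\mathcal{A}+M\tilde r\mathbb{B}_n\subseteq\mathcal{A}+(\theta/2)\mathbb{B}_n\subset\mathcal{X}$. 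The function $\gamma$ is continuous and finite on $\mathcal{X}$, hence bounded on $K$ by some $\bar\gamma$. Therefore, using \eqref{eq:AlphaBounds},
\[
\Psi(X_{N+k})\le\bar\gamma\,\overline{\alpha}\,M^{\bar p}\lambda(\tilde r,k)^{\bar p}.
\]
This is summable by Remark~\ref{rem:lp_properties}(5), since $\bar p\ge p$.

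The finitely many terms $\Psi(X_j)$, $j<N$, are finite because each $X_j$ is a compact subset of $\mathcal{X}$ on which the continuous functions $\gamma$ and $\alpha$ attain finite maxima. Hence $\mathcal{V}(\{x\})<\infty$.

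The main obstacle I anticipate is the pointwise-to-set passage in Step 2: Assumption~\ref{Assumptions}(3) requires each point $y\in X_N$ to satisfy $y\in\mathcal{X}$ and $\dist(y,\mathcal{A})\le r$. This holds, since $X_N\subseteq\mathcal{X}$ and $\tilde r\le r$. One then takes the supremum over all $y$ and all disturbance sequences, using $\lambda(\dist(y,\mathcal{A}),k)\le\lambda(\tilde r,k)$. The other delicate point is the $\infty\cdot 0$ convention in Step 1, handled as described above.
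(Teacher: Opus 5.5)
Your proposal is correct and follows the paper's proof essentially step for step. You use the trivial equivalence $\mathcal{W}(\{x\})<1 \Leftrightarrow \mathcal{V}(\{x\})<\infty$. You get $\mathbb{V}_{\infty}\subseteq\tilde{\mathcal{D}}_{\mathcal{A}}^{\mathcal{X}}$ from the blow-up of $\gamma$ outside $\mathcal{X}$, noting that $\sup\alpha>0$ because a reachable set leaving $\mathcal{X}\supseteq\mathcal{A}$ is not contained in $\mathcal{A}$; this is exactly how the paper avoids $\infty\cdot 0$. You then use the lower bounds $\underline{\gamma},\underline{\alpha}$, and prove the reverse inclusion by splitting $\mathcal{V}(\{x\})$ into a finite head and an $\ell_p$-stability tail bounded by $\sup_{K}\gamma\,\overline{\alpha}M^{\bar p}\sum_k\lambda(\tilde r,k)^{\bar p}$. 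The only differences are cosmetic: you use a $\theta/2$ margin instead of $\theta$, and you use $\Psi(\mathcal{R}(\{x\},k))\to 0$ directly instead of the comparison test.
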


\begin{proof}
It is sufficient to show that $\mathbb{V}_{\infty}=\tilde{\mathcal{D}}_{\mathcal{A}}^{\mathcal{X}}$, as $\mathcal{W}(\{x\})<1\Leftrightarrow\mathcal{V}(\{x\})<\infty,~x\in \mathbb{R}^{n}$.     Let $x\in \mathbb{V}_{\infty}$. We will show that $\mathcal{R}(\{x\},k)\subset \mathcal{X}$ for all $k\in \mathbb{Z}_{+}$.  By contradiction, assume that $\mathcal{R}(\{x\},k)\not\subseteq \mathcal{X}$ for some $N\in \mathbb{Z}_{+}$. Then, by the definition of $\gamma$, 
    $
\sup_{y\in \mathcal{R}(\{x\},k)}\gamma(y)=\infty.
    $
    We note that $\sup_{y\in \mathcal{R}(\{x\},k)}\alpha(y)>0$ if $\mathcal{R}(\{x\},k)\not\subseteq \mathcal{A}$ by  \eqref{eq:AlphaBounds}. Hence, $\Psi(\mathcal{R}(\{x\},k))=\infty$, implying  $\mathcal{V}(\{x\})=\infty$, and that contradicts the fact that $x\in \mathbb{V}_{\infty}$. Hence, we have $\mathcal{R}(\{x\},k)\subseteq \mathcal{X}$ for all $k\in \mathbb{Z}_{+}$. Using \eqref{eq:GammaBound}, we have 
$$
\sup_{y\in \mathcal{R}(\{x\},k)}\alpha(y)\leq \frac{\Psi(\mathcal{R}(\{x\},k))}{\underline{\gamma}},~k\in \mathbb{Z}_{+}.
$$
As the infinite sum $
\sum_{k=0}^{\infty}\Psi(\mathcal{R}(\{x\},k))
$ is convergent, 
$\sum_{k=0}^{\infty}\sup_{y\in \mathcal{R}(\{x\},k)}\alpha(y)$
is also convergent by the  comparison test, implying  $\lim_{k\rightarrow \infty} \sup_{y\in \mathcal{R}(\{x\},k)}\alpha(y)=0$. Using the lower bound on $\alpha$, we have
    \begin{align*}
\lim_{k\rightarrow \infty} \dha(\mathcal{R}(\{x\},k),\mathcal{A})^{\bar{p}}\leq \lim_{k\rightarrow \infty} \frac{1}{\underline{\alpha}}\sup_{y\in \mathcal{R}(\{x\},k)}\alpha(y)=0.
    \end{align*}
    Therefore, $\lim_{k\rightarrow \infty} \dha(\mathcal{R}(\{x\},k),\mathcal{A})=0$, and consequently, $x\in\tilde{\mathcal{D}}_{\mathcal{A}}^{\mathcal{X}}$.

Now, let $x\in\tilde{\mathcal{D}}_{\mathcal{A}}^{\mathcal{X}}$ and  recall the definitions of $M,~r,~\lambda$ in Assumption \ref{Assumptions}.  Using the properties of $\alpha$ and $\gamma$, we see that $\sup_{y\in \mathcal{R}(\{x\},j)}\alpha(y)<\infty$ and  $
0<\underline{\gamma}\leq \sup_{y\in \mathcal{R}(\{x\},j)}\gamma(y)<\infty$ for all $j\in \mathbb{Z}_{+}$ as $\mathcal{R}(\{x\},j)\in \mathcal{K}(\mathcal{X}),~j\in \mathbb{Z}_{+}$ (the supremum of a continuous function over a compact subset of the domain is finite). Consequently, we have $\Psi(\mathcal{R}(\{x\},j))<\infty$ for all $j\in \mathbb{Z}_{+}$.   Let $\theta\in \intoo{0,\infty}$ be such that 
    $ \mathcal{A}+
\theta \mathbb{B}_{n}\subseteq \mathcal{X},
    $
    which exists due to the openness of  $\mathcal{X}$, the compactness of $\mathcal{A}$, and Lemma \ref{lem:enlargement_of_compact}. Since $\mathcal{R}(\{x\},\cdot) $ converges to $\mathcal{A}$ w.r.t. $\dha$, there exists $N\in \mathbb{Z}_{+}$ such that 
$
\mathcal{R}(\{x\},N)\subseteq  \mathcal{A}+\tilde{r}\mathbb{B}_{n},
$
where 
$
0<\tilde{r}\leq \min \{{\theta}/{M},r\}.
$
As $\mathcal{R}(\{x\},N)\subseteq \mathcal{A}+  {r}\mathbb{B}_{n}$ ($\tilde{r}\leq r$), the local $\ell_{p}$ stability and the definition of $\tilde{r}$ imply that 
\begin{align*}
\dha(\mathcal{R}(\{x\},N+k), \mathcal{A})&\leq M \lambda(\dha(\mathcal{R}(\{x\},N),\mathcal{A}),k)\\
&\leq M \lambda(\tilde{r},k)
\leq  M \frac{\theta}{M}=\theta  ,~k\in \mathbb{Z}_{+},
\end{align*}
and
\begin{align*}
\sup_{y\in \mathcal{R}(\{x\},N+k)}\alpha(y)&\leq \overline{\alpha}\dha(\mathcal{R}(\{x\},N+k),\mathcal{A})^{\bar{p}}\\
&\leq \overline{\alpha}M^{\bar{p}}(\lambda(\tilde{r},k))^{\bar{p}},~k\in \mathbb{Z}_{+}.
\end{align*}
Let $\Gamma_{\theta }\in \intoo{0,\infty}$ be such that 
$
\gamma(x)\leq \Gamma_{\theta }~\forall x\in \theta \mathbb{B}_{n},
$
which exists due to the compactness of $\mathcal{A}+\theta \mathbb{B}_{n}$ and the continuity of $\gamma$ over the compact $\mathcal{A}+\theta \mathbb{B}_{n}$. Therefore,  using the summability property of $\lambda$, we have  
\begin{align*}
\mathcal{V}(\{x\})=&\sum_{k=0}^{\infty}\Psi(\mathcal{R}(\{x\},k))\\
=&\sum_{k=0}^{N-1}\Psi(\mathcal{R}(\{x\},k))+\sum_{k=N}^{\infty}\Psi(\mathcal{R}(\{x\},k))\\
&\leq \sum_{k=0}^{N-1}\Psi(\mathcal{R}(\{x\},k))+\overline{\alpha}M^{\bar{p}}\Gamma_{\theta}\sum_{k=0}^{\infty}(\lambda(\tilde{r},k))^{\bar{p}}\\&<\infty.
\end{align*}
Hence, $x\in \mathbb{V}_{\infty}$, which completes the proof.
\end{proof}
Using Lemma \ref{lem:CompactSubsetsOfDOUA} and by adapting the proof of the theorem above, we have 
\begin{lemma}
    For a compact $X$, $X\in \CompDOA$ if and only if $\mathcal{V}(X)<\infty$.
\end{lemma}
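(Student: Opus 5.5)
We prove the two implications separately, reusing the argument of Theorem~\ref{thm:DOASublevel} with the singleton $\{x\}$ replaced by a general compact $X$. Throughout, recall from Lemma~\ref{lem:SemiGroup} that $\mathcal{R}(X,k)=F^{k}(X)$, which is compact by Lemma~\ref{lem:ReachableSetMap}. Also note that $\mathcal{R}(X,k)=\bigcup_{x\in X}\mathcal{R}(\{x\},k)$, directly from the definition of reachable sets.

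\emph{($\Rightarrow$)} Let $X\in\CompDOA$. By Lemma~\ref{lem:CompactSubsetsOfDOUA}, $\mathcal{R}(X,k)\subseteq\mathcal{X}$ for all $k$ and $\dha(\mathcal{R}(X,k),\mathcal{A})\to0$.

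Each term $\Psi(\mathcal{R}(X,k))$ is then finite, since $\gamma$ and $\alpha$ are continuous on $\mathcal{X}$ and hence bounded on the compact set $\mathcal{R}(X,k)\subseteq\mathcal{X}$.

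Next, fix $\theta>0$ with $\mathcal{A}+\theta\mathbb{B}_{n}\subseteq\mathcal{X}$ (Lemma~\ref{lem:enlargement_of_compact}) and $\tilde r\le\min\{\theta/M,r\}$. Choose $N$ with $\mathcal{R}(X,N)\subseteq\mathcal{A}+\tilde r\mathbb{B}_{n}$. Applying Assumption~\ref{Assumptions} pointwise along every trajectory from $\mathcal{R}(X,N)$ gives
\[
\dha(\mathcal{R}(X,N+k),\mathcal{A})\le M\lambda(\tilde r,k)\le\theta .
\]
Hence $\sup_{\mathcal{R}(X,N+k)}\alpha\le\overline{\alpha}M^{\bar p}\lambda(\tilde r,k)^{\bar p}$, and $\sup_{\mathcal{R}(X,N+k)}\gamma\le\Gamma_\theta:=\max_{\mathcal{A}+\theta\mathbb{B}_{n}}\gamma$. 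The tail $\sum_{k\ge N}\Psi$ is therefore dominated by $\overline{\alpha}M^{\bar p}\Gamma_\theta\sum_k\lambda(\tilde r,k)^{\bar p}$, which is finite by Remark~\ref{rem:lp_properties}. Adding the finitely many finite head terms gives $\mathcal{V}(X)<\infty$.

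\emph{($\Leftarrow$)} Suppose $\mathcal{V}(X)<\infty$, and fix $x\in X$. We must show $x\in\tilde{\mathcal{D}}_{\mathcal{A}}^{\mathcal{X}}$.

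Since $\mathcal{R}(\{x\},k)\subseteq\mathcal{R}(X,k)$ and both $\gamma,\alpha\ge0$, $\Psi$ is monotone under inclusion. So I would first try concluding $\mathcal{V}(\{x\})\le\mathcal{V}(X)<\infty$ and invoking Theorem~\ref{thm:DOASublevel}.

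There is one subtlety, which I expect to be the only real obstacle. If $\mathcal{R}(X,k)\not\subseteq\mathcal{X}$ but $\mathcal{R}(X,k)\subseteq\mathcal{A}$, the product $\infty\cdot0$ appears in $\Psi$. In fact this cannot happen: $\mathcal{A}\subseteq\mathcal{X}$, so $\mathcal{R}(X,k)\subseteq\mathcal{A}$ forces $\mathcal{R}(X,k)\subseteq\mathcal{X}$. Consequently, whenever $\mathcal{R}(X,k)\not\subseteq\mathcal{X}$, some point lies outside $\mathcal{A}$; then $\sup\alpha>0$ and $\sup\gamma=\infty$, so $\Psi=\infty$, contradicting $\mathcal{V}(X)<\infty$.

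This shows $\mathcal{R}(X,k)\subseteq\mathcal{X}$ for all $k$. Alternatively, one can argue directly with $X$ in place of $\{x\}$, exactly as in Theorem~\ref{thm:DOASublevel}:
\begin{itemize}
\item the bound $\gamma\ge\underline{\gamma}$ gives summability of $\sup_{\mathcal{R}(X,k)}\alpha$;
\item summability forces $\sup_{\mathcal{R}(X,k)}\alpha\to0$;
\item the lower bound in \eqref{eq:AlphaBounds} then gives $\dha(\mathcal{R}(X,k),\mathcal{A})\to0$.
\end{itemize}

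Both properties pass to each $x\in X$ because $\mathcal{R}(\{x\},k)\subseteq\mathcal{R}(X,k)$, and $\dha$ is monotone in its first argument. Hence $X\subseteq\tilde{\mathcal{D}}_{\mathcal{A}}^{\mathcal{X}}$, i.e., $X\in\CompDOA$.
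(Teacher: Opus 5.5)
Your proposal is correct and follows the paper's own (only sketched) route: Lemma~\ref{lem:CompactSubsetsOfDOUA} plus the tail estimate from the proof of Theorem~\ref{thm:DOASublevel} for the forward direction, and the first half of that proof, passed to each point via $\mathcal{R}(\{x\},k)\subseteq\mathcal{R}(X,k)$, for the converse. You also handle correctly a detail the paper leaves implicit: the product $\infty\cdot 0$ can never arise in $\Psi$ on compact sets, because $\mathcal{A}\subseteq\mathcal{X}$.
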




\subsection{Properties of the value functions}
\label{sec:VFProperties}
In this section, we state some important properties for the functions $\mathcal{V}$ and $\mathcal{W}$.

\begin{lemma}
    The functions $\mathcal{V}$ and $\mathcal{W}$ are positive definite in the sense that they are strictly positive for any nonempty compact $X\not \subseteq \mathcal{A}$, and are identically zero when $X\subseteq \mathcal{A}$.
\end{lemma}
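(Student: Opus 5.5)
Both claims about $\mathcal{V}$ will follow from the structure of $\Psi$ together with the robust invariance of $\mathcal{A}$. The claims about $\mathcal{W}$ then follow because $t\mapsto 1-\exp(-t)$ is strictly increasing on $\mathbb{R}_{+}\cup\{\infty\}$ (with $\exp(-\infty)=0$) and vanishes exactly at $t=0$. So it suffices to prove that $\mathcal{V}(X)=0$ when $X\subseteq\mathcal{A}$ and $\mathcal{V}(X)>0$ otherwise. One point must be fixed at the outset: a product $\infty\cdot 0$ may arise in $\Psi$, and we adopt the convention $\infty\cdot 0=0$. This is the convention implicitly used in the proof of Theorem~\ref{thm:DOASublevel}, which needed $\sup\alpha>0$ to conclude $\Psi=\infty$.

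\emph{Case $X\subseteq\mathcal{A}$.} Since $\mathcal{A}$ is a RIS, $F(\mathcal{A})\subseteq\mathcal{A}$, and therefore $F(Y)\subseteq\mathcal{A}$ for every compact $Y\subseteq\mathcal{A}$. By induction and Lemma~\ref{lem:SemiGroup}, $\mathcal{R}(X,k)=F^{k}(X)\subseteq\mathcal{A}$ for all $k\in\mathbb{Z}_{+}$. By \eqref{eq:AlphaBounds}, $\alpha$ vanishes on $\mathcal{A}$, so $\sup_{y\in\mathcal{R}(X,k)}\alpha(y)=0$. Moreover $\mathcal{A}\subseteq\mathcal{X}$ is compact and $\gamma$ is continuous and finite on $\mathcal{X}$, so $\sup_{y\in \mathcal{R}(X,k)}\gamma(y)<\infty$. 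Hence $\Psi(\mathcal{R}(X,k))=0$ for every $k$, and $\mathcal{V}(X)=0$, $\mathcal{W}(X)=0$. Because $\gamma$ is finite here, this case does not even require the $\infty\cdot0$ convention.

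\emph{Case $X\not\subseteq\mathcal{A}$.} Since $\Psi\ge 0$, we have $\mathcal{V}(X)\ge\Psi(\mathcal{R}(X,0))=\Psi(X)$, so it suffices to show $\Psi(X)>0$. Choose $x\in X\setminus\mathcal{A}$. Because $\mathcal{A}$ is compact (hence closed), $\dist(x,\mathcal{A})>0$, and the lower bound in \eqref{eq:AlphaBounds} gives $\sup_{y\in X}\alpha(y)\ge\underline{\alpha}\dist(x,\mathcal{A})^{\bar p}>0$. For the $\gamma$ factor there are two sub-cases. If $X\subseteq\mathcal{X}$, then \eqref{eq:GammaBound} gives $\sup_{y\in X}\gamma(y)\ge\underline{\gamma}>0$. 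If $X\not\subseteq\mathcal{X}$, then $\gamma=\infty$ at some point of $X$, so the supremum is $\infty$. In either case $\Psi(X)$ is a product of two strictly positive quantities, at least one of them finite and positive, and is therefore strictly positive (possibly $\infty$). Consequently $\mathcal{V}(X)>0$ and $\mathcal{W}(X)=1-\exp(-\mathcal{V}(X))>0$.

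There is no real obstacle in this argument. The only delicate points are the extended-arithmetic conventions, namely the product involving $\infty$ in $\Psi$ and $\exp(-\infty)=0$ in $\mathcal{W}$, together with the observation that the positive lower bound $\underline{\gamma}$ on $\mathcal{X}$, rather than mere nonnegativity of $\gamma$, is what prevents the $\gamma$ factor from killing the positivity of $\alpha$.
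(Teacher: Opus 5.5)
Your proof is correct and follows the same route as the paper, which dismisses this lemma as an immediate consequence of the definitions. You supply the details the paper leaves implicit: robust invariance of $\mathcal{A}$ together with the upper bound in \eqref{eq:AlphaBounds} gives the zero case, and the lower bound in \eqref{eq:AlphaBounds} together with $\underline{\gamma}>0$ (or $\gamma=\infty$ outside $\mathcal{X}$) makes $\Psi(X)>0$, hence $\mathcal{V}(X)>0$ and $\mathcal{W}(X)>0$, whenever $X\not\subseteq\mathcal{A}$.
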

\begin{proof}
This is an immediate consequence of the definitions.
\end{proof}
\begin{lemma}\label{lem:Monotonicity}
    The functions $\mathcal{V}$ and $\mathcal{W}$ are monotonic w.r.t. set inclusion. That is, for $X,Y\in \Comp$, 
    $
    X\subseteq Y\Rightarrow \mathcal{V}(X)\leq \mathcal{V}(Y)$ and $\mathcal{W}(X)\leq \mathcal{W}(Y).
    $
\end{lemma}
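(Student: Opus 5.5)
The plan is to reduce monotonicity of $\mathcal{V}$ to termwise monotonicity of the series in \eqref{eq:Lyapunov1}, and then to pass it to $\mathcal{W}$ through the increasing map $t\mapsto 1-\exp(-t)$.

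\textbf{Step 1: reachable sets preserve inclusion.} Fix $X,Y\in\Comp$ with $X\subseteq Y$. I will show $\mathcal{R}(X,k)\subseteq\mathcal{R}(Y,k)$ for every $k\in\mathbb{Z}_{+}$. This follows directly from the definition of $\mathcal{R}$. If $y=\varphi_{x}^{\pi}(k)$ with $x\in X$, then $x\in Y$ as well, so the same pair $(x,\pi)$ witnesses $y\in\mathcal{R}(Y,k)$. One could also argue via Lemma~\ref{lem:SemiGroup}, since $F$ is clearly monotone under inclusion, but the direct argument is shorter.

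\textbf{Step 2: $\Psi$ is monotone.} Next I will show that $\Psi$ is monotone on $\Comp$ with respect to inclusion. Both factors in \eqref{eq:Psi} are suprema of nonnegative extended-real functions: $\gamma\geq 0$, and in fact $\gamma\geq\underline{\gamma}$ on $\mathcal{X}$ and $\gamma=\infty$ off $\mathcal{X}$. Such suprema are nondecreasing when the set grows. The only subtlety is the product of extended reals. The natural convention is $0\cdot\infty=0$, which is what the proof of Theorem~\ref{thm:DOASublevel} implicitly uses when it treats $\Psi$ as zero on subsets of $\mathcal{A}$. Under that convention, multiplication on $\intcc{0,\infty}$ is still nondecreasing in each argument. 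So if $a\leq a'$ and $b\leq b'$, then $ab\leq a'b'$. I will check the case where one of the factors is $0$ or $\infty$ explicitly. This bookkeeping is the only point that needs care; everything else is order-preserving.

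\textbf{Step 3: conclude for $\mathcal{V}$ and $\mathcal{W}$.} Combining Steps 1 and 2 gives $\Psi(\mathcal{R}(X,k))\leq\Psi(\mathcal{R}(Y,k))$ for all $k\in\mathbb{Z}_{+}$. Summing these nonnegative extended-real terms gives $\mathcal{V}(X)\leq\mathcal{V}(Y)$, with divergent sums interpreted as $\infty$. Finally, $t\mapsto 1-\exp(-t)$ is nondecreasing on $\intcc{0,\infty}$ under the convention $\exp(-\infty)=0$. Hence $\mathcal{W}(X)\leq\mathcal{W}(Y)$, which proves the statement.
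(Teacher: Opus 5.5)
Your proposal is correct and follows the paper's approach; the paper simply calls this an immediate consequence of the definitions, and your three steps unpack it: $\mathcal{R}(\cdot,k)$ preserves inclusion, $\Psi$ is monotone, the series is summed termwise, and $t\mapsto 1-\exp(-t)$ is monotone. One small simplification: the $0\cdot\infty$ case in $\Psi$ never actually arises, because $\sup_{y\in X}\alpha(y)=0$ forces $X\subseteq\mathcal{A}\subseteq\mathcal{X}$, and $\gamma$ is finite and continuous there, hence bounded on the compact $X$. So no convention for the product is needed.
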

\begin{proof}
    This is an immediate consequence of the definitions.
\end{proof}
Next, it is our goal to prove the continuity of $\mathcal{V}$. Below is a technical result that will be used in our continuity analysis.
\begin{lemma}[See the proof in  Appendix \ref{Proof_Lem:SupIsCts}]\label{Lem:SupIsCts}
Let $g\colon \mathbb{R}^{n}\rightarrow \mathbb{R}$ be continuous and define $G\colon D \subseteq \mathcal{K}(\mathbb{R}^{n})\rightarrow \mathbb{R}$  as 
$
G(X)\defas \sup_{x\in X}g(x),~X\in D.
$
Then $G$ is continuous over $D$.
\end{lemma}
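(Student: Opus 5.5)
We prove continuity of $G$ at an arbitrary $X_0\in D$ via a uniform continuity argument on a compact neighborhood of $X_0$. Fix $X_0\in D$ and $\varepsilon>0$. Consider the compact set $K\defas X_0+\mathbb{B}_{n}$; it is compact as the image of the compact set $X_0\times\mathbb{B}_{n}$ under the continuous map $(x,b)\mapsto x+b$. Since $g$ is continuous on $\mathbb{R}^{n}$, it is uniformly continuous on $K$. Hence there exists $\eta\in\intoo{0,1}$ such that $|g(x)-g(y)|\le\varepsilon$ whenever $x,y\in K$ and $\norm{x-y}\le\eta$.

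Now take any $X\in D$ with $\dh(X,X_0)<\eta$. Then $X\subseteq X_0+\eta\mathbb{B}_{n}\subseteq K$ and $X_0\subseteq X+\eta\mathbb{B}_{n}$. We bound $G(X)$ from above and below separately.

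\emph{Upper bound.} Since $X$ is compact and $g$ is continuous, the supremum $G(X)$ is attained at some $x^{*}\in X$. Because $X\subseteq X_0+\eta\mathbb{B}_{n}$, there is $y\in X_0$ with $\norm{x^{*}-y}\le\eta$. Both points lie in $K$, so $G(X)=g(x^{*})\le g(y)+\varepsilon\le G(X_0)+\varepsilon$.

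\emph{Lower bound.} Let $G(X_0)$ be attained at $y^{*}\in X_0$. There is $x\in X$ with $\norm{x-y^{*}}\le\eta$, again with both points in $K$. Hence $G(X_0)\le g(x)+\varepsilon\le G(X)+\varepsilon$.

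Together these give $|G(X)-G(X_0)|\le\varepsilon$, which proves continuity at $X_0$. Since $X_0\in D$ was arbitrary, $G$ is continuous on $D$.

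Two points need care. First, $g$ is only continuous, not uniformly continuous on $\mathbb{R}^{n}$; this is why we restrict to the compact neighborhood $K$ and choose $\eta<1$, so that every point involved in the comparisons lies in $K$. Second, the strict inequality $\dh<\eta$ gives inclusion in the $\eta$-enlargement, because the infimum in the definition of $\dh$ is attained for compact sets. Attainment of the suprema is not essential: one could instead argue with approximate maximizers.
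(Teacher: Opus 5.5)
Your proof is correct and takes essentially the same route as the paper's: uniform continuity of $g$ on a compact enlargement $X_0+r\mathbb{B}_{n}$ (the paper uses an arbitrary $r>0$ with $\delta<r$, you use $r=1$), followed by comparing maximizers of $g$ on the two sets in both directions. The only difference is cosmetic. The paper works with $\dh\le\delta$, which relies on compactness to produce nearby points, whereas your strict inequality $\dh<\eta$ gives the two inclusions directly from the infimum definition, so your remark about attainment is not actually needed.
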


\begin{theorem}\label{thm:VCts}
    $\mathcal{V}$ is continuous and finite over $\CompDOA$.
\end{theorem}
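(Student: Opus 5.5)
The plan is to show that, near any fixed $X_{0}\in\CompDOA$, $\mathcal{V}$ is a uniformly convergent series of continuous functions. Finiteness is already covered by the preceding lemma, since $X\in\CompDOA$ if and only if $\mathcal{V}(X)<\infty$. For continuity, fix $X_{0}\in\CompDOA$; by Lemma~\ref{lem:CompactSetsMetric} and Lemma~\ref{lem:DOAOpen}, $\CompDOA$ is open in the Hausdorff metric.

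\textbf{Step 1: a uniform neighbourhood.} I will construct a neighbourhood $\mathcal{N}$ of $X_{0}$ and an index $N$ such that every $X\in\mathcal{N}$ satisfies two conditions. First, $\mathcal{R}(X,j)\in\mathcal{K}(\mathcal{X})$ for all $j$. Second, $\mathcal{R}(X,N)\subseteq\mathcal{A}+\tilde r\mathbb{B}_{n}$, where $0<\tilde r\le\min\{\theta/M,r\}$ and $\mathcal{A}+\theta\mathbb{B}_{n}\subseteq\mathcal{X}$. This repeats the proof of Lemma~\ref{lem:DOAOpen}, now applied to sets rather than points. By Lemma~\ref{lem:CompactSubsetsOfDOUA}, there is $N$ with $\mathcal{R}(X_{0},N)\subseteq\mathcal{A}+(\tilde r/2)\mathbb{B}_{n}$, and $\mathcal{R}(X_{0},j)\subseteq\mathcal{X}$ for $j<N$. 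Choose margins $\rho_{j}$ via Lemma~\ref{lem:enlargement_of_compact}. Continuity of $F^{j}$ (Lemma~\ref{lem:ReachableSetMap}) then gives $\delta>0$ such that $\dh(X,X_{0})\le\delta$ implies $\mathcal{R}(X,j)\subseteq\mathcal{R}(X_{0},j)+\rho_{j}\mathbb{B}_{n}\subseteq\mathcal{X}$ for $j<N$, and $\mathcal{R}(X,N)\subseteq\mathcal{A}+\tilde r\mathbb{B}_{n}$. Assumption~\ref{Assumptions}, applied pointwise and using $\mathcal{R}(X,N+k)=\mathcal{R}(\mathcal{R}(X,N),k)$ (Lemma~\ref{lem:SemiGroup}), yields
\[
\dha(\mathcal{R}(X,N+k),\mathcal{A})\le M\lambda(\tilde r,k)\le\theta .
\]
Hence $\mathcal{R}(X,N+k)\subseteq\mathcal{A}+\theta\mathbb{B}_{n}\subseteq\mathcal{X}$.

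\textbf{Step 2: continuity of each term.} On $\mathcal{K}(\mathcal{X})$, $\gamma$ is finite and continuous, but it is only defined as a real function on $\mathcal{X}$. I therefore apply Lemma~\ref{Lem:SupIsCts} with a local substitute. For each $j$, the sets $\mathcal{R}(X,j)$, $X\in\mathcal{N}$, all lie in one fixed compact $K_{j}\subseteq\mathcal{X}$: take $K_{j}=\mathcal{R}(X_{0},j)+\rho_{j}\mathbb{B}_{n}$ for $j<N$, and $\mathcal{A}+\theta\mathbb{B}_{n}$ for $j\ge N$. By Tietze's theorem, $\gamma|_{K_{j}}$ extends to a continuous $\tilde\gamma$ on $\mathbb{R}^{n}$, and the supremum over $\mathcal{R}(X,j)\subseteq K_{j}$ is unchanged. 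Lemma~\ref{Lem:SupIsCts}, combined with continuity of $\mathcal{R}(\cdot,j)$, then shows that $X\mapsto\Psi(\mathcal{R}(X,j))$ is a product of two continuous real functions on $\mathcal{N}$, hence continuous.

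\textbf{Step 3: uniform tail bound.} Let $\Gamma_{\theta}=\max_{\mathcal{A}+\theta\mathbb{B}_{n}}\gamma$. For $X\in\mathcal{N}$ and $k\ge0$, the bound~\eqref{eq:AlphaBounds} gives
\[
\Psi(\mathcal{R}(X,N+k))\le\Gamma_{\theta}\,\overline{\alpha}\,M^{\bar p}\lambda(\tilde r,k)^{\bar p}.
\]
The right-hand side is summable by Remark~\ref{rem:lp_properties}(5). The Weierstrass M-test then gives uniform convergence of the series on $\mathcal{N}$, so $\mathcal{V}$ is continuous at $X_{0}$.

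The main obstacle is Step 1: obtaining a single $N$ and a single compact envelope that work for all nearby $X$. This is what makes the M-test bound uniform, and it requires carrying the openness argument of Lemma~\ref{lem:DOAOpen} over from points to compact sets.
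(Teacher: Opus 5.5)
Your proof is correct. It shares the paper's head/tail decomposition but controls the tail differently.

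The paper fixes $X$ and an arbitrary $\varepsilon\le\min\{\theta/M,r\}$, then picks an $\varepsilon$-dependent horizon $N$ with $\mathcal{R}(X,N)\subseteq\mathcal{A}+(\varepsilon/2)\mathbb{B}_{n}$. It bounds $\abs{\mathcal{V}(X)-\mathcal{V}(Y)}$ by $\varepsilon+2\Gamma_{\theta}\overline{\alpha}M^{\bar p}\sum_{k}\lambda(\varepsilon,k)^{\bar p}$. The tails become small by shrinking the entry radius, which relies on continuity of $s\mapsto\sum_{k}\lambda(s,k)^{\bar p}$ at $s=0$ with value $0$.

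You instead fix one radius $\tilde r$ and one horizon $N$ that work on a whole Hausdorff neighborhood of $X_0$. This gives an $X$-independent summable majorant $\Gamma_{\theta}\overline{\alpha}M^{\bar p}\lambda(\tilde r,k)^{\bar p}$, and the Weierstrass M-test in the variable $X$ finishes the argument. So the tail is made small by pushing the truncation index out, not by shrinking the radius.

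This buys two things. First, your tail estimate needs only summability of $\lambda(\tilde r,\cdot)^{\bar p}$ at one fixed radius. Continuity of the series in $s$ at $0$, and with it continuity of $\lambda(\cdot,k)$ and $\lambda(0,k)=0$, is never used. Second, you get a stronger conclusion: the truncations $\mathcal{V}_{N_s}$ converge to $\mathcal{V}$ locally uniformly on $\CompDOA$. Hence they converge uniformly on Hausdorff-compact families such as $\{\{x\}: x\in K\}$ with $K\subseteq\tilde{\mathcal{D}}_{\mathcal{A}}^{\mathcal{X}}$ compact. That is exactly what supports the finite-horizon approximation of Section~\ref{sec:ReachSetApproximation}.

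Your Tietze step also makes rigorous a point the paper passes over. Lemma~\ref{Lem:SupIsCts} is stated for a real-valued continuous $g$ on all of $\mathbb{R}^{n}$, whereas $\gamma$ is infinite outside $\mathcal{X}$. Confining each $\mathcal{R}(X,j)$ to a fixed compact envelope $K_j\subseteq\mathcal{X}$ is the right fix.
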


\begin{proof}
     Recall the definitions of $M,~r,~\lambda$ in Assumption \ref{Assumptions} and the definition of $\Psi$ in \eqref{eq:Psi}. Let $\theta\in \intoo{0,\infty}$ be such that 
    $
\mathcal{A}+\theta \mathbb{B}_{n}\subseteq \mathcal{X}
    $ (existence of such $\theta$ has been established in previous proofs). Let $X\in \CompDOA$ and $\varepsilon>0$ be arbitrary, where we assume without loss of generality that 
    $\varepsilon\leq \min\{{\theta}/{M},r\}$. 
     Then, using Lemma \ref{lem:CompactSubsetsOfDOUA},  there exists $N\in \mathbb{Z}_{+}$ such that
    $
\mathcal{R}(X,k)\subseteq \mathcal{X}~\forall k\in \mathbb{Z}_{+},
    $
    with
    $\mathcal{R}(X,N)\subseteq \mathcal{A}+ ({\varepsilon}/{2})\mathbb{B}_{n}$.
    The properties of $\alpha$ and $\gamma$ and the containment of the reachable sets, which are compact, inside the safe set imply that
$
\sum_{k=0}^{N-1}\Psi(\mathcal{R}(X,k))<\infty.
$
     The $\ell_{p}$ stability indicates that 
     $$
\dha(\mathcal{R}(X,N+k),\mathcal{A})\leq M \lambda(\varepsilon,k)  \leq M \frac{\theta}{M}=\theta ,~k\in \mathbb{Z}_{+}.
     $$
  Let $\Gamma_{\theta }\in \intoo{0,\infty}$ be such that 
$
\gamma(x)\leq \Gamma_{\theta }~\forall x\in \mathcal{A}+\theta \mathbb{B}_{n}
$. Consequently, we have
  \begin{align*}
\sum_{k=N}^{\infty}\Psi(\mathcal{R}(X,k))&\leq \sum_{k=0}^{\infty} \Gamma_{\theta}\overline{\alpha} \dha(\mathcal{R}(X,N+k),\mathcal{A})^{\bar{p}}\\
&\leq \Gamma_{\theta}\overline{\alpha}{M^{\bar{p}}}\sum_{k=0}^{\infty}(\lambda(\varepsilon,k))^{\bar{p}}<\infty.
   \end{align*}
Therefore, $\mathcal{V}(X)<\infty$. Let $\delta>0$ be such that 
$
X+\delta \mathbb{B}_{n}\subseteq \tilde{\mathcal{D}}_{\mathcal{A}}^{\mathcal{X}}.
$
Such $\delta$ exists due Lemma \ref{lem:delta-margin}. Therefore, for $Y\in \Comp$ with $\dh(X,Y)\leq \delta$, $Y\in \CompDOA$. Additionally, assume that $\delta$ satisfies, for all $Y\in \CompDOA$ with $\dh(X,Y)\leq \delta$, 
$
    \abs{\sum_{k=0}^{N-1}\Psi(\mathcal{R}(X,k))- \Psi(\mathcal{R}(Y,k))}\leq  \varepsilon,
    $
    and 
    $
    \dh(\mathcal{R}(X,N),\mathcal{R}(Y,N))\leq  {\varepsilon}/{2}.
    $
  This can be fulfilled due to the continuity of  $\mathcal{R}(\cdot,i),~i\in\intcc{1;N}$,  w.r.t. the Hausdorff distance  and the compactness of their images (Lemma \ref{lem:ReachableSetMap}),  the continuity of $\alpha$ and $\gamma$ over $\mathcal{X}$, and Lemma \ref{Lem:SupIsCts}.
Consequently, for all $Y\in \CompDOA$ with $\dh(X,Y)\leq \delta$, 
\begin{align*}
 \dha(\mathcal{R}(Y,N),\mathcal{A})\leq& \dha(\mathcal{R}(X,N),\mathcal{R}(Y,N))\\
 &+\dha(\mathcal{R}(X,N),\mathcal{A})\\
 \leq& \dh(\mathcal{R}(X,N),\mathcal{R}(Y,N))\\
 &+\dha(\mathcal{R}(X,N),\mathcal{A})  \leq \varepsilon.
\end{align*}
The $\ell_{p}$ stability indicates that 
   $
\dha(\mathcal{R}(Y,N+k),\mathcal{A})\leq M  \lambda(\varepsilon,k)\leq \theta ,~k\in \mathbb{Z}_{+}.
   $
   Therefore, 
   \begin{align*}
\sum_{k=N}^{\infty}\Psi(\mathcal{R}(Y,k))&\leq \sum_{k=0}^{\infty} \Gamma_{\theta}\overline{\alpha} \dha(\mathcal{R}(Y,N+k))^{\bar{p}}\\
&\leq \Gamma_{\theta}\overline{\alpha}{M^{\bar{p}}}\sum_{k=0}^{\infty}(\lambda(\varepsilon,k))^{\bar{p}}<\infty.
\end{align*}
Finally, for all $Y\in \CompDOA$ satisfying, $\dh(X,Y)\leq \delta$,
\begin{align*}      \abs{\mathcal{V}(X)-\mathcal{V}(Y)} \leq&
 \abs{\sum_{k=0}^{N-1}\Psi(\mathcal{R}(X,k))- \Psi(\mathcal{R}(Y,k))}\\
       &+\sum_{k=N}^{\infty}\Psi(\mathcal{R}(X,k))+\sum_{k=N}^{\infty}\Psi(\mathcal{R}(Y,k))\\
       &\leq \varepsilon+ 2\Gamma_{\theta}\overline{\alpha}{M^{\bar{p}}}\sum_{k=0}^{\infty}(\lambda(\varepsilon,k))^{\bar{p}}
   .
   \end{align*}
Since $\varepsilon>0$ is arbitrary, and the function 
$
s \longmapsto \sum_{k=0}^{\infty} \lambda(s,k)^{\bar p}
$
is continuous with 
\(
\sum_{k=0}^{\infty} \lambda(0,k)^{\bar p}=0,
\)
it follows that the right-hand side of the above inequality can be made arbitrarily small by choosing $\varepsilon>0$ sufficiently small. This completes the proof.

\end{proof}
Next, we want to illustrate the continuity of $\mathcal{W}$ over $\Comp$. This will require analyzing the behavior of $\mathcal{V}$ at the boundary of $\mathcal{D}^{\mathcal{X}}_{\mathcal{A}}$. In the next result, we show that $\mathcal{V}$ blows up as its argument gets closer to the boundary (in some set-valued sense).
\begin{theorem}\label{thm:DOABoundary} Let $X\in \Comp$  satisfy $X\cap \partial \tilde{\mathcal{D}}_{\mathcal{A}}^{\mathcal{X}}\neq \emptyset$ and $\{X_{k}\}_{k\in \mathbb{Z}_{+}}$ be a sequence of compact sets converging to $X$ in the sense of the Hausdorff distance. Then  $\lim_{k\rightarrow \infty}\mathcal{V}(X_{k})= \infty$.
\end{theorem}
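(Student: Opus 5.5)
The plan is to reduce the statement to sequences of points and then split into two cases, according to whether the trajectory tube from a boundary point leaves $\mathcal{X}$ or not.

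\textbf{Reduction to points.} Fix $x\in X\cap\partial\tilde{\mathcal{D}}_{\mathcal{A}}^{\mathcal{X}}$. By Lemma \ref{lem:DOAOpen}, $\tilde{\mathcal{D}}_{\mathcal{A}}^{\mathcal{X}}$ is open, so $x\notin\tilde{\mathcal{D}}_{\mathcal{A}}^{\mathcal{X}}$. Theorem \ref{thm:DOASublevel} then gives $\mathcal{V}(\{x\})=\infty$.

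Since $\dh(X_k,X)\to 0$, we have $\dist(x,X_k)\le \dh(X,X_k)\to 0$. By compactness of $X_k$ we may pick $x_k\in X_k$ with $x_k\to x$. Monotonicity (Lemma \ref{lem:Monotonicity}) gives $\mathcal{V}(X_k)\ge \mathcal{V}(\{x_k\})$. It therefore suffices to prove that $\mathcal{V}(\{x_k\})\to\infty$ whenever $x_k\to x$ and $\mathcal{V}(\{x\})=\infty$.

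Note that $\{x_k\}\to\{x\}$ in $\dh$. Lemma \ref{lem:ReachableSetMap} then gives $\mathcal{R}(\{x_k\},j)\to\mathcal{R}(\{x\},j)$ in $\dh$ for each fixed $j$.

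\textbf{Case 1: $\mathcal{R}(\{x\},j)\subseteq\mathcal{X}$ for all $j$.} Here every term $\Psi(\mathcal{R}(\{x\},j))$ is finite, so $\mathcal{V}(\{x\})=\infty$ forces the partial sums $\sum_{j=0}^{N-1}\Psi(\mathcal{R}(\{x\},j))$ to be unbounded in $N$.

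For a fixed $N$, $\Psi\circ\mathcal{R}(\cdot,j)$ is continuous at $\{x\}$ for each $j<N$. To see this, apply Lemma \ref{Lem:SupIsCts} to $\alpha$, and to $\gamma$ restricted near the compact set $\mathcal{R}(\{x\},j)\subseteq\mathcal{X}$. This restriction is legitimate because, by Lemma \ref{lem:enlargement_of_compact}, a $\delta$-neighbourhood of that set lies in $\mathcal{X}$, and eventually $\mathcal{R}(\{x_k\},j)$ lies in this neighbourhood.

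Hence
\[
\liminf_{k}\mathcal{V}(\{x_k\})\ \ge\ \lim_k\sum_{j=0}^{N-1}\Psi(\mathcal{R}(\{x_k\},j))\ =\ \sum_{j=0}^{N-1}\Psi(\mathcal{R}(\{x\},j)),
\]
where the inequality uses nonnegativity of $\Psi$. Letting $N\to\infty$ gives the claim in this case.

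\textbf{Case 2: some $\mathcal{R}(\{x\},j)$ contains a point $y\notin\mathcal{X}$.} Hausdorff convergence yields $y_k\in\mathcal{R}(\{x_k\},j)$ with $y_k\to y$. For indices with $y_k\notin\mathcal{X}$ we have $\gamma(y_k)=\infty$. If infinitely many $y_k$ lie in $\mathcal{X}$, then $y\in\overline{\mathcal{X}}\setminus\mathcal{X}=\partial\mathcal{X}$, so $\gamma(y_k)\to\infty$ along those indices by the blow-up property of $\gamma$. In either subcase $\sup_{\mathcal{R}(\{x_k\},j)}\gamma\ge\gamma(y_k)\to\infty$.

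We also need the $\alpha$-factor to stay away from zero, so as to avoid an $\infty\cdot 0$ issue. Since $\mathcal{A}\subseteq\mathcal{X}$ and $y\notin\mathcal{X}$, we have $\dist(y,\mathcal{A})>0$. By \eqref{eq:AlphaBounds},
\[
\sup_{\mathcal{R}(\{x_k\},j)}\alpha\ \ge\ \alpha(y_k)\ \ge\ \underline{\alpha}\,\dist(y_k,\mathcal{A})^{\bar p},
\]
which is bounded below by a positive constant for large $k$. Therefore $\mathcal{V}(\{x_k\})\ge\Psi(\mathcal{R}(\{x_k\},j))\to\infty$.

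\textbf{Main obstacle.} The delicate step is Case 2. There one must handle the extended-valued $\gamma$ (equal to $\infty$ off $\mathcal{X}$) together with the possibility that the sets $\mathcal{R}(\{x_k\},j)$ lie in $\mathcal{X}$ while approaching $\partial\mathcal{X}$. One also needs a uniform positive lower bound on the $\alpha$-factor, which comes from $\mathcal{A}$ being a compact subset of the open set $\mathcal{X}$. A secondary technical point is in Case 1: ensuring the continuity argument for $\sup\gamma$ only uses $\gamma$ on $\mathcal{X}$, which is what the neighbourhood from Lemma \ref{lem:enlargement_of_compact} provides.
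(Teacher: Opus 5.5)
Your proof is correct, and it takes a genuinely different route from the paper's.

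\textbf{The paper's route.} The paper works directly with the sets $X_k$. It fixes $\tilde r<\theta/M^{2}$ and lets $T_k$ be the first time with $\mathcal{R}(X_k,T_k)\subseteq\mathcal{A}+\tilde r\mathbb{B}_{n}$. It then uses the $\ell_p$-stability estimate in two cases.
\begin{itemize}
\item If $T_k\to\infty$, it obtains the explicit bound $\mathcal{V}(X_k)\ge\underline{\gamma}\,\underline{\alpha}\,\tilde r^{\bar p}(T_k-1)$.
\item If $T_k$ stays bounded along a subsequence, it passes $\mathcal{R}(X_k,T)\subseteq\mathcal{A}+(\theta/M)\mathbb{B}_{n}$ to the limit. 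Then either every point of $X$ lies in $\tilde{\mathcal{D}}_{\mathcal{A}}^{\mathcal{X}}$, which is impossible because boundary points of the open DOUA lie outside it, or some $\mathcal{R}(X,j)$ leaves $\mathcal{X}$, so $\sup\gamma$ blows up.
\end{itemize}

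\textbf{Your route.} You reduce to singletons via Lemma~\ref{lem:Monotonicity}. You get $\mathcal{V}(\{x\})=\infty$ from Lemma~\ref{lem:DOAOpen} and Theorem~\ref{thm:DOASublevel}. You then prove what is really lower semicontinuity of $\mathcal{V}$ at $\{x\}$: a Fatou-type argument with continuous nonnegative finite-horizon terms when the tube stays in $\mathcal{X}$, and blow-up of $\gamma$ otherwise.

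\textbf{Comparison.} Your argument has three advantages.
\begin{itemize}
\item It is more modular, since stability enters only through already-proved results.
\item It avoids the hitting-time and subsequence bookkeeping. The paper treats only one bounded subsequence explicitly and leaves the usual subsequence-of-subsequence step implicit.
\item It is more careful than the paper about keeping the $\alpha$-factor bounded away from zero in the blow-up case, which rules out an $\infty\cdot 0$ ambiguity.
\end{itemize}
In exchange, the paper's route gives a quantitative lower bound in terms of how long the reachable tubes stay outside $\mathcal{A}+\tilde r\mathbb{B}_{n}$. It also re-derives inline the part of the DOA characterization it needs, rather than invoking it.

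Finally, your two-case argument applies verbatim to $X$ itself. It shows $\liminf_k\mathcal{V}(X_k)\ge\mathcal{V}(X)$ on all of $\mathcal{K}(\mathbb{R}^{n})$, so the reduction to points is convenient but not necessary.
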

\begin{proof}
    Without loss of generality, assume $X_{k}\in \CompDOA$ for all $k\in \mathbb{Z}_{+}$. Recall the definitions of $M$, $r$ and $\lambda$ in Assumption \ref{Assumptions}. Let $\theta \in \intoo{0,\infty}$ be such that
$
\theta<r$ and 
$\mathcal{A}+\theta\mathbb{B}_{n}\subseteq \mathcal{X}.
    $
    Let $\tilde{r}\in \intoo{0, \theta/M^{2}}   $.
    For each $k\in \mathbb{Z}_{+}$, let $T_{k}\in \mathbb{Z}_{+}$ be the first time instance such that 
    $
\mathcal{R}(X_{k},T_{k})\subseteq  \mathcal{A}+\tilde{r}\mathbb{B}_{n}.
    $
Note that, for all $j\in \intcc{0;k-1}$,
    $
\dha(\mathcal{R}(X,j),\mathcal{A})\geq \tilde{r}.
    $
    If the sequence $\{T_{k}\}_{k\in \mathbb{Z}_{+}}$ diverges to $\infty$, then we have
    $$
\mathcal{V}(X_{k})\geq \sum_{j=0}^{T_{k}-1}\Psi(\mathcal{R}(X_{k},j))\geq \underline{\gamma}\underline{\alpha}\tilde{r}^{\bar{p}}(T_{k}-1),~k\in \mathbb{Z}_{+}, 
    $$
    and, consequently,
     $\lim_{k\rightarrow \infty}\mathcal{V}(X_{k})=\infty$.
    Assume that $\{T_{k}\}_{k\in \mathbb{Z}_{+}}$ does not diverge to $\infty$. Then there exists a bounded subsequence, again denoted $\{T_{k}\}_{k\in \mathbb{Z}_{+}}$, with an upper bound $T\in \mathbb{Z}_{+}$ such that  
    $
T_{k}\leq T,~k\in \mathbb{Z}_{+}.
    $
As $
\tilde{r}\leq {\theta}/{M^{2}}\leq \theta<r,$
it follows that  
$
\mathcal{R}(X_{k},T_{k})\subseteq \mathcal{A}+\tilde{r} \mathbb{B}_{n}\subseteq \mathcal{A}+ r \mathbb{B}_{n},~k\in \mathbb{Z}_{+}.
$
The $\ell_{p}$ stability indicates that 
$$
\dha(\mathcal{R}(X_{k},T_{k}+j),\mathcal{A})\leq M\lambda(\tilde{r},j)\leq {\theta}/{M},~j,k\in \mathbb{Z}_{+}. 
$$
Hence, 
$
\mathcal{R}(X_{k},T)\subseteq \mathcal{A}+  (\theta/M) \mathbb{B}_{n}\subseteq r\mathbb{B}_{n},~k\in \mathbb{Z}_{+},
$
implying, using the continuity of $\mathcal{R}(\cdot,T)$ w.r.t. $\dh$, that 
$
\mathcal{R}(X,T)\subseteq \mathcal{A}+  (\theta/M) \mathbb{B}_{n}\subseteq \mathcal{A}+r \mathbb{B}_{n}.
$
Therefore, 
$$
\dha(\mathcal{R}(X,T+j),\mathcal{A})\leq  M\lambda((\theta/M),j)\leq \theta,~j\in \mathbb{Z}_{+}.
$$
This implies that 
$
\mathcal{R}(X,T+j)\subseteq  \mathcal{X},~j\in \mathbb{Z}_{+},
$
and 
$
\lim_{j\rightarrow \infty}\dha(\mathcal{R}(X,j),\mathcal{A})=0.
$
If $
\mathcal{R}(X,j)\subseteq  \mathcal{X}~\forall j\in \intcc{0;T-1},
$
it follows that $\mathcal{V}(X)<\infty$, implying $X$ is a proper subset of $\tilde{\mathcal{D}}_{\mathcal{A}}^{\mathcal{X}}$, which yields a contradiction. Now, assume $
\mathcal{R}(X,j)\cap \mathbb{R}^{n}\setminus\mathcal{X}\neq \emptyset
$ for some $j\in \intcc{0;T-1}$. Then, using the continuity of $\mathcal{R}(\cdot,j)$ w.r.t. $\dh$, it follows that $\lim_{k\rightarrow \infty}\dh(\mathcal{R}(X_{k},j),\mathcal{R}(X,j))=0
$.  
This yields $
\lim_{k\rightarrow \infty}\sup_{y\in \mathcal{R}(X_{k},j)}\gamma(y)= \infty
$ and consequently $\lim_{k\rightarrow \infty}\mathcal{V}(X_{k})=\infty$.
\end{proof}
Using Theorems \ref{thm:VCts} and \ref{thm:DOABoundary}, we have:
\begin{corollary}
   $\mathcal{W}$ is continuous over  $\Comp$. 
\end{corollary}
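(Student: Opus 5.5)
We prove continuity of $\mathcal{W}$ at an arbitrary $X\in\Comp$ by sequential continuity: take $X_{k}\to X$ in $\dh$ and show $\mathcal{W}(X_{k})\to\mathcal{W}(X)$. Since $t\mapsto 1-\exp(-t)$ is continuous on $\intcc{0,\infty}$ with value $1$ at $\infty$, it is enough to show $\mathcal{V}(X_{k})\to\mathcal{V}(X)$ in the extended sense, i.e. convergence in $\intcc{0,\infty}$. Three cases cover every $X$.

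\textbf{Case 1: $X\in\CompDOA$.} By Lemma~\ref{lem:CompactSetsMetric}, applied with the open set $\tilde{\mathcal{D}}_{\mathcal{A}}^{\mathcal{X}}$ (open by Lemma~\ref{lem:DOAOpen}), $\CompDOA$ is open in $\Comp$. Hence $X_{k}\in\CompDOA$ for all large $k$. Theorem~\ref{thm:VCts} then gives $\mathcal{V}(X_{k})\to\mathcal{V}(X)<\infty$, and so $\mathcal{W}(X_{k})\to\mathcal{W}(X)$.

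\textbf{Case 2: $X\cap\partial\tilde{\mathcal{D}}_{\mathcal{A}}^{\mathcal{X}}\neq\emptyset$.} Theorem~\ref{thm:DOABoundary} yields $\mathcal{V}(X_{k})\to\infty$, so $\mathcal{W}(X_{k})\to 1$. By the preceding lemma, $X\notin\CompDOA$ implies $\mathcal{V}(X)=\infty$. Since $\tilde{\mathcal{D}}_{\mathcal{A}}^{\mathcal{X}}$ is open, its boundary is disjoint from it, so indeed $X\notin\CompDOA$ and $\mathcal{W}(X)=1$. Both sides agree.

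\textbf{Case 3: $X\not\subseteq\tilde{\mathcal{D}}_{\mathcal{A}}^{\mathcal{X}}$ and $X$ misses the boundary.} Then $X$ contains a point $x_{0}$ in the open set $E\defas\mathbb{R}^{n}\setminus\overline{\tilde{\mathcal{D}}_{\mathcal{A}}^{\mathcal{X}}}$. Pick $\rho>0$ with $x_{0}+\rho\mathbb{B}_{n}\subseteq E$. Once $\dh(X_{k},X)<\rho$, there is $y_{k}\in X_{k}$ with $\norm{y_{k}-x_{0}}<\rho$. So $y_{k}\notin\tilde{\mathcal{D}}_{\mathcal{A}}^{\mathcal{X}}$, and Theorem~\ref{thm:DOASublevel} gives $\mathcal{V}(\{y_{k}\})=\infty$. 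Monotonicity (Lemma~\ref{lem:Monotonicity}) then gives $\mathcal{V}(X_{k})=\infty$ for all large $k$. Likewise $\mathcal{V}(X)=\infty$, so $\mathcal{W}(X_{k})=1=\mathcal{W}(X)$ eventually.

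These cases are exhaustive. If $X$ is neither contained in $\tilde{\mathcal{D}}_{\mathcal{A}}^{\mathcal{X}}$ nor meets its boundary, then some point of $X$ lies outside the closure, which is exactly Case 3. The main point needing care is Case 2, and it rests entirely on Theorem~\ref{thm:DOABoundary}, which we take as given. The remaining care is checking that $\mathcal{V}=\infty$ exactly off $\CompDOA$, which is the lemma stated just before Section~\ref{sec:VFProperties}.
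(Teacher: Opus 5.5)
Your proof is correct and follows the paper's route: the paper's justification is just an appeal to Theorem~\ref{thm:VCts} for sets contained in $\tilde{\mathcal{D}}_{\mathcal{A}}^{\mathcal{X}}$ and Theorem~\ref{thm:DOABoundary} for sets meeting its boundary. Your Case~3 (a point of $X$ outside the closure, handled via openness of the exterior, Theorem~\ref{thm:DOASublevel} and monotonicity) and your use of Lemma~\ref{lem:CompactSetsMetric} to make $\CompDOA$ open in $\Comp$ spell out steps the paper leaves implicit.
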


\subsection{Characterizing the value functions: Bellman-type equations}
\label{sec:Zubov}
In this section, we derive Bellman-type equations corresponding to the functions $\mathcal{V}$ and $\mathcal{W}$, respectively.

\begin{theorem}\label{Thm:LyapunovEquation}
 $\mathcal{V}$ satisfies the  equation (w.r.t. to the function $v$)
\begin{align}\label{eq:LyapunovEqn}
v(X)
= \Psi(X)+v(F(X)),~X\in \Comp.
\end{align}
\end{theorem}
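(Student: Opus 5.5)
The plan is to peel off the $k=0$ term of the series defining $\mathcal{V}$ and to re-index the remainder using the semigroup property (Lemma~\ref{lem:SemiGroup}). Fix $X\in\Comp$. By definition \eqref{eq:Lyapunov1} and $\mathcal{R}(X,0)=X$,
\[
\mathcal{V}(X)=\Psi(X)+\sum_{k=1}^{\infty}\Psi(\mathcal{R}(X,k)).
\]
For $k\in\mathbb{N}$, Lemma~\ref{lem:SemiGroup} gives $\mathcal{R}(X,k)=\mathcal{R}(F(X),k-1)$. Since $F(X)\in\Comp$ (Lemma~\ref{lem:ReachableSetMap}), $\mathcal{V}(F(X))$ is well defined. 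After the shift $j=k-1$, the remaining sum is exactly $\sum_{j=0}^{\infty}\Psi(\mathcal{R}(F(X),j))=\mathcal{V}(F(X))$. This yields \eqref{eq:LyapunovEqn}.

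The only point needing care, and the main obstacle, is the extended-real arithmetic: $\Psi$ takes values in $\mathbb{R}_+\cup\{\infty\}$, and $\Psi(X)$ is itself a product of a possibly infinite supremum of $\gamma$ with a possibly zero supremum of $\alpha$. I will adopt the convention implicit in the proof of Theorem~\ref{thm:DOASublevel}: $\Psi(X)=0$ whenever $X\subseteq\mathcal{A}$. This is consistent, because $\mathcal{A}\subseteq\mathcal{X}$ and so $\gamma$ is finite on such $X$ anyway. In every other case $\Psi(X)\in\intoc{0,\infty}$ is unambiguous.

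All terms of the series are nonnegative, so each infinite sum is the supremum of its partial sums in $\intcc{0,\infty}$. Splitting off a first term and re-indexing are then valid without any convergence hypothesis. In particular, the $N$th partial sum for $X$ equals $\Psi(X)$ plus the $(N-1)$th partial sum for $F(X)$, and we pass to the limit as $N\to\infty$ in $\intcc{0,\infty}$. Adding a quantity in $\intcc{0,\infty}$ is monotone and continuous along nondecreasing sequences, so the limit splits as claimed. The identity therefore holds in all cases, including $\mathcal{V}(X)=\infty$, where at least one side is infinite and both sides are infinite.

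If desired, I will also record the corresponding equation for $\mathcal{W}$, namely $\mathcal{W}(X)=1-\exp(-\Psi(X))\bigl(1-\mathcal{W}(F(X))\bigr)$. It follows by applying $1-\exp(-\cdot)$ to \eqref{eq:LyapunovEqn} with the convention $\exp(-\infty)=0$. Nothing beyond Lemma~\ref{lem:SemiGroup} and nonnegativity is needed.
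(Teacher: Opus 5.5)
Your proposal is correct and follows the paper's proof: you split off the $k=0$ term and re-index the tail using $\mathcal{R}(X,k)=\mathcal{R}(F(X),k-1)$ from Lemma~\ref{lem:SemiGroup}. Your partial-sum argument for nonnegative extended-real terms simply makes rigorous the paper's one-line remark that the decomposition remains valid when $\mathcal{V}(X)=\infty$.
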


\begin{proof}
    Given $X\in \Comp$ and using definition of $\mathcal{V}$ and Lemma \ref{lem:SemiGroup}, we have 
    \begin{align*}\mathcal{V}(X)&= \sum_{k=0}^{\infty}\Psi(\mathcal{R}(X,k))\\
    &=\Psi(\mathcal{R}(X,0))+\sum_{k=1}^{\infty}\Psi(\mathcal{R}(X,k))\\
    &=\Psi(X)+\sum_{k=0}^{\infty}\Psi(\mathcal{R}(X,k+1))\\
    &=\Psi(X)+\mathcal{V}(F(X)).
    \end{align*}
   Note that the above decomposition is valid even if $\mathcal{V}(X)$ is infinite.
\end{proof}
\begin{theorem}\label{thm:ZubovEqn1}
 $\mathcal{W}$ satisfies the  equation (w.r.t. to the function $w$)
\begin{equation}\label{eq:ZubovEqn1}
w(X)-w({F}(X))=\xi(X)(1-w(F(X))),~X\in \Comp,
\end{equation}
where 
\begin{equation}\label{eq:xi}
\xi(X)\defas 1-\exp(-\Psi(X)).
\end{equation}
\end{theorem}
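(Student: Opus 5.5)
The plan is to derive \eqref{eq:ZubovEqn1} directly from the functional equation \eqref{eq:LyapunovEqn} of Theorem~\ref{Thm:LyapunovEquation}. Apply the map $t\mapsto \exp(-t)$, with the convention $\exp(-\infty)\defas 0$, to both sides of $\mathcal{V}(X)=\Psi(X)+\mathcal{V}(F(X))$, where $X\in\Comp$ is fixed. The intended intermediate identity is
\begin{equation*}
\exp(-\mathcal{V}(X))=\exp(-\Psi(X))\exp(-\mathcal{V}(F(X))),
\end{equation*}
that is, $1-\mathcal{W}(X)=(1-\xi(X))(1-\mathcal{W}(F(X)))$, using \eqref{eq:Lyapunov2} and \eqref{eq:xi}. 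Expanding the right-hand side gives
\begin{equation*}
1-\mathcal{W}(F(X))-\xi(X)\bigl(1-\mathcal{W}(F(X))\bigr).
\end{equation*}
Rearranging then yields $\mathcal{W}(X)-\mathcal{W}(F(X))=\xi(X)(1-\mathcal{W}(F(X)))$, which is the claim.

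The only step needing care is the multiplicative identity for $\exp(-\cdot)$ when some quantities are infinite. I will split into cases.

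If $\Psi(X)<\infty$ and $\mathcal{V}(F(X))<\infty$, then $\mathcal{V}(X)<\infty$. The identity is then the usual $e^{-(a+b)}=e^{-a}e^{-b}$.

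If $\mathcal{V}(F(X))=\infty$, then $\mathcal{V}(X)=\infty$ by \eqref{eq:LyapunovEqn}, since $\Psi\ge 0$. Both sides of the intermediate identity vanish, because $\exp(-\Psi(X))\in[0,1]$ is finite. Likewise $\mathcal{W}(X)=\mathcal{W}(F(X))=1$, so both sides of \eqref{eq:ZubovEqn1} equal $0$.

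If $\Psi(X)=\infty$, then again $\mathcal{V}(X)=\infty$. We have $\xi(X)=1$, and the identity reads $0=0\cdot\exp(-\mathcal{V}(F(X)))$. Equivalently, \eqref{eq:ZubovEqn1} reads $1-\mathcal{W}(F(X))=1-\mathcal{W}(F(X))$.

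Nonnegativity of $\Psi$ and $\mathcal{V}$ guarantees that no $\infty-\infty$ expression ever appears, so the case analysis is exhaustive. The main, though mild, obstacle is exactly this bookkeeping of the extended-real convention; everything else is algebra.
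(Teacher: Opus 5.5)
Your proposal is correct and follows the paper's proof. Both exponentiate the Bellman equation $\mathcal{V}(X)=\Psi(X)+\mathcal{V}(F(X))$ to get $1-\mathcal{W}(X)=\exp(-\Psi(X))(1-\mathcal{W}(F(X)))$ and then rearrange; your case split for infinite $\Psi(X)$ or $\mathcal{V}(F(X))$ is a useful addition, since the paper uses $\exp(-a-b)=\exp(-a)\exp(-b)$ under the convention $\exp(-\infty)=0$ without comment.
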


\begin{proof}
Using Theorem \ref{Thm:LyapunovEquation}, we have, for any $X\in \Comp$, 
\begin{align*}
\mathcal{W}(X)&= 1-\exp(-\mathcal{V}(X))\\
&= 1-\exp(-\mathcal{V}({F}(x))-\Psi(X))\\
&=1-\exp(-\Psi(X))(1-\mathcal{W}({F}(X))),
\end{align*}
implying  
\begin{align*}
\mathcal{W}(X)-\mathcal{W}({F}(X))=&1-\mathcal{W}({F}(X))\\
     &-\exp(-\Psi(X))(1-\mathcal{W}({F}(x)))\\
=&\xi(X)(1-\mathcal{W}({F}(X))).
\end{align*} 
    
\end{proof}
We have the following version of the Bellman-type equation for the function $\mathcal{W}$ inside the safe set $\mathcal{X}$.
\begin{theorem}\label{thm:ZubovEqn2}
If $w$ satisfies equation \eqref{eq:ZubovEqn1} over $\mathcal{K}(D)$ for some nonempty  $D\subseteq \mathcal{X}$, then for  all $X\in \mathcal{K}(D)$, 
$w$ satisfies  the equation
\begin{equation}\label{eq:ZubovEqn2}
w(X)-w({F}(X))=\beta(X)(1-w(X)),
\end{equation}
where 
\begin{equation}\label{eq:beta}
\beta(X)\defas \exp(\Psi(X))-1.
\end{equation}
\end{theorem}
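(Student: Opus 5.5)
The plan is a direct algebraic rearrangement of \eqref{eq:ZubovEqn1}, after checking that $\Psi(X)$ is finite for every $X\in\mathcal{K}(D)$. Fix $X\in \mathcal{K}(D)$. Since $D\subseteq\mathcal{X}$, $X$ is a compact subset of $\mathcal{X}$. On $\mathcal{X}$, $\gamma$ is finite and continuous, and $\alpha$ is continuous on $\mathbb{R}^{n}$. Hence both suprema in \eqref{eq:Psi} are attained and finite, so $\Psi(X)\in\mathbb{R}_{+}$. Consequently $\exp(-\Psi(X))>0$ and $\beta(X)=\exp(\Psi(X))-1$ is a finite nonnegative real number, so \eqref{eq:ZubovEqn2} makes sense without any $\infty$ conventions. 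This finiteness check is the only genuine point to verify, and the hypothesis $D\subseteq\mathcal{X}$ is what supplies it. Without it, $\Psi(X)=\infty$ could occur, $\xi(X)=1$ would lose information, and $\beta(X)$ would be undefined.

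Next I rewrite \eqref{eq:ZubovEqn1} in product form. Write $a\defas w(X)$ and $b\defas w(F(X))$; both are real numbers, since $w$ is a real-valued function on $\mathcal{K}(D)$ for which \eqref{eq:ZubovEqn1} holds. Using $\xi(X)=1-\exp(-\Psi(X))$, equation \eqref{eq:ZubovEqn1} reads $a-b=(1-e^{-\Psi(X)})(1-b)$. This is equivalent to
\[
1-a=e^{-\Psi(X)}(1-b),
\]
which is the same identity used in the proof of Theorem~\ref{thm:ZubovEqn1}, read in reverse.

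Multiplying both sides by $e^{\Psi(X)}$, which is finite by the first step, gives $1-b=e^{\Psi(X)}(1-a)$. Then
\[
a-b=(1-b)-(1-a)=e^{\Psi(X)}(1-a)-(1-a)=\beta(X)(1-a),
\]
which is exactly \eqref{eq:ZubovEqn2}.

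One subtlety is whether $F(X)$ lies in $\mathcal{K}(D)$, which would be needed for $w(F(X))$ to be defined. I will read the hypothesis as saying that \eqref{eq:ZubovEqn1} holds, with $w(F(X))$ meaningful, for every $X\in\mathcal{K}(D)$; the argument only uses the values $w(X)$ and $w(F(X))$ that appear in that identity. I will state this reading explicitly at the start of the proof.
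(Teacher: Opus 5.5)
Your proposal is correct and follows essentially the same route as the paper. Both arguments use $X\subseteq D\subseteq\mathcal{X}$ to get $\Psi(X)<\infty$, rearrange \eqref{eq:ZubovEqn1} into $1-w(F(X))=\exp(\Psi(X))(1-w(X))$, and then subtract $1-w(X)$ to obtain \eqref{eq:ZubovEqn2}. Your explicit note that $w(F(X))$ must be defined is a reasonable clarification that the paper leaves implicit.
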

\begin{proof}
    When $X\in \mathcal{K}(D)$,  $\sup_{y\in X}\gamma(y)<\infty$, implying $\Psi(X)<\infty$. Therefore, using equation \eqref{eq:ZubovEqn1}, 
    $
1-{w}({F}(X))=\exp(\Psi(X))(1-w(X)).
    $
    Hence,
   \begin{align*}
       w(X)- w({F}(X))&=\exp(\Psi(X))(1-w(X))-1+w(X)\\
       &= \exp(\Psi(X))(1-w(X))-(1-w(X))\\
       &= \beta(X)(1-w(X)).
       \end{align*}
\end{proof}
We have shown that the value functions $\mathcal{V}$ and $\mathcal{W}$ are solutions to the equations \eqref{eq:LyapunovEqn} and \eqref{eq:ZubovEqn1}, respectively. Next, we show that the solutions to these equations are unique with respect to functions that satisfy a particular limit condition concerning reachable sets converging to the RIS $\mathcal{A}$. We start with the following technical result:
\begin{lemma}\label{lem:boundedness}
    Assume that $w\colon \CompDOA\rightarrow \mathbb{R}$   satisfies equation \eqref{eq:ZubovEqn1} over $\CompDOA$ and for any  compact set $X$,
    $
     \lim_{k\rightarrow \infty} \dha(\mathcal{R}(X,k),\mathcal{A})=0\Rightarrow \lim_{k\rightarrow \infty} w(\mathcal{R}(X,k))=0
    $. Then, $w(X)<1$ for all $X\in \CompDOA$.
\end{lemma}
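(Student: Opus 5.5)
Fix $X\in \CompDOA$ and set $X_k\defas \mathcal{R}(X,k)=F^{k}(X)$ (Lemma \ref{lem:SemiGroup}). The plan is to iterate equation \eqref{eq:ZubovEqn1} along this orbit, which stays in $\CompDOA$ by Lemma \ref{lem:compDOA_inv}. The iterated identity expresses $1-w(X)$ as $1-w(X_N)$ times a strictly positive factor. Letting $N\to\infty$ and using the assumed limit condition should then force $1-w(X)>0$.

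\textbf{Step 1 (one-step identity).} Rewrite \eqref{eq:ZubovEqn1} as
\[
1-w(X)=\bigl(1-\xi(X)\bigr)\bigl(1-w(F(X))\bigr)=\exp(-\Psi(X))\bigl(1-w(F(X))\bigr).
\]
Since $X\in\CompDOA\subseteq\mathcal{K}(\mathcal{X})$, the quantity $\Psi(X)$ is finite: $\gamma$ and $\alpha$ are continuous on $\mathcal{X}$, so their suprema over the compact set $X$ are finite. Hence $\exp(-\Psi(X))\in\intoc{0,1}$ is strictly positive.

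\textbf{Step 2 (iteration).} Applying Step 1 to $X_0,\dots,X_{N-1}$ gives
\[
1-w(X)=\exp\Bigl(-\sum_{k=0}^{N-1}\Psi(X_k)\Bigr)\bigl(1-w(X_N)\bigr),\qquad N\in\mathbb{N}.
\]

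\textbf{Step 3 (limit).} By Lemma \ref{lem:CompactSubsetsOfDOUA}, $\dha(X_N,\mathcal{A})\to 0$, so the hypothesis yields $w(X_N)\to 0$. Hence $1-w(X_N)\to 1$. The partial sums $\sum_{k=0}^{N-1}\Psi(X_k)$ increase to $\mathcal{V}(X)$, which is finite by Theorem \ref{thm:VCts} (or the lemma stating $X\in\CompDOA\Leftrightarrow\mathcal{V}(X)<\infty$). Therefore the exponential factor converges to $\exp(-\mathcal{V}(X))>0$.

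The left-hand side of the identity in Step 2 does not depend on $N$. Passing to the limit gives $1-w(X)=\exp(-\mathcal{V}(X))>0$, i.e.\ $w(X)<1$.

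This argument in fact shows $w=\mathcal{W}$ on $\CompDOA$, which foreshadows uniqueness. The only delicate point is ensuring that the exponential factor stays bounded away from zero in the limit; this is exactly what the finiteness of $\mathcal{V}$ on $\CompDOA$ provides. Without it, one would obtain only $1-w(X)\ge 0$.

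Alternatively, one can avoid finiteness of $\mathcal{V}$ with a sign argument. Pick $N$ with $w(X_N)<1$, which exists since $w(X_N)\to0$. Each factor $\exp(-\Psi(X_k))$ is strictly positive, so the product in Step 2 is positive and $1-w(X)>0$ directly.
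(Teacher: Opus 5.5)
Your proof is correct. Your fallback sign argument is essentially the paper's proof. The paper converts \eqref{eq:ZubovEqn1} into the form \eqref{eq:ZubovEqn2} via Theorem~\ref{thm:ZubovEqn2}, which amounts to $1-w(F(X))=\exp(\Psi(X))(1-w(X))$. It then argues by contradiction: since $\beta\ge 0$, $w(X)\ge 1$ forces $w(\mathcal{R}(X,k))\ge 1$ for all $k$, contradicting $w(\mathcal{R}(X,k))\to 0$. That is the contrapositive of your observation that a single $N$ with $w(X_N)<1$ propagates back to $X$ through strictly positive factors.

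Your main route differs: you pass to the limit in the iterated product and invoke finiteness of $\mathcal{V}$ on $\CompDOA$ (Theorem~\ref{thm:VCts}). That uses the $\ell_p$-stability assumption and more machinery than the lemma needs. In return it gives the explicit identity $w=1-\exp(-\mathcal{V})=\mathcal{W}$ on $\CompDOA$. The paper reaches that identity only later, in two steps: it uses this lemma to make $-\ln(1-\mathbf{w})$ well defined and then applies Theorem~\ref{thm:LyapunovEqnUniqueness}. Your computation collapses those two steps into one.
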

\begin{proof}
  Using Theorem \ref{thm:ZubovEqn2}, $w$ satisfies equation \eqref{eq:ZubovEqn2} over $\CompDOA$. Assume that for some  $X\in \CompDOA$, $w(X)\geq 1$. We have 
  $\mathcal{R}(X,k)\subseteq  \tilde{\mathcal{D}}_{\mathcal{A}}^{\mathcal{X}}~\forall k\in \mathbb{Z}_{+},
  $ and $\lim_{k\rightarrow \infty}\dha(\mathcal{R}(X,k),\mathcal{A})=0$. Using equation \eqref{eq:ZubovEqn2}, we have 
  $
w(X)-w(\mathcal{R}(X,1))=\beta(X)(1-w(X))\leq 0$. Hence, $
w(\mathcal{R}(X,1))\geq w(X)\geq 1,$
and by induction, we have $
w(\mathcal{R}(X,k+1))\geq w(\mathcal{R}(X,k))\geq 1
$
for all $k\in \mathbb{Z}_{+}$. Hence, $
\lim_{k\rightarrow\infty}w(\mathcal{R}(X,k))\neq 0,
$
which yields a contradiction as $\lim_{k\rightarrow \infty}w(\mathcal{R}(X,k))$ 
must be zero due to the assumption on $w$, and that completes the proof.
\end{proof}

Now we show the uniqueness result for the Bellman-type equation associated with the function $\mathcal{V}_{\mathcal{X}}$.
\begin{theorem}\label{thm:LyapunovEqnUniqueness}
    Let $\mathbf{v}\colon \CompDOA\rightarrow \mathbb{R}$   satisfy equation \eqref{eq:LyapunovEqn} over   
    $\CompDOA$  and for any  compact  $X$
    $
    \lim_{k\rightarrow \infty} \dha(\mathcal{R}(X,k),\mathcal{A})=0\Rightarrow \lim_{k\rightarrow \infty} \mathbf{v}(\mathcal{R}(X,k))=0
    $. Then,
    $
\mathbf{v}(X)=\mathcal{V}(X)$ for all $X\in \CompDOA.
    $
\end{theorem}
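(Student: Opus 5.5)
The plan is to telescope the functional equation \eqref{eq:LyapunovEqn} along the reachable-set sequence and then use the limit hypothesis to kill the tail. Fix $X\in\CompDOA$. By Lemma \ref{lem:compDOA_inv} and induction, $\mathcal{R}(X,k)=F^{k}(X)\in\CompDOA$ for every $k\in\mathbb{Z}_{+}$ (Lemma \ref{lem:SemiGroup}). Hence equation \eqref{eq:LyapunovEqn} may be applied to $\mathbf{v}$ at each $\mathcal{R}(X,k)$. Since $\mathbf{v}$ is real-valued, and $\Psi(\mathcal{R}(X,k))$ is finite because $\mathcal{R}(X,k)\in\mathcal{K}(\mathcal{X})$ by Lemma \ref{lem:CompactSubsetsOfDOUA}, the subtractions below are legitimate. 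Iterating $\mathbf{v}(\mathcal{R}(X,j))=\Psi(\mathcal{R}(X,j))+\mathbf{v}(\mathcal{R}(X,j+1))$ for $j\in\intcc{0;K-1}$ gives, for every $K\in\mathbb{N}$,
\[
\mathbf{v}(X)=\sum_{j=0}^{K-1}\Psi(\mathcal{R}(X,j))+\mathbf{v}(\mathcal{R}(X,K)).
\]

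Next I let $K\to\infty$. By Lemma \ref{lem:CompactSubsetsOfDOUA}, $\lim_{k\to\infty}\dha(\mathcal{R}(X,k),\mathcal{A})=0$, so the hypothesis on $\mathbf{v}$ gives $\mathbf{v}(\mathcal{R}(X,K))\to0$. The partial sums on the right are nondecreasing in $K$ because $\Psi\ge0$. They converge to $\mathcal{V}(X)$, which is finite by Theorem \ref{thm:VCts}. Passing to the limit yields $\mathbf{v}(X)=\mathcal{V}(X)$.

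The same argument applied to $\mathcal{V}$ itself, using Theorem \ref{Thm:LyapunovEquation}, shows $\mathcal{V}(\mathcal{R}(X,K))=\sum_{j\ge K}\Psi(\mathcal{R}(X,j))\to0$. This is only a consistency check showing that $\mathcal{V}$ lies in the uniqueness class; it is not needed for the proof.

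The only delicate point is making sure every set in the telescoping chain stays inside $\CompDOA$, where $\mathbf{v}$ is defined and satisfies the equation. This is exactly what the invariance Lemma \ref{lem:compDOA_inv} supplies. Finiteness of each $\Psi$ term, which is needed to avoid $\infty-\infty$, follows from the reachable sets being compact subsets of $\mathcal{X}$ together with the continuity of $\alpha$ and $\gamma$ there.
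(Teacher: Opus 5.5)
Your proof is correct and is essentially the paper's argument. You use invariance of $\CompDOA$ under $F$ to keep every $\mathcal{R}(X,k)$ in the domain of $\mathbf{v}$, telescope the equation to $\mathbf{v}(X)-\mathbf{v}(\mathcal{R}(X,K))=\sum_{j=0}^{K-1}\Psi(\mathcal{R}(X,j))$, and let the limit hypothesis kill the tail. (Minor remark: citing finiteness of $\mathcal{V}(X)$ is not even needed, since the partial sums equal $\mathbf{v}(X)-\mathbf{v}(\mathcal{R}(X,K))$, which converges to the real number $\mathbf{v}(X)$.)
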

\begin{proof}
 Let $X\in \CompDOA$. We have, using Lemmas \ref{lem:ReachableSetMap} and \ref{lem:compDOA_inv}, 
    $\mathcal{R}(X,k)\subseteq \tilde{\mathcal{D}}_{\mathcal{A}}^{\mathcal{X}}~\forall k\in \mathbb{Z}_{+}.
    $
 Let $k\in \mathbb{N}$. We consequently have 
 $\mathbf{v}(X)-\mathbf{v}(\mathcal{R}(X,k))=\sum_{j=0}^{k-1}(\mathbf{v}(\mathcal{R}(X,j))-\mathbf{v}(\mathcal{R}(X,j+1))) =\sum_{j=0}^{k-1}\Psi(\mathcal{R}(X,j))$.
As $\lim_{k\rightarrow \infty}\dha(\mathcal{R}(X,k),\mathcal{A})=0$, it follows by assumption that $\lim_{k\rightarrow \infty}\mathbf{v}(\mathcal{R}(X,k))=0$. We also have  $\mathcal{V}(X)<\infty$ ($\sum_{j=0}^{k-1}\Psi(\mathcal{R}(X,j))$ converges as $X \in \CompDOA$), then taking the limit as $k\rightarrow \infty$ in the both sides of the above equation results in
    $
     \mathbf{v}(X)=\sum_{j=0}^{\infty}\Psi(\mathcal{R}(X,j))=\mathcal{V}(X).
    $
\end{proof}

Now, we introduce the uniqueness result for the equation associated with the value function $\mathcal{W}_{\mathcal{X}}$.
\begin{theorem}
    Let $\mathbf{w}\colon \Comp\rightarrow \mathbb{R}$ be a bounded function  satisfying 
    equation \eqref{eq:ZubovEqn1} over $\Comp$ and for any  compact set $X$,   $
  \lim_{k\rightarrow \infty} \dha(\mathcal{R}(X,k),\mathcal{A})=0\Rightarrow \lim_{k\rightarrow \infty} \mathbf{w}(\mathcal{R}(X,k))=0
    $.  
Then, $\mathbf{w}=\mathcal{W}$.
\end{theorem}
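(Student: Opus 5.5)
I will split $\Comp$ into the two regions $\CompDOA$ and its complement $\Comp\setminus\CompDOA$, and show that $\mathbf{w}=\mathcal{W}$ on each.

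\emph{Inside the DOUA.} Fix $X\in\CompDOA$. By Lemma~\ref{lem:compDOA_inv}, every reachable set $\mathcal{R}(X,k)$ lies in $\CompDOA\subseteq\mathcal{K}(\mathcal{X})$, so $\Psi(\mathcal{R}(X,k))<\infty$ for all $k$. Lemma~\ref{lem:boundedness}, applied to the restriction of $\mathbf{w}$, gives $\mathbf{w}(\mathcal{R}(X,k))<1$ for all $k$. Define
\[
\mathbf{v}(Y)\defas -\ln(1-\mathbf{w}(Y)),\qquad Y\in\CompDOA,
\]
which is finite and real-valued. Rewrite \eqref{eq:ZubovEqn1} as $1-\mathbf{w}(Y)=\exp(-\Psi(Y))(1-\mathbf{w}(F(Y)))$. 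Taking logarithms yields $\mathbf{v}(Y)=\Psi(Y)+\mathbf{v}(F(Y))$, which is \eqref{eq:LyapunovEqn} on $\CompDOA$. Since $\ln$ is continuous, $\mathbf{w}(\mathcal{R}(X,k))\to0$ implies $\mathbf{v}(\mathcal{R}(X,k))\to0$, so the limit hypothesis transfers. Theorem~\ref{thm:LyapunovEqnUniqueness} then gives $\mathbf{v}=\mathcal{V}$ on $\CompDOA$, hence $\mathbf{w}=1-\exp(-\mathcal{V})=\mathcal{W}$ there.

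\emph{Outside the DOUA.} Let $X\in\Comp\setminus\CompDOA$, so $\mathcal{V}(X)=\infty$ and $\mathcal{W}(X)=1$ by the lemma characterizing $\CompDOA$ via finiteness of $\mathcal{V}$. We must show $\mathbf{w}(X)=1$. Set $X_j=\mathcal{R}(X,j)$ and $e_j=1-\mathbf{w}(X_j)$. Equation \eqref{eq:ZubovEqn1} gives $e_j=\exp(-\Psi(X_j))\,e_{j+1}$, and iterating,
\[
e_0=\exp\Bigl(-\sum_{j=0}^{k-1}\Psi(X_j)\Bigr)e_k .
\]
Boundedness of $\mathbf{w}$ keeps $|e_k|$ bounded by some $C$. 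Two cases arise.

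\emph{Case 1:} some $\Psi(X_j)=\infty$, which happens when a reachable set leaves $\mathcal{X}$ and is not contained in $\mathcal{A}$. Then the factor $\exp(-\Psi(X_j))$ vanishes, so $e_0=0$ directly.

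\emph{Case 2:} all $\Psi(X_j)$ are finite. Since $\mathcal{V}(X)=\sum_j\Psi(X_j)=\infty$, the exponential factor tends to $0$, and $|e_0|\le C\exp(-\sum_{j<k}\Psi(X_j))\to0$.

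In either case $\mathbf{w}(X)=1=\mathcal{W}(X)$. This step is exactly where boundedness of $\mathbf{w}$ is needed. The limit hypothesis is not needed here, and it may in fact fail to apply because the reachable sets need not converge to $\mathcal{A}$.

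\emph{Main obstacle.} The delicate point is the interplay of infinite values with the conventions $\exp(-\infty)=0$ and $\Psi$ possibly infinite. One subcase needs care: a set not contained in $\mathcal{X}$ but contained in $\mathcal{A}$ is impossible, since $\mathcal{A}\subseteq\mathcal{X}$, so $\Psi=\infty$ is reached exactly when a reachable set leaves $\mathcal{X}$. A second point is to verify that Lemma~\ref{lem:boundedness} applies to $\mathbf{w}$ restricted to $\CompDOA$. This requires $F$ to map $\CompDOA$ into itself, which is Lemma~\ref{lem:compDOA_inv}, so that the equation holds within the domain. With these checks, the logarithmic transformation reducing the problem to Theorem~\ref{thm:LyapunovEqnUniqueness} is legitimate.
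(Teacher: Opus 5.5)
Your proof is correct. On $\CompDOA$ it matches the paper's argument exactly: Lemma~\ref{lem:boundedness}, the transform $\mathbf{v}=-\ln(1-\mathbf{w})$, then Theorem~\ref{thm:LyapunovEqnUniqueness}.

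Off $\CompDOA$ you take a different and tidier route. The paper splits the complement into three cases:
\begin{itemize}
\item sets meeting $\mathbb{R}^{n}\setminus\mathcal{X}$, where $\xi=1$ forces $\mathbf{w}=1$;
\item sets whose reachable sets leave $\mathcal{X}$ or enter the DOUA at some finite step, handled by backward induction on the difference $\mathbf{w}-\mathcal{W}$;
\item sets whose reachable sets stay in $\mathcal{X}$ without converging. Here it extracts a uniform gap $\dha(\mathcal{R}(X,k),\mathcal{A})\ge\theta$, hence a per-step bound $\Psi\ge\underline{\gamma}\,\underline{\alpha}\,\theta^{\bar p}$, and derives exponential forward growth of $\mathbf{w}-\mathcal{W}$, contradicting boundedness.
\end{itemize}

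You instead compare $\mathbf{w}$ with the constant $1$. You use the single telescoped identity $1-\mathbf{w}(X)=\exp\bigl(-\sum_{j<k}\Psi(X_j)\bigr)\bigl(1-\mathbf{w}(X_k)\bigr)$ together with divergence of $\sum_j\Psi(X_j)=\mathcal{V}(X)$. This covers all three of the paper's cases at once and needs no uniform $\theta$. It also makes clear that boundedness is used only to control the factor $1-\mathbf{w}(X_k)$.

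The cost is that you rely on $\mathcal{V}(X)=\infty$ for every compact $X\notin\CompDOA$. Only the easy direction of the paper's characterization lemma is needed, and it follows at once from Lemma~\ref{lem:Monotonicity} and Theorem~\ref{thm:DOASublevel}: $\mathcal{V}(\{x\})\le\mathcal{V}(X)<\infty$ for each $x\in X$. You should state this explicitly rather than cite the lemma loosely. The paper's route avoids that lemma in the non-converging case and gives an explicit exponential rate, but that extra information is not needed for uniqueness.
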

\begin{proof}
   First note that the difference of two solutions, $\mathbf{w}_{1}$ and $\mathbf{w}_{2}$, to equation \eqref{eq:ZubovEqn1} satisfies, for $X\in \Comp$,
\begin{equation}\label{eq:difference1}
    \mathbf{w}_{1}(X)-\mathbf{w}_{2}(X)=(1-\xi(X))(\mathbf{w}_{1}(F(X))-\mathbf{w}_{2}(F(X))).
\end{equation}
If $X\in \Comp$ satisfies $X\cap (\mathbb{R}^{n}\setminus \mathcal{X})\neq \emptyset$, then we have $\xi(X)=1$, implying, using equation \eqref{eq:ZubovEqn1}, $\mathbf{w}(X)=1=\mathcal{W}(X)$. 

Over  $\CompDOA$, $\mathbf{v}(\cdot)\defas -\ln(1-\mathbf{w}(\cdot))$ is well-defined due to Lemma \ref{lem:boundedness}, satisfying equation \eqref{eq:LyapunovEqn}. Then, using Theorem \ref{thm:LyapunovEqnUniqueness},  it follows that $\mathbf{v}(X)=\mathcal{V}(X)$. Hence $\mathbf{w}(X)=\mathcal{W}(X)$ for all   $X\in \CompDOA$.

Let $X\in \mathcal{K}(\mathcal{X})$ be such that $\mathcal{R}(X,j)\cap \mathbb{R}^{n}\setminus \mathcal{X}\neq \emptyset
$
or
$\mathcal{R}(X,j)\subseteq \mathcal{ D}_{\mathcal{A}}^{\mathcal{X}}
$
for some $j\in \mathbb{N}$ and $
\mathcal{R}(X,k)\subseteq \mathcal{X}~\forall k\in \intcc{0;j-1}.
$
Define
$
\Delta_{w}(\cdot)\defas \mathbf{w}(\cdot)-\mathcal{W}(\cdot).
$
Then, using equation \eqref{eq:difference1},  
\begin{align*}
\Delta_{w}(\mathcal{R}(X,j-1))&=(1-\xi(\mathcal{R}(X,j-1)))(\Delta_{w}(\mathcal{R}(X,j)))=0,
\end{align*}
as $\Delta_{w}(\mathcal{R}(X,j))=0$, and by an inductive argument, we have 
$
\Delta_{w}(\mathcal{R}(X,k))=0~\forall k\in \intcc{0;j},
$
implying $\mathbf{w}(X)=\mathcal{W}(X)$. 

 Now, let $X\in\mathcal{K}( \mathcal{X})$ be such that $\mathcal{R}(X,k)\subseteq  \mathcal{X}$
 for all $k\in \mathbb{Z}_{+}$ and $\lim_{k\rightarrow \infty}\dha(\mathcal{R}(X,k),\mathcal{A})\neq 0$. Obviously, 
 $
 \mathcal{R}(X,k)\not\subseteq\tilde{\mathcal{D}}_{\mathcal{A}}^{\mathcal{X}}
 $
 for all $k\in \mathbb{Z}_{+}$.  
Then it follows from $\mathcal{A}\in \CompDOA$ that there exists $\theta>0$ such that $\dha(\mathcal{R}(X,k),\mathcal{A})\geq \theta~\forall k\in \mathbb{Z}_{+}$. Assume $\mathbf{w}(X)\neq \mathcal{W}(X)$. Note the difference of two solutions, $\mathbf{w}_{1}$ and $\mathbf{w}_{2}$, to equation \eqref{eq:ZubovEqn1} over $\mathcal{X}$ (which are also solutions to \eqref{eq:ZubovEqn2} over $\mathcal{X}$ using Theorem \ref{thm:ZubovEqn2}) satisfies, for $Y\in \mathcal{K}(\mathcal{X})$,
\begin{equation}\label{eq:difference2}
    \mathbf{w}_{1}(F(Y))-\mathbf{w}_{2}(F(Y))=(1+\beta(Y))(\mathbf{w}_{1}(Y)-\mathbf{w}_{2}(Y)).
\end{equation} 

It then follows that, for all $j\in \mathbb{N}$,   
$$
\Delta_{w}(\mathcal{R}(X,j))=\prod_{k=0}^{j-1}(1+\beta(\mathcal{R}(X,k)))\Delta_{w}(X). 
$$
For all $k \in \intcc{0;j-1}$, we have 
\begin{align*}
1+\beta(\mathcal{R}(X,k))=&\exp(\Psi(\mathcal{R}(X,k)))\\
&\geq \exp(\underline{\gamma}\underline{\alpha}\dha(\mathcal{R}(X,k),\mathcal{A})^{\bar{p}})\\
&\geq  \exp(\underline{\gamma}\underline{\alpha}\theta^{\bar{p}}).
\end{align*}
Hence,
  $$
\abs{\Delta_{w}(\mathcal{R}(X,j))}\geq \exp(j\underline{\gamma}\underline{\alpha}\theta^{\bar{p}})\abs{\Delta_{w}(X)}.
$$
As $\lim_{ j\rightarrow\infty}\exp(j\underline{\gamma}\underline{\alpha}\theta^{\bar{p}})=\infty$, it then follows that for each $M>0$, there exists a compact $Z\subseteq  \mathcal{X}$ (which corresponds to the image of $\mathcal{R}(X,\cdot)$ at some time step) such that  $\mathbf{w}(Z)>M+\mathcal{W}(Z)\geq M-1
$ or 
$\mathbf{w}(Z)<-M+\mathcal{W}(Z)\leq 
-(M-1),
$
which is equivalent to $\abs{w(Z)}> M-1$. Hence, $\mathbf{w}$ is unbounded over $\mathcal{K}(\mathcal{X})$, a contradiction.
\end{proof}
We conclude this section with the following important property:
\begin{theorem}\label{eq:ValueFunctionsAreLyapunov}
The functions $\mathcal{V}$ and
$\mathcal{W}$ are robust  Lyapunov functions in the following sense: they satisfy the conditions
\begin{align*}
  \mathcal{V}(\{x\}),\mathcal{W}(\{x\})&>0 ~\forall x\not\in \mathcal{A},\\ 
\mathcal{V}(\{x\})=\mathcal{W}(\{x\})&=0~\forall x\in \mathcal{A},\\
\mathcal{V}(\{x\})-\sup_{w\in W}\mathcal{V}(\{f(x,w)\})&>0 ~\forall x \in \tilde{\mathcal{D}}_{\mathcal{A}}^{\mathcal{X}}\setminus \mathcal{A}, \\
\mathcal{W}(\{x\})-\sup_{w\in W}\mathcal{W}(\{f(x,w)\})&>0~\forall x \in \tilde{\mathcal{D}}_{\mathcal{A}}^{\mathcal{X}}\setminus \mathcal{A}.
\end{align*}
\end{theorem}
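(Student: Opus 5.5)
The first two conditions follow from the positive definiteness lemma applied to singletons. For $x\notin\mathcal{A}$, the set $\{x\}$ is a nonempty compact set not contained in $\mathcal{A}$. For $x\in\mathcal{A}$, we have $\{x\}\subseteq\mathcal{A}$, so robust invariance gives $\mathcal{R}(\{x\},k)\subseteq\mathcal{A}$ for all $k$. Then \eqref{eq:AlphaBounds} forces $\sup\alpha=0$ on each reachable set, hence $\Psi=0$ term by term. The second condition can also be read off directly from this.

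For the decrease condition on $\mathcal{V}$, fix $x\in\tilde{\mathcal{D}}_{\mathcal{A}}^{\mathcal{X}}\setminus\mathcal{A}$. By Theorem~\ref{thm:DOASublevel}, $\mathcal{V}(\{x\})<\infty$. By Lemma~\ref{lem:compDOA_inv}, $F(\{x\})\in\CompDOA$, so $\mathcal{V}(F(\{x\}))<\infty$ by Theorem~\ref{thm:VCts}. Theorem~\ref{Thm:LyapunovEquation} then gives the genuine (finite) identity
\[
\mathcal{V}(\{x\})-\mathcal{V}(F(\{x\}))=\Psi(\{x\})=\gamma(x)\alpha(x)\geq \underline{\gamma}\,\underline{\alpha}\,\dist(x,\mathcal{A})^{\bar p}>0,
\]
where the last inequality uses \eqref{eq:GammaBound}, \eqref{eq:AlphaBounds}, and the fact that $\dist(x,\mathcal{A})>0$ because $\mathcal{A}$ is closed and $x\notin\mathcal{A}$. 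For each $w\in W$ we have $\{f(x,w)\}\subseteq F(\{x\})$, so Lemma~\ref{lem:Monotonicity} gives $\mathcal{V}(\{f(x,w)\})\le\mathcal{V}(F(\{x\}))$. Taking the supremum over $w$ yields $\sup_{w}\mathcal{V}(\{f(x,w)\})\le\mathcal{V}(F(\{x\}))<\mathcal{V}(\{x\})$, which is the strict decrease.

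For $\mathcal{W}$, I would use that $t\mapsto 1-\exp(-t)$ is strictly increasing on $\mathbb{R}_+$. Monotonicity gives $\sup_w\mathcal{W}(\{f(x,w)\})\le\mathcal{W}(F(\{x\}))$. Since $\mathcal{V}(F(\{x\}))<\mathcal{V}(\{x\})$ are both finite, strict monotonicity of the map gives $\mathcal{W}(F(\{x\}))<\mathcal{W}(\{x\})$. Alternatively, Theorem~\ref{thm:ZubovEqn1} gives the difference $\xi(\{x\})(1-\mathcal{W}(F(\{x\})))$, which is positive because $\Psi(\{x\})>0$ and $\mathcal{W}(F(\{x\}))<1$.

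The only delicate point is ensuring finiteness, so that subtraction is legitimate and strictness is not lost in an $\infty=\infty$ situation. This is exactly what the invariance lemma together with Theorem~\ref{thm:VCts} provides. The supremum over the possibly infinite set $W$ causes no difficulty, because the bound passes through the single set $F(\{x\})$ rather than through individual disturbances.
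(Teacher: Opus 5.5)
Your proof is correct and follows essentially the same route as the paper. Monotonicity under set inclusion bounds $\sup_{w\in W}\mathcal{V}(\{f(x,w)\})$ by $\mathcal{V}(F(\{x\}))$, the Bellman identity gives $\mathcal{V}(\{x\})-\mathcal{V}(F(\{x\}))=\Psi(\{x\})>0$, and the same comparison transfers to $\mathcal{W}$ through $t\mapsto 1-\exp(-t)$. Your explicit finiteness check is a welcome addition that the paper leaves implicit; it also follows directly from the Bellman identity, since $\mathcal{V}(\{x\})<\infty$ and both summands are nonnegative.
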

\begin{proof}
The positive definiteness w.r.t. $\mathcal{A}$ follows immediately from the definitions. 

Let $x \in \tilde{\mathcal{D}}_{\mathcal{A}}^{\mathcal{X}}\setminus \mathcal{A}$.  using the monotonicity of of   $\mathcal{V}$ and $\mathcal{W}$ w.r.t. set inclusion (Lemma \ref{lem:Monotonicity}), equation \eqref{eq:LyapunovEqn}, the positive definiteness of $\Psi(\{\cdot\})$ w.r.t. $\mathcal{A}$, and the definition of $\mathcal{W}$, 
\begin{align*}
\mathcal{V}(\{x\})-\sup_{w\in W}\mathcal{V}(\{f(x,w)\}) \geq& \mathcal{V}(\{x\})-\mathcal{V}(F(\{x\}))\\
&=\Psi(\{x\})>0,
\end{align*}
and 
\begin{align*}
\mathcal{W}(\{x\})-\sup_{w\in W}\mathcal{W}(\{f(x,w)\}) \geq& \mathcal{W}(\{x\})-\mathcal{W}(F(\{x\}))\\
&=1-\exp(-\mathcal{V}(\{x\}))\\
&-(1-\exp(-(\mathcal{V}(\{x\})\\
&-\Psi(\{x\})))>0.
\end{align*}
\end{proof}

\section{DOA estimation}\label{sec:VFApproximation}
In the previous sections, we introduced novel value functions defined on metric spaces of compact sets, which characterize the safe robust DOA. 
We now aim to employ these value functions to derive constructive and practically meaningful estimates of the safe robust DOA via means of NN learning.

\subsection{Finite-horizon approximation of the value functions}
\label{sec:ReachSetApproximation}

Within the proposed learning framework, it is essential to evaluate the value functions 
$\mathcal{V}$ and $\mathcal{W}$ over singleton sets, at least approximately. 
Let $N_{s} \in \mathbb{N}$ be a user-defined truncation horizon specifying the maximum number of steps. 
We introduce the finite-horizon approximation
\[
\mathcal{V}_{N_{s}}(X) \defas 
\sum_{k=0}^{N_{s}} \Psi\big(\mathcal{R}(X,k)\big),
\qquad X \in \Comp.
\]

Since the reachable sets $\mathcal{R}(X,k)$, $k\in \intcc{0;N_{s}}$, cannot in general be computed exactly, 
we employ approximate representations. 
Specifically, we approximate the family $\{\mathcal{R}(X,k)\}_{k\in \intcc{0;N_{s}}}$ by a sequence of sets 
$\{\tilde{\mathcal{R}}(X,k)\}_{k\in \intcc{0;N_{s}}}$ satisfying 
$\mathcal{R}(X,k) \approx \tilde{\mathcal{R}}(X,k)$ for all $k \in \intcc{0;N_{s}}$. 

The literature provides a wide range of approximation schemes for discrete-time systems 
(see, e.g., \cite{kuhn1998rigorously,kieffer2002guaranteed,alamo2005guaranteed,yang2020accurate} 
and the references therein). 
In general, higher approximation accuracy is achieved at the expense of increased computational complexity. 
Moreover, the chosen approximation must allow for tractable evaluation of the function 
$\Psi$ defined in \eqref{eq:Psi}. 
Again, in our setting it suffices to approximate reachable sets 
for singleton initial sets of the form $X=\{x\}$ with $x \in \mathbb{R}^{n}$. 

In Section~\ref{sec:NumericalExamples}, we adopt a trajectory-based approximation 
that balances accuracy and computational tractability. 
Let $N_{\mathrm{traj}} \in \mathbb{N}$ denote the number of trajectories used to approximate the reachable sets, 
and let $\pi^{(j)} \colon \mathbb{Z}_{+} \to W$, $j\in \intcc{1;N_{\mathrm{traj}}}$, 
be randomly generated disturbance signals. 
We define
\[
\tilde{\mathcal{R}}(\{x\},k)
=
\bigcup_{j=1}^{N_{\mathrm{traj}}}
\left\{ \varphi_{x}^{\pi^{(j)}}(k) \right\},
\]
that is, the reachable set at time $k$ is approximated by the finite collection of trajectory values induced by the disturbance sequences $\pi^{(j)}$. 
This construction makes the evaluation of $\Psi$ significantly more tractable in practice. 
The accuracy of the approximation depends on using a sufficiently large number of disturbance realizations, i.e., $N_{\mathrm{traj}}$ should be chosen large enough.

Based on the reachable-set approximations, we define $\tilde{\mathcal{V}}_{N_{s}}(X) \defas 
\sum_{k=0}^{N_{s}} \Psi\big(\tilde{\mathcal{R}}(X,k)\big)$,~ $X \in \Comp$. Finally, we approximate $\mathcal{W}$ by
$
\tilde{\mathcal{W}}_{N_{s}}(X)
\defas 
1 - \exp\!\big(-\tilde{\mathcal{V}}_{N_{s}}(X)\big)$,
$X \in \Comp$.

\subsection{Physics-informed and set-based neural network learning}
\label{sec:PINNLearning}
Since the value function $\mathcal{W}$ 
is a robust Lyapunov function when restricted to singleton sets 
(Theorem~\ref{eq:ValueFunctionsAreLyapunov}), 
we approximate $\mathcal{W}(\{x\})$ for $x \in \mathbb{R}^n$. 
The resulting approximation serves as a robust Lyapunov function candidate, 
which can subsequently be used to estimate the safe robust domain of attraction (DOA).

In this paper, we propose a physics-informed learning framework 
that incorporates the set-based Bellman (Zubov-type) equation~\eqref{eq:ZubovEqn1}. 
A principal challenge in incorporating \eqref{eq:ZubovEqn1} into the training procedure 
is the set-valued nature of the term $F(\{x\})$, 
which arises due to the presence of disturbance.

We therefore impose the following structural assumption.

\begin{assumption}\label{assumption:embedding}
Define
\[
\mathcal{S}
:=
\big\{ \{x\} : x \in \mathbb{R}^n \big\}
\;\bigcup\;
\big\{ F(\{x\}) : x \in \mathbb{R}^n \big\}.
\]
Assume there exists a closed-form embedding
\[
\mathcal{T} : \mathcal{S} \to \mathbb{R}^L,
\]
for some fixed integer $L \in \mathbb{N}$, 
such that $\mathcal{T}$ is injective on $\mathcal{S}$. 
That is, each singleton set $\{x\}$ and each reachable set $F(\{x\})$ 
admits a unique representation in $\mathbb{R}^L$.
\end{assumption}

\begin{remark}
Assumption~\ref{assumption:embedding} is satisfied in several important settings.

The simplest case arises when the disturbance set is a singleton. 
In this situation, $F(\{x\})$ is itself a singleton, and the embedding reduces to 
$\mathcal{T}(\{x\}) = x$.

More generally, the assumption holds whenever there exists a parameterized class of sets 
(e.g., polytopes, polynomial zonotopes, or fixed-template sublevel sets) 
with a fixed number of parameters $L$ 
that can represent both $\{x\}$ and $F(\{x\})$.

For instance, if $F(\{x\})$ is a polytope described by a fixed number of linear inequalities, 
then $\mathcal{T}$ may be defined using the corresponding inequality coefficients.

A particularly relevant case occurs when the disturbance set $W$ 
is a hyper-interval set and distinct disturbance components 
affect different components of the dynamics. 
In this situation, $F(\{x\})$ is itself a hyper-interval, and one may define
\[
\mathcal{T}(\Hintcc{a,b})
=
\begin{bmatrix}
\mathrm{center}(\Hintcc{a,b}) \\
\mathrm{radius}(\Hintcc{a,b})
\end{bmatrix},
\]
where $\Hintcc{a,b}$ denotes a closed interval.

The assumption is also satisfied when the system dynamics are affine in the disturbance 
for each fixed $x$, and the disturbance set $W$ is polytopic.

If the assumption does not hold, one may replace $F$ by a continuous outer approximation 
$\tilde{F}$ that satisfies the embedding condition and fulfills
\[
F(\{x\}) \subseteq \tilde{F}(\{x\}), 
\quad x \in \mathbb{R}^{n}.
\]
While this ensures applicability of the framework, it may introduce additional conservatism in the resulting DOA estimates.
\end{remark}

Let
\[
\tilde{\omega}_{\mathrm{nn},\theta}(\cdot) : \mathbb{R}^L \to \mathbb{R}
\]
be a feedforward neural network with a fixed architecture 
(i.e., a fixed number of hidden layers and neurons), 
whose weight matrices and bias vectors are collected in the parameter vector $\theta$. 
We determine $\theta$ so that 
\[
\tilde{\omega}_{\mathrm{nn},\theta}(\mathcal{T}(\cdot))
\]
approximates the value function over a compact learning domain 
$\mathbb{X}_{\ell} \subset \mathbb{R}^n$ satisfying
\[
\mathcal{X} \cap \mathbb{X}_{\ell} \neq \emptyset,
\qquad
\mathcal{A} \subseteq \mathbb{X}_{\ell}.
\]

The training problem consists of minimizing the composite loss functional
\begin{equation}
\label{eq:lossV}
\mathcal{L}(\theta)
=
\lambda_{\mathrm{d}} \mathcal{L}_{\mathrm{d}}(\theta)
+
\lambda_{\mathrm{pi}} \mathcal{L}_{\mathrm{pi}}(\theta),
\end{equation}
where $\lambda_{\mathrm{d}}, \lambda_{\mathrm{pi}} > 0$ are weighting coefficients and
$$
\mathcal{L}_{\mathrm{d}}(\theta)
\defas
\frac{1}{N_{\mathrm{d}}}
\sum_{i=1}^{N_{\mathrm{d}}}
J_{\mathrm{d},\theta}\bigl(\{z_{\mathrm{d}}^{(i)}\}\bigr),
$$

$$
\mathcal{L}_{\mathrm{pi}}(\theta)
\defas
\frac{1}{N_{\mathrm{pi}}}
\sum_{i=1}^{N_{\mathrm{pi}}}
J_{\mathrm{pi},\theta}\bigl(\{z_{\mathrm{pi}}^{(i)}\}\bigr).
$$
The data-driven residual is defined as
\[
J_{\mathrm{d},\theta}(\{x\})
=
\left(
\tilde{\mathcal{W}}_{N_s}(\{x\})
-
\tilde{\omega}_{\mathrm{nn},\theta}\bigl(\mathcal{T}(\{x\})\bigr)
\right)^2,
\]
which penalizes the discrepancy between the neural network output 
and the $N_s$-step approximation 
$\tilde{\mathcal{W}}_{N_s}$ 
of the ground-truth value function 
$\mathcal{W}$. 

The physics-informed residual is given by
\begin{align*}
J_{\mathrm{pi},\theta}(\{x\})
=&
\Big(
\tilde{\omega}_{\mathrm{nn},\theta}\bigl(\mathcal{T}(\{x\})\bigr)
-
\tilde{\omega}_{\mathrm{nn},\theta}\bigl(\mathcal{T}(F(\{x\}))\bigr)\\
&-
\xi(\{x\})
\bigl(
1
-
\tilde{\omega}_{\mathrm{nn},\theta}\bigl(\mathcal{T}(F(\{x\}))\bigr)
\bigr)
\Big)^2,
\end{align*}
which penalizes violations of the Bellman-type functional equation~\eqref{eq:ZubovEqn1} 
in a residual-minimization sense.

The collocation points 
$\{z_{\mathrm{d}}^{(i)}\}_{i=1}^{N_{\mathrm{d}}} \subset \mathbb{X}_{\ell}$ 
and 
$\{z_{\mathrm{pi}}^{(i)}\}_{i=1}^{N_{\mathrm{pi}}} \subset \mathbb{X}_{\ell}$ 
are independently sampled from $\mathbb{X}_{\ell}$. 
The former are used to fit the approximate ground-truth values 
$\tilde{\mathcal{W}}_{N_s}$, 
while the latter enforce the governing functional equation 
in a physics-informed manner.

Finally, we define the state-space function
\[
\omega_{\mathrm{nn},\theta}(x)
\defas
\tilde{\omega}_{\mathrm{nn},\theta}\bigl(\mathcal{T}(\{x\})\bigr),
\qquad x \in \mathbb{R}^n.
\]
That is, $\omega_{\mathrm{nn},\theta}$ denotes the composition of the embedding 
$\mathcal{T}$ with the trained neural network. 
This function will be regarded as the robust neural Lyapunov function candidate 
and will be employed for safe robust DOA estimation.
\subsection{Verified safe robust ROAs from neural network approximation}
\label{sec:NNVerification}

We now aim to construct a relatively large safe robust ROA 
by leveraging the learned neural network approximation $\omega_{\mathrm{nn}}$.  The proposed verification schemes extend those developed in \cite{serry2025safe}, while accounting for uncertainties, and are conceptually aligned with the approach presented in \cite{lu2024estimating}, while additionally considering state constraints.
For notational simplicity, the parameter dependence on $\theta$ 
is henceforth omitted from $\omega_{\mathrm{nn}}$.
\begin{remark}
The verification framework presented herein is applicable to any function $\omega$ in place of the NN-based Lyapunov function $\omega_{\mathrm{nn}}$. 
Such functions $\omega$ may be obtained through standard Lyapunov analysis or SOS optimization techniques.
\end{remark}
To this end, let   $\mathbb{X}_{v}\subseteq \mathbb{R}^{n}$ be a compact verification domain such that $$ \mathcal{X}\cap  \mathbb{X}_{v}\neq \emptyset,~\mathcal{A}\subseteq \mathbb{X}_{v}.
$$

\subsubsection{Initial ROA}
 DOA verification and estimation typically starts with a small ROA, where we assume we are provided with a parameter $c_{1}\in \mathbb{R}$ and a set $\mathbb{E}_{c_{1}}\subseteq \mathbb{R}^{n}$ satisfying:
 \begin{assumption}\label{assumption:InitialROA}
     The set $\mathbb{E}_{c_{1}}\subseteq \mathbb{R}^{n}$ is a sublevel set of the form:
     $$
\mathbb{E}_{c_{1}}=\{x\in \mathbb{R}^{n}: \nu(x)\leq c_{1}\},
     $$
     where $\nu \colon \mathbb{R}^{n}\rightarrow \mathbb{R}$. Moreover, $\mathbb{E}_{c_{1}}$ is   a safe robust ROA within $\mathcal{X}\cap \mathbb{X}_{v}$. 
 \end{assumption}

 In  Appendix \ref{sec:Ellipsoidal ROAs}, we provide a systematic way to obtain such an ROA for exponentially stable EPs under additional smoothness assumptions on $f$.
\begin{remark}
Obtaining initial ROA estimates for a given RIS is, in general, a highly nontrivial task. For exponentially stable equilibrium points with sufficiently smooth right-hand sides, and for certain polynomially stable equilibrium points, standard Lyapunov analysis can be employed to derive initial estimates (see, e.g., Appendices~\ref{sec:Ellipsoidal ROAs} and~\ref{sec:LyapunovAnalysisPolynomial}). However, for general nonlinear systems and nonsingleton RISs that do not contain attracting equilibrium points, such estimates are significantly more difficult to construct. Developing systematic methods for these broader settings remains an open problem and an active area of research.
\end{remark}

\subsubsection{Enlarging the initial ROA through verification} 
\label{sec:EnlargingEllipse}
We  can enlarge the provided initial safe ROA via verification.
If   for some  $c_{2}\in \mathbb{R}$, with  $c_{2}>c_{1}$, and   some small positive constant $\varepsilon>0$,  we can verify the following: 
\begin{align}\nonumber
&((x,w)\in \mathbb{X}_{v}\times W)\wedge (c_{1} \le \nu(x) \le c_{2})\Rightarrow\\ &(\nu(f(x,w))-\nu(x)  \le -\varepsilon)
\wedge (g(x)<1) \wedge (f(x,w)\in \mathbb{X}_{v}),\label{eq:dVP}
\end{align}
 then:

\begin{lemma} \label{lemma:enlarge_quad}
 The set $$ \mathbb{E}_{c_{2}} := \{x\in \mathbb{X}_{v}:\,\nu(x)\le c_{2}\}
    $$
    is a safe robust ROA of \eqref{eq:System} within $\mathcal{X}\cap \mathbb{X}_{v}$. 
\end{lemma}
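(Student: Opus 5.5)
We need to show three things: $\mathbb{E}_{c_2}\subseteq \mathcal{D}_{\mathcal{A}}^{\mathcal{X}}$, $\mathcal{A}\subseteq\mathbb{E}_{c_2}$, and $\mathbb{E}_{c_2}$ is robustly invariant, with all trajectories staying in $\mathcal{X}\cap\mathbb{X}_v$. Here $g$ is the function describing $\mathcal{X}$ as a strict $1$-sublevel set (Remark \ref{rem:Gamma}), and $\mathbb{E}_{c_1}$ is the initial ROA of Assumption \ref{assumption:InitialROA}. Containment of $\mathcal{A}$ is immediate. Since $c_1<c_2$, $\mathcal{A}\subseteq\mathbb{E}_{c_1}$, and $\mathbb{E}_{c_1}\subseteq\mathcal{X}\cap\mathbb{X}_v$ (as a safe ROA within that set), we get $\mathcal{A}\subseteq\{x\in\mathbb{X}_v:\nu(x)\le c_2\}$.

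\textbf{Invariance.} Take $x\in\mathbb{E}_{c_2}$ and $w\in W$, and split into two cases. If $\nu(x)\le c_1$, then $x\in\mathbb{E}_{c_1}$, and invariance of $\mathbb{E}_{c_1}$ within $\mathcal{X}\cap\mathbb{X}_v$ gives $f(x,w)\in\mathbb{E}_{c_1}\cap\mathbb{X}_v\subseteq\mathbb{E}_{c_2}$. If instead $c_1\le\nu(x)\le c_2$, condition \eqref{eq:dVP} applies (since $x\in\mathbb{X}_v$). It gives $f(x,w)\in\mathbb{X}_v$ and $\nu(f(x,w))\le\nu(x)-\varepsilon<c_2$, so $f(x,w)\in\mathbb{E}_{c_2}$; it also gives $g(x)<1$, i.e.\ $x\in\mathcal{X}$. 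Combining the cases, every point of $\mathbb{E}_{c_2}$ lies in $\mathcal{X}\cap\mathbb{X}_v$, and $\mathbb{E}_{c_2}$ is robustly invariant. By induction, every trajectory $\varphi_x^\pi$ with $x\in\mathbb{E}_{c_2}$ remains in $\mathbb{E}_{c_2}\subseteq\mathcal{X}\cap\mathbb{X}_v$ for all $k$, which is the safety part of membership in $\mathcal{D}_{\mathcal{A}}^{\mathcal{X}}$.

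\textbf{Convergence.} Fix $x\in\mathbb{E}_{c_2}$ and $\pi$. Suppose the trajectory never enters $\mathbb{E}_{c_1}$, meaning $\nu(\varphi_x^\pi(k))>c_1$ for all $k$. Each step then falls in the second case above, so $\nu(\varphi_x^\pi(k))\le\nu(x)-k\varepsilon\le c_2-k\varepsilon$. For $k>(c_2-c_1)/\varepsilon$ this gives $\nu(\varphi_x^\pi(k))<c_1$, a contradiction. Hence there is a finite $K$ with $\varphi_x^\pi(K)\in\mathbb{E}_{c_1}\subseteq\mathcal{D}_{\mathcal{A}}^{\mathcal{X}}$. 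The tail of the trajectory is itself a trajectory from $\varphi_x^\pi(K)$, driven by the shifted signal $\pi(K+\cdot)$, so it converges to $\mathcal{A}$ in $\dist$. Therefore $x\in\mathcal{D}_{\mathcal{A}}^{\mathcal{X}}$.

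\textbf{Main obstacle.} The delicate point is the boundary case $\nu(x)=c_1$ together with the precise meaning of ``ROA within $\mathcal{X}\cap\mathbb{X}_v$''. I take that phrase to mean that the set is contained in $\mathcal{X}\cap\mathbb{X}_v$, contains $\mathcal{A}$, is robustly invariant, and lies in the DOA. With that reading the two cases overlap harmlessly at $\nu(x)=c_1$, since either assumption handles such points. No use of the $\ell_p$ stability machinery is needed, because convergence is inherited from $\mathbb{E}_{c_1}$.
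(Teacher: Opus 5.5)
Your proof is correct and follows essentially the same route as the paper. Like the paper, you split $\mathbb{E}_{c_2}$ into $\mathbb{E}_{c_1}$ and the shell $c_1<\nu\le c_2$, use \eqref{eq:dVP} on the shell for safety and invariance, and use the $\varepsilon$-decrease to force finite-time entry into $\mathbb{E}_{c_1}$, from which convergence is inherited; you also check $\mathcal{A}\subseteq\mathbb{E}_{c_2}$ explicitly, which the paper leaves implicit.
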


\begin{proof}
   Define
$
\mathbb{E}_{c_{2}/c_{1}}
\defas
\bigl\{ x \in \mathbb{X}_{v} \;:\; c_{1} < \nu(x) \le c_{2} \bigr\}.
$
By Assumption \ref{assumption:InitialROA},  $\mathbb{E}_{c_{1}}$ is a safe robust ROA within $\mathcal{X}\cap\mathbb{X}_{v}$.  We have $E_{c_{1}}=\{x\in \mathbb{X}_{v}:\,\nu(x)\le c_{2}\}$, which implies $\mathbb{E}_{c_{2}}=\mathbb{E}_{c_{2}/c_{1}}\bigcup \mathbb{E}_{c_1}$.  By condition \eqref{eq:dVP}, we  have that $\mathbb{E}_{c_{2}/c_{1}}\subseteq \mathcal{X}$ ($g_\mathcal{X}(\cdot)<1$). Therefore, and by definition, $\mathbb{E}_{c_{2}}=\mathbb{E}_{c_{1}}\bigcup\mathbb{E}_{c_{2}/c_{1}}\subseteq \mathcal{X}\cap \mathbb{X}_{v}$.    
Moreover, by  \eqref{eq:dVP}, for all 
$(x,w) \in \mathbb{E}_{c_{2}/c_{1}} \times W$,
$
\nu\bigl(f(x,w)\bigr)
\le
\nu(x) - \varepsilon
\le
c_{2} - \varepsilon
\le
c_{2}$ and $f(x,w)\in \mathbb{X}_{v}$, implying $f(x,w)\in \mathbb{E}_{c_{2}}$. Hence,
$
\mathbb{E}_{c_{2}}
=
\mathbb{E}_{c_{1}}
\bigcup
\mathbb{E}_{c_{2}/c_{1}}
$
is invariant. We now show that any trajectory starting in 
$\mathbb{E}_{c_{2}/c_{1}}$ enters $\mathbb{E}_{c_{1}}$ in finite time. 
Let $x \in \mathbb{E}_{c_{2}/c_{1}}$ and suppose, by contradiction, 
that there exists a disturbance sequence 
$\pi \in W^{\mathbb{Z}_{+}}$ such that
$
\varphi_{x}^{\pi}(k)
\in
\mathbb{E}_{c_{2}/c_{1}}$, $
\forall k \in \mathbb{Z}_{+}$. Then,
$
c_{1}
<
\nu\bigl(\varphi_{x}^{\pi}(k)\bigr)
\le
c_{2},
~
\forall k \in \mathbb{Z}_{+}$. Using condition~\eqref{eq:dVP} inductively, we obtain
$
\nu\bigl(\varphi_{x}^{\pi}(k)\bigr)
\le
\nu(x) - k \varepsilon
\le
c_{2} - k \varepsilon$, $\forall k \in \mathbb{Z}_{+}.
$
Choose any integer
$
k
>
(c_{2} - c_{1})/{\varepsilon}.
$
Then, $
\nu\bigl(\varphi_{x}^{\pi}(k)\bigr)
<
c_{1},
$
which contradicts the assumption that 
$\varphi_{x}^{\pi}(k) \in \mathbb{E}_{c_{2}/c_{1}}$. 
Therefore, every trajectory starting in 
$\mathbb{E}_{c_{2}/c_{1}}$ enters $\mathbb{E}_{c_{1}}$ in finite time. Finally, by Assumption \ref{assumption:InitialROA}, 
all trajectories in $\mathbb{E}_{c_{1}}$ converge to the RIS $\mathcal{A}$. 
Since every trajectory in $\mathbb{E}_{c_{2}}$ enters 
$\mathbb{E}_{c_{1}}$ in finite time, it follows that 
all trajectories in $\mathbb{E}_{c_{2}}$ converge to  $\mathcal{A}$.
\end{proof}

\subsubsection{NN  ROAs via verification} 
\label{sec:enlarge_neural}
The improved safe robust  ROA $\mathbb{E}_{c_{2}}$ can be utilized to produce an even potentially  larger safe robust ROA using the NN approximation $\omega_{\mathrm{nn}}$.  

If we can  verify  for some constants $\omega_{1}$ and $\omega_{2}$, with $\omega_{2}>\omega_{1}$, and some positive constant $\varepsilon>0$, the  following: 
\begin{align}\nonumber
((x, w)\in \mathbb{X}_{v}\times W)\wedge(\omega_{\mathrm{nn}}(x)\le \omega_{1}) \Rightarrow\\
 (\nu(x)\le c_{2}) \wedge( \omega_{\mathrm{nn}}(f(x,w))\le \omega_{2}),\label{eq:c2w2_inclusion}
\end{align}
\begin{align}
\nonumber
((x, w)\in \mathbb{X}_{v}\times W)\wedge(\omega_{1}\le \omega_{\mathrm{nn}}(x) \le \omega_{2})   \Rightarrow\\ \nonumber 
(\omega_{\mathrm{nn}}(f(x,w))-\omega_{\mathrm{nn}}(x)  \le -\varepsilon) \wedge\\ 
(g(x) < 1)\wedge(f(x,w) \in \mathbb{X}_{v})\label{eq:dW},
\end{align}
 then the following holds.

\begin{lemma}
    Suppose that \eqref{eq:c2w2_inclusion} and \eqref{eq:dW} and the conditions in Lemma \ref{lemma:enlarge_quad} hold. Then the set $$\mathbb{W}_{\omega_{2}} := \{x\in \mathbb{X}_{v}:\,\omega_{\mathrm{nn}}(x)\le \omega_{2}\}$$
    is a safe robust  ROA of \eqref{eq:System} within $\mathcal{X}\cap \mathbb{X}_{v}$. 
\end{lemma}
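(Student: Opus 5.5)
The plan mirrors the proof of Lemma~\ref{lemma:enlarge_quad}, with the NN sublevel sets playing the role of the $\nu$-sublevel sets and $\mathbb{E}_{c_{2}}$ serving as the terminal region. I split
\[
\mathbb{W}_{\omega_{2}}=\mathbb{W}_{\omega_{1}}\cup \mathbb{W}_{\omega_{2}/\omega_{1}},
\]
where $\mathbb{W}_{\omega_{1}}\defas\{x\in\mathbb{X}_{v}:\omega_{\mathrm{nn}}(x)\le\omega_{1}\}$ and $\mathbb{W}_{\omega_{2}/\omega_{1}}\defas\{x\in\mathbb{X}_{v}:\omega_{1}<\omega_{\mathrm{nn}}(x)\le\omega_{2}\}$. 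By Lemma~\ref{lemma:enlarge_quad}, $\mathbb{E}_{c_{2}}$ is a safe robust ROA within $\mathcal{X}\cap\mathbb{X}_{v}$. Condition \eqref{eq:c2w2_inclusion} gives $\mathbb{W}_{\omega_{1}}\subseteq\mathbb{E}_{c_{2}}$. Hence every point of $\mathbb{W}_{\omega_{1}}$ lies in $\mathcal{X}\cap\mathbb{X}_{v}$, and every trajectory from it stays safe and converges to $\mathcal{A}$.

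\textbf{Safety.} For $x\in\mathbb{W}_{\omega_{2}/\omega_{1}}$, condition \eqref{eq:dW} gives $g(x)<1$, so $x\in\mathcal{X}$. Thus $\mathbb{W}_{\omega_{2}}\subseteq\mathcal{X}\cap\mathbb{X}_{v}$. Also $\mathcal{A}\subseteq\mathbb{E}_{c_{1}}\subseteq\mathbb{E}_{c_2}$, but I need $\mathcal{A}\subseteq\mathbb{W}_{\omega_2}$. I would obtain this from the invariance argument below, or simply note that the ROA definition only requires containing $\mathcal{A}$. If that containment is not automatic, I would add it as the verified fact $\omega_{\mathrm{nn}}\le\omega_{2}$ on $\mathcal{A}$.

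\textbf{Invariance.} Let $x\in\mathbb{W}_{\omega_{2}}$ and $w\in W$.
\begin{itemize}
\item If $x\in\mathbb{W}_{\omega_{1}}$, then \eqref{eq:c2w2_inclusion} gives $\omega_{\mathrm{nn}}(f(x,w))\le\omega_{2}$. Moreover $f(x,w)\in\mathbb{E}_{c_{2}}\subseteq\mathbb{X}_{v}$ by invariance of $\mathbb{E}_{c_{2}}$. Hence $f(x,w)\in\mathbb{W}_{\omega_{2}}$.
\item If $x\in\mathbb{W}_{\omega_{2}/\omega_{1}}$, then \eqref{eq:dW} gives $\omega_{\mathrm{nn}}(f(x,w))\le\omega_{2}-\varepsilon$ and $f(x,w)\in\mathbb{X}_{v}$.
\end{itemize}

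\textbf{Finite-time entry.} For a trajectory starting in $\mathbb{W}_{\omega_{2}/\omega_{1}}$, suppose it stayed in $\mathbb{W}_{\omega_{2}/\omega_{1}}$ forever. Then $\omega_{\mathrm{nn}}$ would decrease by at least $\varepsilon$ at each step, which contradicts the lower bound $\omega_{1}$ after more than $(\omega_{2}-\omega_{1})/\varepsilon$ steps. Since $\mathbb{W}_{\omega_{2}}$ is invariant, the trajectory must therefore enter $\mathbb{W}_{\omega_{1}}\subseteq\mathbb{E}_{c_{2}}$ in finite time. From there, Lemma~\ref{lemma:enlarge_quad} gives safety and convergence to $\mathcal{A}$.

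The main subtlety is the invariance step for points in $\mathbb{W}_{\omega_{1}}$. Membership of $f(x,w)$ in $\mathbb{X}_{v}$ is not asserted by \eqref{eq:c2w2_inclusion} itself, so it must be borrowed from the invariance of $\mathbb{E}_{c_{2}}$ within $\mathbb{X}_{v}$. This is exactly why the hypotheses of Lemma~\ref{lemma:enlarge_quad} are required.
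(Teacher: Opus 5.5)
Your proposal is correct and follows the paper's proof essentially step for step: the same split $\mathbb{W}_{\omega_{2}}=\mathbb{W}_{\omega_{1}}\cup\mathbb{W}_{\omega_{2}/\omega_{1}}$, the same appeal to the invariance of $\mathbb{E}_{c_{2}}$ (Lemma~\ref{lemma:enlarge_quad}) to get $f(x,w)\in\mathbb{X}_{v}$ for $x\in\mathbb{W}_{\omega_{1}}$, and the same $\varepsilon$-decrease argument for finite-time entry into $\mathbb{W}_{\omega_{1}}\subseteq\mathbb{E}_{c_{2}}$. The paper's proof never addresses $\mathcal{A}\subseteq\mathbb{W}_{\omega_{2}}$, so your flag is legitimate; this containment does not follow from invariance in general (the hypotheses even hold vacuously if $\mathbb{W}_{\omega_{2}}=\emptyset$), though for $\mathcal{A}=\{0_{n}\}$ and $\mathbb{W}_{\omega_{2}}\neq\emptyset$ it follows from closedness of $\mathbb{W}_{\omega_{2}}$ together with invariance and convergence, and otherwise your fallback of verifying $\omega_{\mathrm{nn}}\le\omega_{2}$ on $\mathcal{A}$ is the right fix.
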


\begin{proof}
    Define $\mathbb{W}_{\omega_{1}} := \{x\in \mathbb{X}_{v}:\,\omega_{\mathrm{nn}}(x)\le \omega_{1}\}$ and  $\mathbb{W}_{\omega_{2}/\omega_{1}} := \{x\in \mathbb{X}_{v}:\, \omega_{1} < \omega_{\mathrm{nn}}(x)\le \omega_{2}\}$.  By conditions \eqref{eq:c2w2_inclusion} and \eqref{eq:dW}, and Lemma \ref{lemma:enlarge_quad},
    $\mathbb{W}_{\omega_{1}}\subseteq \mathbb{E}_{c_{2}}\subseteq \mathcal{X}$ and $\mathbb{W}_{\omega_{2}/\omega_{1}}\subseteq \mathcal{X}\cap \mathbb{X}_{v}$, hence $\mathbb{W}_{\omega_{2}}\subseteq \mathcal{X}\cap \mathbb{X}_{v}$ (safety). Let $x \in \mathbb{W}_{\omega_{1}}$, then using condition \eqref{eq:c2w2_inclusion} and Lemma  \ref{lemma:enlarge_quad}, we have $f(x,w)\in \mathbb{X}_{v}$ for all $w\in W$, and $\omega_{\mathrm{nn}}(f(x,w))\leq w_{2}$ for all $w\in W$, hence $f(x,w)\in \mathbb{W}_{\omega_{2}}$ for all $w\in W$. Also, for $x \in \mathbb{W}_{\omega_{2}/\omega_{1}}$, and  using condition \eqref{eq:dW}, we have $f(x,w) \in \mathbb{X}_{v}$ and $\omega_{\mathrm{nn}}(f(x,w))\leq \omega_{2}-\varepsilon\leq \omega_{2}$ for all $w\in W$, i.e.,  $f(x,w)\in \mathbb{W}_{\omega_{2}}$ for all $w\in W$. Therefore,  $\mathbb{W}_{\omega_{2}}=\mathbb{W}_{\omega_{1}}\bigcup \mathbb{W}_{\omega_{2}/\omega_{1}}$ is invariant under the dynamics of \eqref{eq:System}. Using condition \eqref{eq:c2w2_inclusion} and Lemma \ref{lemma:enlarge_quad}, we have, for  $x\in \mathbb{W}_{\omega_{1}}$ and any arbitrary $\pi \in W^{\mathbb{Z}_{+}}$, $x\in \mathbb{E}_{c_{2}}\Rightarrow \varphi_{x}^{\pi}(k)\in \mathbb{E}_{c_{2}}~ \forall k\in \mathbb{Z}_{+},
    $
    and $\lim_{k\rightarrow \infty}\varphi_{x}^{\pi}(k)=0_{n}$ (i.e., convergence to the  RIS $\mathcal{A}$ for any trajectory starting in $\mathbb{W}_{\omega_{1}}$).  Finally, any trajectory starting in $\mathbb{W}_{\omega_{2}/\omega_{1}}$ enters $\mathbb{W}_{\omega_{1}}$   in finite time.  
This follows by an argument identical to that used in the proof of 
Lemma~\ref{lemma:enlarge_quad}.     
\end{proof}
\subsubsection{Computation of 
\label{sec:CandidateParameters}
candidate values for $c_{2}$, $\omega_{1}$, and $\omega_{2}$}

In the previous sections, we assumed the existence of 
$c_{2}$, $\omega_{1}$, and $\omega_{2}$ 
such that conditions \eqref{eq:dVP}, \eqref{eq:c2w2_inclusion}, and \eqref{eq:dW} hold. 
To obtain a large NN-based ROA estimate, the parameters $c_{2}$ and $\omega_{2}$ should be chosen as large as possible. 
We now describe how suitable \emph{candidate} values for these parameters are computed. 
The estimates are obtained by solving optimization problems derived from conditions~\eqref{eq:dVP}, \eqref{eq:c2w2_inclusion}, and~\eqref{eq:dW}. 
The resulting procedure is conceptually similar to that presented in \cite{lu2024estimating}, while additionally incorporating state constraints.

Define
$$
\bar{c}_{2}=\inf\left\{c\in \intco{c_{1},\infty} \middle\vert \begin{array}{c} \exists (x,w)\in \mathbb{X}_{v}\times W~\text{s.t.}\\
\nu(x)=c~\wedge \\ \bigl (\nu(f(x,w))-\nu(x)> -\varepsilon~ \vee\\ f(x,w)\not\in \mathbb{X}_{v}~\vee~ g(x)\geq 1\bigr)\end{array}  \right\}.
$$
By construction, and assuming the well-definedness of $\bar{c}_{2}$,  for any $ c_{2}\in \intoo{c_{1},\bar{c}_{2}}$, condition \eqref{eq:dVP} holds. 
We therefore select the candidate
$
c_{2}=\bar{c}_{2}-\delta \in \intoo{c_{1},\bar{c}_{2}},
$
for sufficiently small $\delta>0$.

Next define
$$
\overline{\omega}_{1}=\inf\left\{\omega\in \mathbb{R} \middle\vert \begin{array}{c} \exists x\in \mathbb{X}_{v}~\text{s.t.}~ \omega_{\mathrm{nn}}(x)=\omega\wedge \\
 \nu(x)> c_{2} \end{array}\right\},
$$
and
$$
\underline{\omega}_{1}=\sup\left\{\omega\in \intcc{0,\overline{\omega}_{1}}\middle\vert \begin{array}{c} \exists (x,w)\in \mathbb{X}_{v}\times W~\text{s.t.}\\ \omega_{\mathrm{nn}}(x)=\omega~\wedge\\
\omega_{\mathrm{nn}}(f(x,w))-\omega_{\mathrm{nn}}(x) > -  \varepsilon  \end{array}\right\},
$$
where we assume $\underline{\omega}_{1}<\overline{\omega}_{1}$. 
For any $\omega_{1}\in \intoo{\underline{\omega}_{1},\overline{\omega}_{1}}$, 
condition \eqref{eq:c2w2_inclusion} holds except for the constraint involving $\omega_{2}$, and the Lyapunov decrease condition is satisfied on the boundary level set $\omega_{\mathrm{nn}}(x)=\omega_{1}$ within $\mathbb{X}_{v}$. 
To enlarge the certified ROA, $\omega_{1}$ should therefore be chosen as small as possible, while $\omega_{2}$ should be chosen as large as possible so that
$ \omega_{\mathrm{nn}}(x)<\omega_{1}\Rightarrow  \omega_{\mathrm{nn}}(f(x,w))<\omega_{2}$. 
Moreover, by construction, the decrease condition holds at $\omega_{1}$, hence we hope to get a candidate  $\omega_{2}>\omega_{1}$. Accordingly, we select the candidate
$
\omega_{1}=\underline{\omega}_{1}+\delta \in \intoo {\underline{\omega}_{1},\overline{\omega}_{1}}, 
$
for sufficiently small $\delta>0$.

Finally, define
$$
\overline{\omega}_{2}=\inf\left\{\omega\in \intco{{\omega}_{1},\infty} \middle\vert \begin{array}{c} \exists(x,w)\in \mathbb{X}_{v}\times W~\text{s.t.}\\
\omega_{\mathrm{nn}}(x)=\omega~\wedge\\\bigl( \omega_{\mathrm{nn}}(f(x,w))-\omega_{\mathrm{nn}}(x) >- \varepsilon\\
\vee~ f(x,w)\not\in \mathbb{X}_{v}~\vee~ g(x)\geq 1 \bigr)\end{array}  \right\}.
$$

By construction, and assuming well-definedness, for any $\omega_{2}\in\intoo{\omega_{1},\overline{\omega}_{2}
}$, condition \eqref{eq:dW} holds. 
We therefore select the candidate
$
\omega_{2}=\overline{\omega}_{2}-\delta \in \intoo{\omega_{1}, \overline{\omega}_{2}},
$
for sufficiently small $\delta>0$.

The optimization problems associated with the parameters $\bar{c}_{2}$, $\overline{\omega}_{1}$, 
$\underline{\omega}_{1}$, and $\overline{\omega}_{2}$ 
can be solved approximately using a grid-based or sampling-based search over $\mathbb{X}_{v} \times W$. 
The resulting candidate values $c_{2}$, $\omega_{1}$, and $\omega_{2}$ 
are subsequently subjected to formal verification to ensure that conditions 
\eqref{eq:dVP}, \eqref{eq:c2w2_inclusion}, and \eqref{eq:dW} 
hold. 
This verification step can be performed using sound nonlinear verification tools, such as \texttt{dReal}, 
or neural network verification frameworks, including $\alpha,\beta$-CROWN.

\section{Numerical examples}
\label{sec:NumericalExamples}

In this section, we demonstrate the effectiveness of the proposed framework through four numerical examples. 
In all cases, the RIS is given by 
\[
\mathcal{A} = \{0_{n}\}.
\]
For each example, we follow the procedure described in Section~\ref{sec:VFApproximation} to obtain certifiable ROAs. 
All presented examples are non-polynomial systems. Consequently, DOA estimation methods based on SOS polynomial techniques (e.g.,~\cite{xue2020robust}) are not directly applicable. 
This further highlights the generality of the proposed framework.

The neural networks are trained according to the procedure outlined in Section~\ref{sec:PINNLearning}, where the loss-function weights are fixed to $\lambda_{\mathrm{d}}=0.1$ and $\lambda_{\mathrm{pi}}=1$, thereby placing greater emphasis on enforcing the Bellman equation during training. 
All learned neural networks consist of two hidden layers with $\tanh$ activation functions. 
For the 2D examples, each hidden layer contains 20 neurons, whereas for Example~\ref{sec:Pendulum}, each hidden layer contains 30 neurons.  

The initial ROA $\mathbb{E}_{c_1}$ for all examples is ellipsoidal (based on quadratic Lyapunov functions). It is computed using the construction in Appendix~\ref{sec:Ellipsoidal ROAs} for Examples~\ref{sec:TwoMachine} and~\ref{sec:Pendulum}, and via the Lyapunov-based analysis in Appendix~\ref{sec:LyapunovAnalysisPolynomial} for Example~\ref{sec:PolynomiallyStable}.

All data generation, neural-network training, and parameter estimation are carried out in \textsc{Matlab} on a 13th Gen Intel(R) Core(TM) i7-1355U (1.70~GHz) laptop. 
Formal verification is performed using $\alpha,\beta$-CROWN on an NVIDIA H100 NVL GPU (Hopper architecture with 96~GB HBM3 memory). 
The initial ROA $\mathbb{E}_{c_{1}}$, computed via Appendix~\ref{sec:Ellipsoidal ROAs}, is obtained with the aid of the optimization toolbox YALMIP~\cite{lofberg2004yalmip} and the reachability toolbox CORA~\cite{althoff2015introduction}. 
The computational cost of obtaining $\mathbb{E}_{c_{1}}$ is negligible (less than two seconds for all examples).

Finally, the verification of non-certifiable estimated parameters (see Section~\ref{sec:CandidateParameters}) incorporates an additional bisection-based optimization step to further refine the resulting DOA estimates. 
Since the estimated parameters are theoretically optimal, the improvements achieved through this refinement are typically marginal.

\subsection{Comparison Setup}

\subsubsection{Proposed vs. nominal}

As highlighted in the Introduction, Lyapunov functions constructed from nominal dynamics often exhibit a degree of robustness, making them useful for obtaining DOA estimates when disturbances are sufficiently small. 
In Examples~\ref{sec:TwoMachine}, \ref{sec:PolynomiallyStable}, and~\ref{sec:Pendulum}, we compare the proposed NN-based DOA estimates with two alternative approaches:

\begin{enumerate}
\item Optimized ellipsoidal ROAs $\mathbb{E}_{c_2}$ from Section~\ref{sec:EnlargingEllipse}, i.e., the largest ellipsoidal ROAs obtained by enlarging $\mathbb{E}_{c_1}$.
\item ROAs obtained from neural networks trained under nominal dynamics, following the framework of~\cite{serry2025safe}.
\end{enumerate}

The framework in~\cite{serry2025safe} constitutes a special case of the present setting, as it assumes a singleton disturbance set during the learning phase. 
In that case, reachable sets reduce to singleton points, and approximate value-function evaluation requires only one trajectory per data point, i.e., $N_{\mathrm{traj}}=1$. 
The framework in~\cite{serry2025safe} has been shown to provide certifiable large DOA estimates for unperturbed systems.

For each of these examples, we first compute $\mathbb{E}_{c_1}$ and subsequently train neural networks using both the proposed robust framework and the nominal-dynamics framework of~\cite{serry2025safe}. 
Certifiable ROAs are then obtained for both networks following the procedure in Section~\ref{sec:NNVerification}.

For these examples, ROAs are evaluated under two scenarios:
\begin{itemize}
\item \textbf{Scenario 1:} Neglecting uncertainty, i.e., reducing the disturbance set to its center.
\item \textbf{Scenario 2:} Accounting for the full disturbance set.
\end{itemize}

\subsubsection{Proposed vs. parameter-dependent Lyapunov functions}

To further highlight the effectiveness of the proposed framework, we compare the certifiable DOA estimates obtained using our method with (i) the optimized ellipsoidal ROA $\mathbb{E}_{c_{2}}$ and (ii) ROAs derived from parameter-dependent Lyapunov functions~\cite{coutinho2013local,coutinho2010robust}, considering the two-dimensional academic example presented in~\cite{coutinho2013local}.

The framework in~\cite{coutinho2013local} assumes constant parametric uncertainty. 
In contrast, the proposed framework accommodates time-varying uncertainties. 
Therefore, our estimates are, in principle, obtained under more general and potentially more conservative assumptions.

The learning and verification domains, number of sampled points, loss values, and CPU times associated with value-function evaluation, neural-network training, parameter estimation, and verification for all examples are summarized in Table~\ref{tab:NNparameters}.

\begin{table*}[t]
\centering
\caption{Neural network architectures/parameters and corresponding CPU times for the four numerical examples. 
Here, $\mathrm{NN}_{\mathrm{nom}}$ denotes the neural networks trained using the nominal dynamics (i.e., the framework in~\cite{serry2025safe}), whereas $\mathrm{NN}_{\mathrm{prop}}$ denotes those trained using the proposed framework. 
The quantity $t_{\mathrm{ev}}$ represents the CPU time required to evaluate 
$\bigcup_{i=1}^{N_{\mathrm{d}}}\{\tilde{\mathcal{W}}_{N_s}(\{z_{\mathrm{d}}^{(i)}\})\}$, 
$t_{\mathrm{t}}$ is the CPU time required for neural network training, 
$t_{\mathrm{e}}$ is the CPU time required to estimate the parameters $c_{2}$, $\omega_{1}$, and $\omega_{2}$ (for both unperturbed and perturbed scenarios), and 
$t_{\mathrm{v,gpu}}$ is the CPU time required to verify these parameters (for both unperturbed and perturbed scenarios) using a GPU. 
“NA” indicates verification failure. All reported CPU times are given in seconds. }
\label{tab:NNparameters}
\begin{tabular}{|c|c|c|c|c|c|c|c|c|c|c|}
\hline
\multicolumn{2}{|c|}{Parameters/Variables} & $\mathbb{X}_{l}/\mathbb{X}_{v}$ & $N_{s}$ & $N_{\mathrm{d}}$/$N_{\mathrm{pi}}$ & $N_{\mathrm{traj}}$ & $\mathrm{Loss}(\theta)$& $t_{\mathrm{cpu,ev}}$& $t_{\mathrm{t}}$& $t_{\mathrm{e}}$& $t_{\mathrm{v,gpu}}$\\
\hline
\multirow{2}{*}{Ex.  \ref{sec:TwoMachine}}
& $\mathrm{NN}_{\mathrm{nom}}$ & \multirow{2}{*}{0.7\Hintcc{-1_{2},1_{2}}} & \multirow{2}{*}{500} & \multirow{2}{*}{5000} &1&9e-6&1.9&227.0&0.6/6.2&16.0/NA \\ \cline{2-2}\cline{6-6}\cline{7-7}\cline{8-8}\cline{9-9}\cline{10-10}\cline{11-11}
& $\mathrm{NN}_{\mathrm{prop}}$ &  &  &  &1000&4.0e-6&278.0&218.7&0.6/6.1&17.2/19.5\\
\hline\multirow{2}{*}{Ex. \ref{sec:PolynomiallyStable}}
& $\mathrm{NN}_{\mathrm{nom}}$ & \multirow{2}{*}{\Hintcc{-1_{2},1_{2}}} & \multirow{2}{*}{500} & \multirow{2}{*}{5000} &1&4.3e-5&0.5&215.0&0.6/6.5&22.9/22.9\\ \cline{2-2}  \cline{6-6}\cline{7-7}\cline{8-8}\cline{9-9}\cline{10-10}\cline{11-11}
& $\mathrm{NN}_{\mathrm{prop}}$ &  &  &  &1000&7.8e-5&245.4&211.0&0.6/7.8&15.7/38.6\\
\hline
\multirow{2}{*}{Ex. \ref{sec:Pendulum}}
& $\mathrm{NN}_{\mathrm{nom}}$ & \multirow{2}{*}{$\intcc{-\frac{\pi}{4},\frac{\pi}{4}}\times\Hintcc{-1_{2},1_{2}}$} & \multirow{2}{*}{500} &\multirow{2}{*}{7000}&1&1.4e-4&0.9&656.8&66.5/399.0&26.3/NA \\ \cline{2-2}\cline{6-6}\cline{7-7}\cline{8-8}\cline{9-9}\cline{10-10}\cline{11-11}
& $\mathrm{NN}_{\mathrm{prop}}$ &  &  &  &3000&1.0e-4&683.7&622.2&65.5/385.5&23.4/43.0\\
\hline
\multirow{2}{*}{Ex. \ref{sec:Rational}}
& \multirow{2}{*}{$\mathrm{NN}_{\mathrm{prop}}$} & \multirow{2}{*}{$\intcc{-3,3}\times\intcc{-3,3}$} & \multirow{2}{*}{1000} &\multirow{2}{*}{5000/30000}&\multirow{2}{*}{1000}&\multirow{2}{*}{2.0e-6}&\multirow{2}{*}{1369.0}&\multirow{2}{*}{613.3}&\multirow{2}{*}{7.4}&\multirow{2}{*}{42.3}  \\
&  &  &  &  & & & & & & \\
\hline
\end{tabular}
\end{table*}

\subsection{A perturbed two-machine power system}
\label{sec:TwoMachine}
We consider a discrete and perturbed version of the two-dimensional two-machine power system studied in \cite{vannelli1985maximal,willems1968improved}. The discrete version is obtained through Euler discretization of its continuous-time version, which yields
$$
f(x,w)=    \begin{pmatrix}
    x_{1}+\Delta_{t}  x_{2}\\
       x_{2}-\Delta_{t}( w{x_{k}}+\sin(x_{1}+\pi/3)-\sin(\pi/3))
    \end{pmatrix},
$$
where $w\in W=[0.25,0.75]$, and  $\Delta_{t} = 0.2$.  he safe set is the intersection of two safe sets: 
 $$
 \mathcal{X}_{1}=\mathbb{R}^2\setminus  (\begin{bmatrix}
     -0.5\\0.5
 \end{bmatrix}+\frac{1}{4}\mathbb{B}_{2}),~
\mathcal{X}_{2}=\mathbb{R}^2\setminus  (\begin{bmatrix}
    0\\-0.5
\end{bmatrix}+\frac{1}{4}\mathbb{B}_{2}).
$$
These sets are 1-sublevel sets of the functions
 \begin{align*}
g_{\mathcal{X}_{1}}(x)&=1+\frac{1}{4^2}-((x_{1}+0.5)^2+(x_{2}-0.5)^2),\\
g_{\mathcal{X}_{2}}(x)&=1+\frac{1}{4^2}-(x_{1}^2+(x_{2}+0.5)^2),
\end{align*}
respectively.
 From the procedure in Appendix \ref{sec:CandidateParameters}, we obtained an ROA $\mathbb{E}_{c_1}$, 
For this example, we choose $\alpha(x)=0.1 \nu(x)$\footnote{If the $\ell_{p}$-stability condition holds for some $0<p\le 1$ (for instance, in the case where $\mathcal{A}$ is exponentially stable), and the function $\nu$ defining the ROAs $\mathbb{E}_{c_{1}}$ and $\mathbb{E}_{c_{2}}$ is a Lyapunov function, then it can be beneficial to select $\alpha(x) = c\,\nu(x)$, where $c > 0$ is a scaling parameter. 
With this choice, the sublevel sets of the value function $\mathcal{W}$, and consequently those of its finite-sample approximation $\tilde{\mathcal{W}}_{N_{s}}$ and neural network approximation $\omega_{\mathrm{nn}}$, tend to reflect the geometric structure of the sublevel sets of $\nu$, particularly in a neighborhood of the RIS $\mathcal{A}$. 
This alignment may improve the quality of the NN-based ROA estimates.
}. For this example, the values $F(\{x\})$, for $x \in \mathbb{R}^{2}$, are hyper rectangles. 
Accordingly, the embedding $\mathcal{T}$ employed is that described in Remark~\ref{rem:transformation} for hyper-interval images.

\subsection{An uncertain nonlinear system with local polynomial stability}
\label{sec:PolynomiallyStable}
Consider the uncertain discrete-time system
\begin{equation}\label{eq:cos_poly_sys}
x_{k+1}=\bigl(\cos(\|x_k\|^{2})-w\|x_k\|^{2}\bigr)\,x_k,
\qquad x_k\in\mathbb{R}^{2},
\end{equation}
where
$
w\in\intcc{w_{\min},w_{\max}}
$
with $w_{\min}=1$ and $w_{\max}=2$, $\mathcal{X}=\{x\in \mathbb{R}^{2}| x_{1}+y_{1}<1\}$. It can be shown that for $\nu(\cdot)=\norm{\cdot}^{2}$ and $c_{1}=\min\{{1}/{w_{\min}}-\varepsilon,-w_{\max}+\sqrt{w_{\max}^2+2}\}$ for some small $\varepsilon>0$, the region $\mathbb{E}_{c_{1}}$ is a region of attraction and that the convergence rate is of order $1/2$ (see  Appendix \ref{sec:LyapunovAnalysisPolynomial}). Therefore, the system is locally uniformly $\ell_{p}$ stable with $p>2$. We define $\alpha(x)=\norm{x}^4$.  In this example, the values $F(\{x\})$, for $x \in \mathbb{R}^{2}$, are line segments of the form
$\{\, u + t v \mid t \in \intcc{-1,1} \,\}$, where $u \in \mathbb{R}^{2}$ denotes the center and $v \in \mathbb{R}^{2}$ specifies the direction vector. 
Accordingly, we employ the embedding $\mathcal{T}$ defined as, for $X = \{\, u + t v \mid t \in \intcc{-1,1} \,\}$,  
\[
\mathcal{T}(X) =
\begin{bmatrix}
u \\
v
\end{bmatrix}.
\]

\subsection{A rigid rod with gravity and saturated torque}
\label{sec:Pendulum}

Consider a discrete-time rigid rod system with mass $m=1$ and length $l=1$. The mass moment of inertia about the geometric center is  $I=ml^{2}/12$. The uncertain variable $w$ represents the distance between the loosely pinned end of the rod and its geometric center.  The system dynamics are given by
\begin{align}
\tilde{f}(x,u,w)=
\begin{pmatrix}
x_{1} + \Delta_{t} x_{2} \\
x_{2} + \Delta_{t} \dfrac{1}{I+mw^2} \left( x_{3} - mgw \sin(x_{1}) \right) \\
x_{3} + \Delta_{t} u
\end{pmatrix},
\end{align}
where $x_{1}$ denotes the angular position, $x_{2}$ the angular velocity, and $x_{3}$ the applied torque. The control input $u$ represents the torque rate. The sampling time is $\Delta_{t}=0.2$, and $g=9.81$ denotes the gravitational acceleration.  The uncertain parameter $w$ belongs to the interval $W = \intcc{0.5l - 0.04l,\; 0.5l + 0.04l}$.
The torque rate is defined via a state-feedback control law $\tilde{u}(x) = Kx$, where the feedback matrix $K$ is chosen such that the origin of the closed-loop unperturbed system is locally exponentially stable\footnote{The matrix $K$ can be computed by solving the discrete-time Riccati equation associated with the linearized dynamics about the origin.}. The resulting closed-loop dynamics are
$f(x,w) = \tilde{f}(x,\tilde{u}(x),w)$.
We impose the following state and input constraints:
$|x_{1}| < \frac{\pi}{4}$, 
$|x_{2}| < 1$,
$|x_{3}| < 1$,
and 
$|\tilde{u}(x)| < 5$. These constraints can be expressed as intersections of four safe sets $\mathcal{X}_{i}$ defined as strict $1$-sublevel sets of the functions $g_{1}(x)={x_{1}^{2}}/{(\pi/4)^{2}}$,  $g_{2}(x)=x_{2}^{2}$,  $g_{3}(x)={x_{3}^{2}}$, 
and
$g_{4}(x)={x^{\intercal}K^{\intercal}Kx}/{5^{2}}$. Following the procedure described in Appendix~\ref{sec:Ellipsoidal ROAs}, we compute an initial ellipsoidal ROA $\mathbb{E}_{c_{1}}$ characterized by the function
$\nu(x)=x^{\intercal}Px$. For this example, we select $\alpha(x)=0.1\,\nu(x)$. Finally, the values $F(\{x\})$, for $x \in \mathbb{R}^{3}$, are hyper-rectangles, so we use the embedding in described in Remark~\ref{rem:transformation} for hyper-interval images. 
\subsection{A perturbed rational system}
\label{sec:Rational}
We consider the rational uncertain system studied in \cite{coutinho2013local}:
$$
f(x,w)=    \begin{pmatrix}
    x_{1}- \Delta_{t}\frac{(x_{1}+x_{2}^3)}{1+x_{2}^2}\\
    x_{2}+\Delta_{t}\frac{x_{1}^3-(0.25+w)x_{2}}{1+x_{2}^2}
    \end{pmatrix},
$$
where $\Delta_{t}=0.01$, $w\in \intcc{-0.15,0.15}$ and $\mathcal{X}=\mathbb{R}^{2}$. From the procedure in Appendix \ref{sec:CandidateParameters}, we obtained an ROA $\mathbb{E}_{c_1}$,  and selected $\alpha(x)=\Delta_{t} \nu(x)$. The values $F(\{x\})$, for $x \in \mathbb{R}^{3}$, are hyper-rectangles, so we use the hyper-interval-based embedding in Remark~\ref{rem:transformation}.

\subsection{Results and discussion}

Figures~\ref{fig:twomachine}, \ref{fig:polynomial}, and~\ref{fig:pendulum} illustrate the ROA estimates under the two considered scenarios. 
When uncertainty is neglected (Scenario~1), the nominal-dynamics framework of~\cite{serry2025safe} yields larger certifiable ROAs compared to both the ellipsoidal ROAs and the proposed framework. 
This behavior is expected, as the proposed method explicitly accounts for uncertainty during training, leading to inherently more conservative estimates when disturbances are ignored. 
In this scenario, the proposed approach performs comparably to, and in some cases slightly better than, the optimized ellipsoidal method.

When disturbances are fully accounted for (Scenario~2), the proposed framework consistently provides the largest certifiable DOA estimates. 
In contrast, the nominally trained neural networks fail to produce certifiable estimates for Examples~\ref{sec:TwoMachine} and~\ref{sec:Pendulum}. 
Notably, the optimized ellipsoidal ROAs still provide reasonable estimates across all three examples, demonstrating their inherent robustness despite their structural limitations.

From a computational perspective, Table~\ref{tab:NNparameters} shows that the proposed framework is comparable to the nominal-based approach in terms of neural-network training, parameter estimation, and verification times. 
The primary computational overhead arises from evaluating the (approximate) ground-truth value function $\mathcal{W}$ at the collocation points, i.e., $\bigcup_{i=1}^{N_{\mathrm{d}}}
\left\{
\tilde{\mathcal{W}}_{N_s}(\{z_{\mathrm{d}}^{(i)}\})
\right\}$. As discussed in Section~\ref{sec:ReachSetApproximation}, these evaluations require reachable-set computations. 
In the present examples, reachable sets are approximated using randomly generated trajectories, whose number must be sufficiently large to ensure accuracy. 
This results in the typical accuracy--computational cost tradeoff inherent to numerical approximation schemes.

Figure~\ref{fig:rational} further demonstrates, for Example~\ref{sec:Rational}, that the proposed approach outperforms both the optimized ellipsoidal estimate $\mathbb{E}_{c_{2}}$ and the ROA derived in~\cite{coutinho2013local}. 
Importantly, this improvement is achieved despite the proposed framework accommodating time-varying disturbances, whereas~\cite{coutinho2013local} considers only constant parametric uncertainty. 
This comparison underscores the effectiveness and flexibility of the proposed method.

Finally, Table~\ref{tab:NNparameters} indicates that data generation requires more computational time for Example~\ref{sec:Rational} compared to the other examples. 
This is attributed to the use of a larger time horizon $N_{s}$, as the system is obtained via Euler discretization with a finer time step.

\begin{figure}
    \centering
    \includegraphics[width=0.8\linewidth]{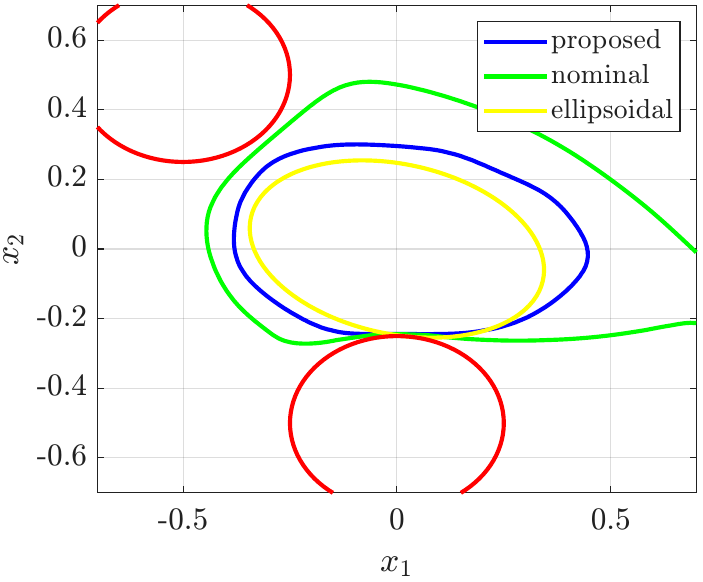}
    \includegraphics[width=0.8\linewidth]{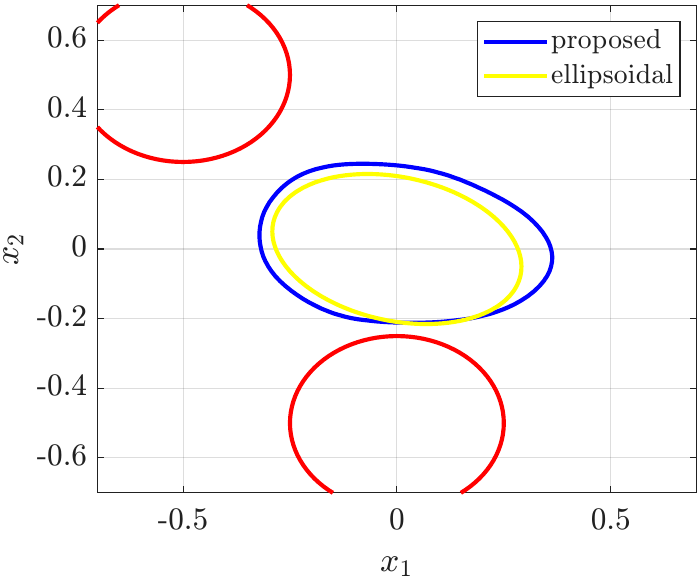}
    \caption{Certified ROAs  for Example~\ref{sec:TwoMachine} under Scenario~1 (uncertainty neglected, top) and Scenario~2 (full disturbance considered, bottom). 
Red circles indicate the state constraints. 
The nominally trained neural network fails to yield a certifiable ROA under Scenario~2.}\label{fig:twomachine}
\end{figure}
\begin{figure}
    \centering
    \includegraphics[width=0.8\linewidth]{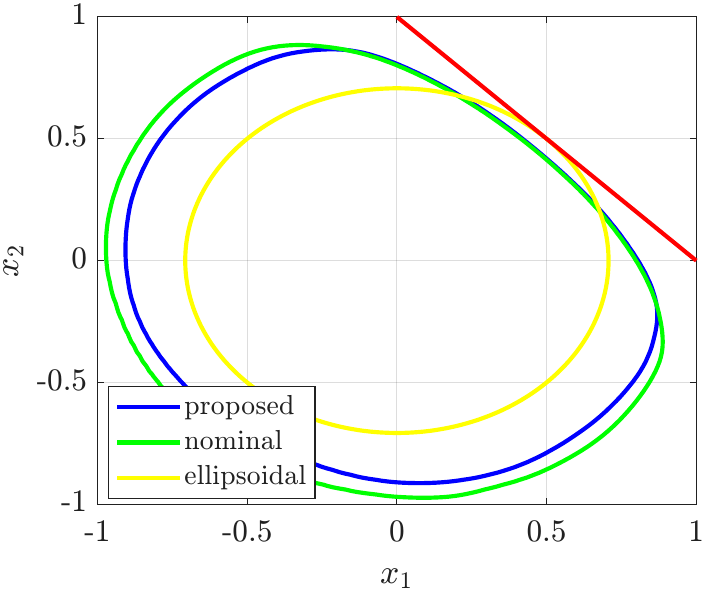}
    \includegraphics[width=0.8\linewidth]{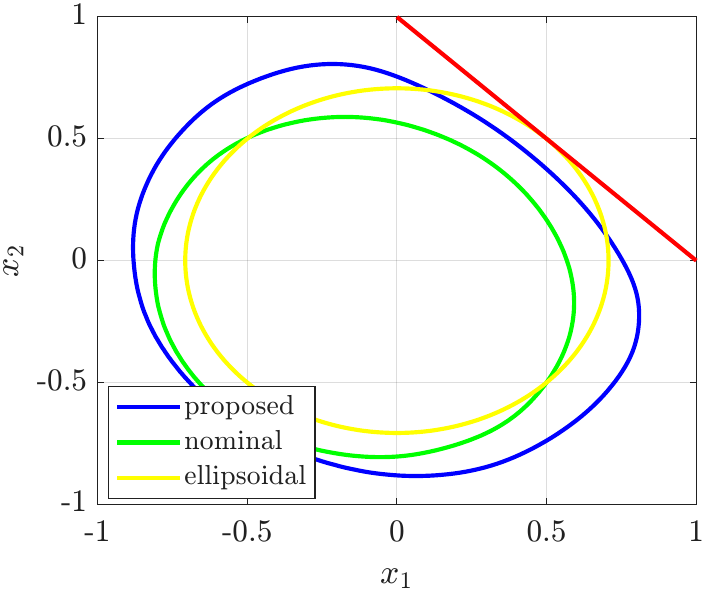}
    \caption{Certified ROAs  for Example~\ref{sec:PolynomiallyStable} under Scenario~1 (uncertainty neglected, top) and Scenario~2 (full disturbance considered, bottom). 
Red lines indicate the state constraints.}
    \label{fig:polynomial}
\end{figure}

\begin{figure}
    \centering
    \includegraphics[width=0.8\linewidth]{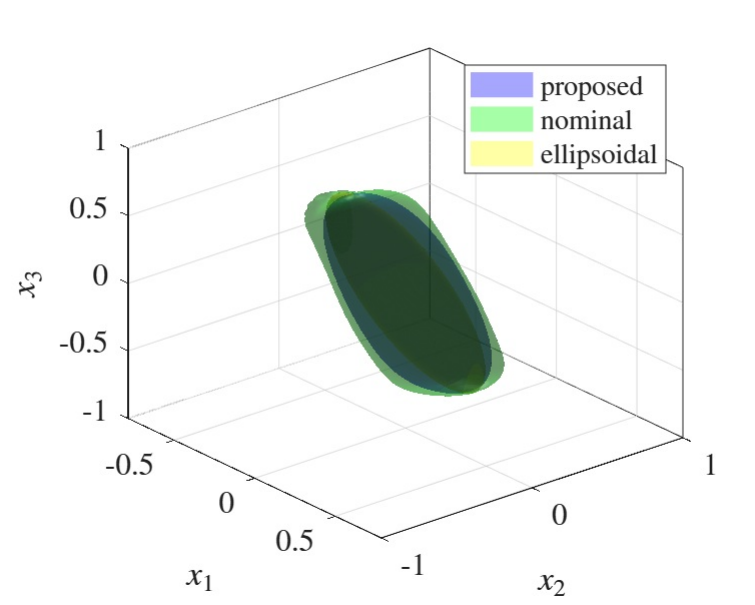}
    \includegraphics[width=0.8\linewidth]{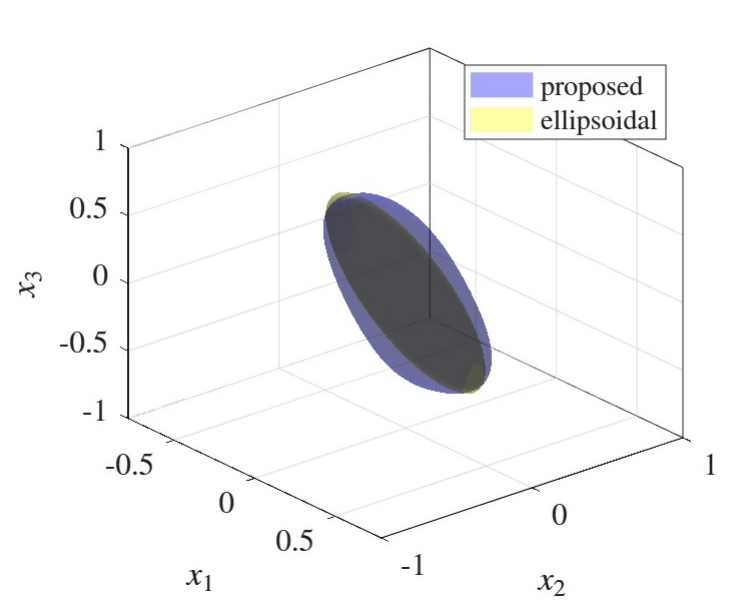}
    \caption{Certified ROAs  for Example~\ref{sec:Pendulum} under Scenario~1 (uncertainty neglected, top) and Scenario~2 (full disturbance considered, bottom).  
The nominally trained neural network fails to yield a certifiable ROA under Scenario~2.}
    \label{fig:pendulum}
\end{figure}

\begin{figure}
    \centering
    
    \includegraphics[width=0.7\linewidth]{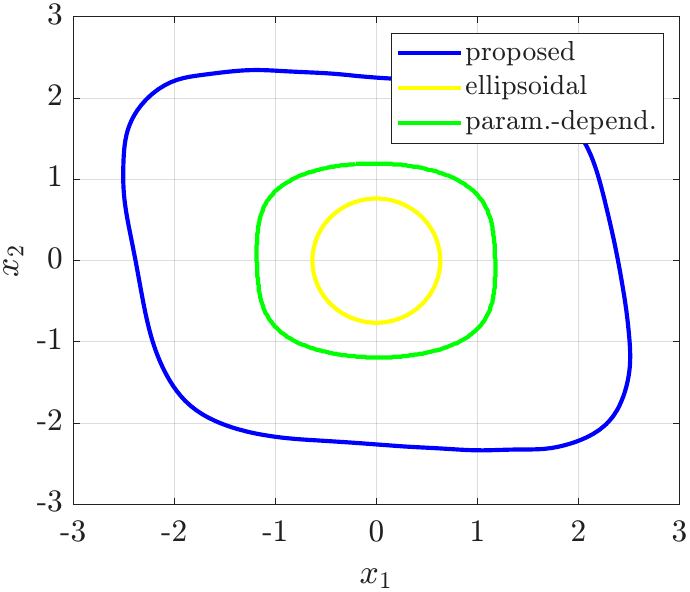}
    \caption{Certified ROAs  for Example \ref{sec:Rational}. The ROA based on parameter-dependent Lyapunov functions is digitized from \cite{coutinho2013local}.}
    \label{fig:rational}
\end{figure}

\section{Conclusion}
\label{sec:Conclusion}
In this paper, we proposed a novel framework for the accurate estimation of safe and robust DOAs for discrete-time nonlinear uncertain systems with continuous dynamics, open safe sets, compact disturbance sets, and uniformly locally $\ell_p$-stable RISs. The DOAs were characterized via newly introduced value functions defined on metric spaces of compact sets, shown to be the unique solutions of corresponding Bellman-type equations. Building on this characterization, we developed a physics-informed NN framework that learns the associated value functions by embedding the derived Bellman-type equations directly into the training process. To obtain certifiable estimates of the safe robust DOAs from the learned neural approximations, we introduced a verification procedure that leverages existing formal verification tools. The effectiveness and applicability of the proposed methodology were demonstrated through numerical examples.

Future work will explore several research directions motivated by the present framework, including the study of safe robust DOAs for continuous-time systems, the characterization and estimation of null controllability regions for controlled and perturbed systems, and the development of more efficient NN verification algorithms for accurate and scalable ROA estimation.

\bibliographystyle{ieeetr}

\appendices
 \section{Proof of Lemma \ref{lem:enlargement_of_compact}}\label{Proof_lem:enlargement_of_compact}
 \begin{proof}
   As $X$ is open, then for every $x\in \Omega$, there exists $\delta_{x}>0$ such that $x+\delta_{x}\mathbb{B}_{n}\subseteq X$. Therefore,
   $
\Omega\subseteq\bigcup_{x\in \Omega} x+({1}/{2})\delta_{x}\mathbb{B}_{n}\subseteq  \bigcup_{x\in \Omega} x+\delta_{x}\mathbb{B}_{n}\subseteq X.
   $
   Using the compactness of $\Omega$, there exists a finite $N\in \mathbb{N}$ and a finite set $\{x_{i}\}_{i=1}^{N}\subseteq \Omega$ such that 
   $
\Omega \subseteq \bigcup_{i\in \intcc{1;N}} x_{i}+({1}/{2})\delta_{x_{i}}\mathbb{B}_{n}.
   $
  Let 
  $
\Omega_{i}\defas\Omega \bigcap (x_{i}+({1}/{2})\delta_{x_{i}}\mathbb{B}_{n}),~i\in \intcc{1;N}
  $. 
Note that $\bigcup_{i\in \intcc{1;N}}\Omega_{i}=\Omega$ as $\Omega \subseteq \bigcup_{i\in \intcc{1;N}} x_{i}+({1}/{2})\delta_{x_{i}}\mathbb{B}_{n}$.   Define
$
\delta\defas\min_{i\in \intcc{1;N}}({1}/{2})\delta_{x_{i}}>0.
 $
  Then,
  $
\Omega_{i}+\delta\mathbb{B}_{n}\subseteq x_{i}+(\delta +({1}/{2})\delta_{x_{i}})\mathbb{B}_{n}\subseteq x_{i}+\delta_{x_{i}}\mathbb{B}_{n}\subseteq X,~i\in \intcc{1;N}. 
  $
Consequently
  $
\Omega+\delta \mathbb{B}_{n}=\bigcup_{i\in \intcc{1;N}}\Omega_{i}+\delta \mathbb{B}_{n}=\bigcup_{i\in \intcc{1;N}}\left(\Omega_{i}+\delta \mathbb{B}_{n}\right)\subseteq \bigcup_{i\in \intcc{1;N}}x_{i}+\delta_{x_{i}}\mathbb{B}_{n}\subseteq X.
  $
  
 \end{proof}
 \section{Proof of Lemma \ref{lem:CompactSetsMetric}} \label{Proof_lem:CompactSetsMetric}
 \begin{proof}
    $(\mathcal{K}(X), \dh)$ being a metric space is an elementary result, for example see \cite[Chapter 3.2]{beer1993topologies}.
    
    When $X$ is open, let $\Omega\in \mathcal{K}(X)$ by Lemma~\ref{lem:enlargement_of_compact} there is a $\delta>0$ with $\Omega+\delta\mathbb B_n \subseteq X$, by definition it follows that any $\Gamma\in \mathcal{K}(\mathbb R^n)$ with $\dh(\Omega, \Gamma) <{\delta}/{2}$ must have $\Gamma \subseteq \Omega+\delta\mathbb B_n \subseteq X$.
    This shows that $\mathcal{K}(X) $ is open.
\end{proof}
\section{Proof of Lemma \ref{lem:FCts}}
\label{Proof_lem:FCts}
\begin{proof}
    We will show that $F$ is continuous at some arbitrary $X\in \Comp$. Fix $r\in \intoo{0,\infty}$. We have $X+r\mathbb{B}_{n}$ is compact.
    Let $\varepsilon>0$ be arbitrary and pick $\delta\in \intoo{0,r}$ such that for all $x_{1},x_{2}\in X+r\mathbb{B}_{n}$ and $w\in W$:
        $
\norm{x_{1}-x_{2}}\leq \delta \Rightarrow \norm{f(x_{1},w)-f(x_{2},w)}\leq \varepsilon.
    $
Such $\delta$ exists due to the uniform continuity of $f$  over $(X+r\mathbb{B}_{n})\times W$. Let $Y\in \mathcal{K}(\mathbb{R}^{n})$ be such that $\dh(X,Y)\leq \delta$. Note that $X,Y\subseteq X+r \mathbb{B}_{n}$. Let $z_{1}\in F(X)$, then there exists $x_{1}\in X$ and $w\in W$ such that $z_{1}=f(x_{1},w)$. As $\dh(X,Y)\leq \delta$, then there exists $x_{2}\in Y$ such that $\norm{x_{1}-x_{2}}\leq \delta$. Define $z_{2}=f(x_{2},w)\in f(Y,W)=F(Y)$. Then, $\norm{z_{1}-z_{2}}=\norm{f(x_{1},w)-f(x_{2},w)}\leq \varepsilon$. Similarly, and using the symmetry of the Hausdorff distance,  for any $z_{2}\in F(Y)$, there exists $z_{1}\in F(X)$ such that $\norm{z_{1}-z_{2}}\leq \varepsilon$. Therefore, $\dh(F(X),F(Y))<\varepsilon$ and that completes the proof.
\end{proof}
\section{Proof of Lemma  \ref{lem:SemiGroup}}
\label{Proof_lem:SemiGroup}
\begin{proof}
    By induction we see that $F(X)=\bigcup_{(x,w)\in X\times W}\{f(x,w)\}=\bigcup_{(x,\pi)\in X\times W^{\mathbb{Z}_{+}}}\{\varphi_{x}^{\pi}(1)\}= \mathcal{R}(X,1)$. 
    Assume $\mathcal{R}(X,k)=\mathcal{R}(F(X),k-1)=F^{k}(X)$.   $F^{(k+1)}(X)=F\circ F^k(X)= F(\mathcal{R}(X,k)) = \bigcup_{(x,\pi,w) \in X\times W^{\mathbb{Z_+}}\times W}\{f(\varphi_{x}^{\pi}(k),w)  \}$.
    Since for every $\pi \in W^{\mathbb{Z_+}}$ there is $w\in W$ with $\pi(k+1)=w$ we get $F^{(k+1)}(X)=\mathcal{R}(X,k+1)$. 
    Moreover, we see that
    $F^{(k+1)}(X)=F^k\circ F(X)= \mathcal{R}(F(X),k) =\bigcup_{(x,w,\pi) \in X\times W\times  W^{\mathbb{Z_+}}} \{\varphi_{f(x,w)}^{\pi}(k) \}$.
    Since for every $\pi \in W^{\mathbb{Z_+}}$ there is $\mu \in W^{\mathbb{Z_+}}$ and a $w\in W$ with $\mu (1)=w$ and $\pi(j-1)=\mu (j)$ for $j\in \intcc{2;k}$, we get $F^{(k+1)}(X)=\mathcal{R}(F(X),k)$. 
    
\end{proof}
\section{Proof of Lemma \ref{Lem:SupIsCts}}
\label{Proof_Lem:SupIsCts}
\begin{proof}
  Fix $X\in D$ and let $r>0$ be an arbitrary positive number. Then $g$ is uniformity continuous over $X+r\mathbb{B}_{n}$. Let $\varepsilon>0$ be arbitrary, then there exists $\delta>0$ with $\delta<r$ such that for all $x,y\in X+r\mathbb{B}_{n}$ satisfying $\norm{x-y}\leq \delta$, we have $\abs{g(x)-g(y)}\leq \varepsilon$. Let  $Y\in D$ satisfy $\dh(X,Y)\leq \delta$, then $X,Y\subseteq X+r\mathbb{B}_{n}$. Let $x\in X$ be such that $g(x)=G(X)$, which exists due to the compactness of $X$. As $\dh(X,Y)\leq \delta$, there exists $y\in Y$ such that $\abs{g(x)-g(y)}\leq \varepsilon$. Therefore, we have 
  $G(X)=
 g(x)\leq g(y)+\varepsilon\leq G(Y)+\varepsilon\Rightarrow G(X)-G(Y)\leq \varepsilon.
  $
Similarly, let $y\in Y$ be such that $g(y)=G(Y)$. As $\dh(X,Y)\leq \delta$, there exists $x\in X$ such that $\norm{x-y}\leq \delta$. Therefore, $\abs{g(y)-g(x)}\leq \varepsilon$, implying
$
G(Y)=g(y)\leq g(x)+\varepsilon\leq G(X)+\varepsilon\Rightarrow G(Y)-G(X)\leq \varepsilon$.
Consequently,
$
\abs{G(X)-G(Y)}\leq \varepsilon
$
and that completes the proof.
\end{proof}

\section{Ellipsoidal ROAs for 
exponentially stable equilibirum points}
\label{sec:Ellipsoidal ROAs}
In what follows, we restrict our attention to singleton invariant sets, i.e., equilibrium points, that are uniformly locally exponentially stable. 
Without loss of generality, we assume that the equilibrium point is located at the origin, i.e., $\mathcal{A} = \{0_n\}$.

We find the initial ROA  using common quadratic Lyapunov functions (see, e.g., \cite{amato2002note}) and linearization about the origin. Herein, $\norm{\cdot}$ denotes the Euclidean norm,  and we assume $f$ to be twice-continuously differentiable. 

Given $x\in \mathbb{R}^{n}$  and $A\in \mathbb{R}^{n\times m}$, $\abs{x}\in \mathbb{R}^{n}_{+}$ and $\abs{A}\in \mathbb{R}_{+}^{n\times m}$ are defined as $|x|_{i}\defas |x_{i}|,~i\in \intcc{1;n}$, and $\abs{A}_{i,j}\defas \abs{A_{i,j}},~(i,j)\in \intcc{1;n}\times \intcc{1;m}$, respectively. 
Let $\mathcal{S}^{n}$ denote the set of $n\times n$ real symmetric matrices. 
Given $A\in \mathcal{S}^{n}$, $\underline{\lambda}(A)$ and $\overline{\lambda}(A)$ denote the minimum and maximum eigenvalues of $A$, respectively.  
Let $\mathcal{S}_{++}^{n}$ denote the set of $n\times n$ real symmetric positive definite matrices, that is $\mathcal{S}_{++}^{n}=\Set{A\in \mathcal{S}^{n}}{\underline{\lambda}(A)>0}$. Given $A\in \mathcal{S}_{++}^{n}$, $A^{\frac{1}{2}}$ denotes the unique $K\in \mathcal{S}_{++}^{n}$ satisfying $A=K^2$ \cite[p.~220]{Abadir2005matrix}.

The following result illustrates the construction of ellipsoidal ROAs.
\begin{theorem} \label{thm:InitialEllipsoidalROA}
 Consider a hyper-rectangular domain $$\mathcal{B}=\Hintcc{-R_{\mathcal{B}}, R_{\mathcal{B}}}\subseteq  \mathcal{X}\cap \mathbb{X}_{v},
 $$
 containing the origin in its interior\footnote{ Such a hyper-rectangle exists due to the openness of $\mathcal{X}$.}. Define 
 $$
h(x,w)\defas f(x,w)-D_{x}f(0_{n},w)x,~(x,w)\in \mathbb{R}^{n}\times \mathbb{R}^{m}, 
$$
and let  $\eta_{\mathcal{B}}\in \mathbb{R}_{+}^{n}$be such that\footnote{$\eta_{\mathcal{B}}$ exits due to the twice-continuous differentiability of $f$ and   can be estimated using interval arithmetic.} 
$$
\abs{h(x,w)}\leq \frac{\norm{x}^{2}}{2}\eta_{\mathcal{B}} ~\forall (x,w)\in \mathcal{B} \times W.
$$
Let $\mathbf{A}\subseteq \mathbb{R}^{n\times n}$ be a polytopic matrix set\footnote{This can be obtained using  interval arithmetic.} such that
$$
D_{x}f(0,w)\in \mathbf{A}~\forall w\in W, 
$$
and let  $A_{1},\ldots A_{q}\in \mathbb{R}^{n\times n}$ be the vertices of $\mathbf{A}$, so that 
$
\mathbf{A}=\mathrm{conv}\{A_{1},\ldots A_{q}\}.
$
Assume  there exists  $P\in \mathcal{S}_{++}^{n}$ such that  
$$
Q_{i}\defas -(A^{\intercal}_{i}PA_{i}-P)\in \mathcal{S}_{++}^{n},~i\in \intcc{1;q}.
 $$
 Define the  candidate Lyapunov function 
 $$
{\nu}(x)\defas x^{\intercal}Px,~x\in \mathbb{R}^{n}.
$$
Moreover, define the parameters 
$$
c_{1}\defas \left(\frac{-e_{2}+\sqrt{e_{2}^2+4e_{1} d}}{2e_{1}}\right)^{2},
$$
$$e_{1}\defas\frac{\norm{\abs{P^{\frac{1}{2}}}\eta_{\mathcal{B}}}^{2}}{4\underline{\lambda}(P)},
$$
$$
d=\min_{i\in \intcc{1;q}}\underline{\lambda}(Q_{i})-\epsilon,
$$
for some sufficiently small $\epsilon>0$ that satisfies $\epsilon<\min_{i\in \intcc{1;q}}\underline{\lambda}(Q_{i})$, and 
$
\epsilon<\overline{\lambda}(P),
$
and  $e_{2}\geq 0$\footnote{The parameter $e_{2}$ can be estimated by means of interval arithmetic.} satisfies  
$$
e_{2}\geq \norm{\abs{P^{\frac{1}{2}}}\eta_{\mathcal{B}}}\norm{P^{\frac{1}{2}}(D_{x}f(0_{n},w))P^{-\frac{1}{2}}}~\forall w\in W.
$$
Additionally, define
$$
k_{2}\defas \min_{i\in \intcc{1;n}} \frac{ R_{\mathcal{B},i}^{2}}{P^{-1}_{i,i}},
$$
and    
$$
c_{1}\defas \min\{k_{1},k_{2}\}.
$$
Then, the ellipsoidal set 
$
\mathbb{E}_{c_{1}}\defas \{x\in \mathbb{R}^{n}|\nu(x)\leq c_{1}\}
$ is a safe robust ROA within $\mathcal{B}\subseteq \mathcal{X}\cap \mathbb{X}_{v}$.
 \end{theorem}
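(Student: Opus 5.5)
The plan is a direct quadratic Lyapunov argument on the ellipsoid $\mathbb{E}_{c_1}$. I read the first displayed $c_1$ (the root formula) as the definition of $k_1$, so that $c_1=\min\{k_1,k_2\}$. Three things must be shown: (i) $\mathbb{E}_{c_1}\subseteq\mathcal{B}$ (safety); (ii) $\nu$ strictly contracts along every disturbed step on $\mathbb{E}_{c_1}$, which gives robust invariance; and (iii) trajectories converge to $\mathcal{A}=\{0_n\}$.

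\emph{Step (i).} For $x$ with $x^\intercal P x\le c$, the maximum of $|x_i|$ is $\sqrt{c\,P^{-1}_{i,i}}$, obtained by Cauchy--Schwarz in the $P$-inner product. Since $c_1\le k_2$, this gives $|x_i|\le R_{\mathcal{B},i}$ for all $i$. Hence $\mathbb{E}_{c_1}\subseteq\mathcal{B}\subseteq\mathcal{X}\cap\mathbb{X}_v$, and in particular the bound on $h$ is available on $\mathbb{E}_{c_1}$.

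\emph{Step (ii): the main computation.} Fix $x\in\mathbb{E}_{c_1}$ and $w\in W$, and write $A=D_xf(0_n,w)$, so that $f(x,w)=Ax+h(x,w)$. Then
\begin{align*}
\nu(f(x,w))-\nu(x)=x^\intercal(A^\intercal PA-P)x+2x^\intercal A^\intercal P h+h^\intercal P h .
\end{align*}
I will bound the three terms separately.

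The first term uses $A\in\mathbf{A}=\mathrm{conv}\{A_i\}$. The map $A\mapsto x^\intercal A^\intercal PAx$ is convex, so its maximum over the polytope is attained at a vertex. This gives $x^\intercal(A^\intercal PA-P)x\le-\min_i\underline{\lambda}(Q_i)\norm{x}^2$.

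For the cross term, write $x^\intercal A^\intercal P h=(P^{1/2}AP^{-1/2}P^{1/2}x)^\intercal(P^{1/2}h)$. Next, $\norm{P^{1/2}h}\le\norm{\abs{P^{1/2}}\,\abs{h}}\le\tfrac{\norm{x}^2}{2}\norm{\abs{P^{1/2}}\eta_{\mathcal{B}}}$, by entrywise monotonicity of the Euclidean norm. Combining these with $\norm{P^{1/2}x}=\sqrt{\nu(x)}$ yields $|2x^\intercal A^\intercal Ph|\le e_2\norm{x}^2\sqrt{\nu(x)}$.

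For the last term, $h^\intercal Ph=\norm{P^{1/2}h}^2\le\tfrac{\norm{x}^4}{4}\norm{\abs{P^{1/2}}\eta_{\mathcal{B}}}^2$. Using $\norm{x}^2\le\nu(x)/\underline{\lambda}(P)$, this is at most $e_1\norm{x}^2\nu(x)$.

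Collecting the three bounds, with $t=\sqrt{\nu(x)}$,
\begin{align*}
\nu(f(x,w))-\nu(x)\le\norm{x}^2\bigl(-\min_i\underline{\lambda}(Q_i)+e_2t+e_1t^2\bigr).
\end{align*}
The quadratic $e_1t^2+e_2t-d$ is nonpositive exactly for $t\le\sqrt{k_1}$, its positive root. Hence on $\mathbb{E}_{c_1}$ the bracket is at most $-\epsilon$. Using $\norm{x}^2\ge\nu(x)/\overline{\lambda}(P)$, we get $\nu(f(x,w))\le(1-\epsilon/\overline{\lambda}(P))\nu(x)$, and the factor lies in $[0,1)$ because $\epsilon<\overline{\lambda}(P)$.

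\emph{Step (iii).} Step (ii) gives $f(x,w)\in\mathbb{E}_{c_1}$ for all $w\in W$, so $\mathbb{E}_{c_1}$ is robustly invariant. Iterating, every trajectory satisfies $\nu(\varphi_x^\pi(k))\le(1-\epsilon/\overline{\lambda}(P))^k c_1\to0$, so it converges exponentially to $0_n$. Together with (i), trajectories stay in $\mathcal{B}\subseteq\mathcal{X}\cap\mathbb{X}_v$, so $\mathbb{E}_{c_1}$ is a safe robust ROA.

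The main obstacle is the cross-term bound in step (ii). It requires passing the entrywise absolute values through $P^{1/2}$ correctly, and tracking whether $e_1$ and $e_2$ as defined match exactly the powers of $\norm{x}$ and $\sqrt{\nu}$ produced. If a constant is off, the first fix I will try is to re-express everything through $\norm{x}^2\le\nu/\underline{\lambda}(P)$.
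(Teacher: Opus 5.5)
Your proposal is correct and follows the paper's proof step for step. The shared steps are:
\begin{itemize}
\item the decomposition $f(x,w)=D_xf(0_n,w)x+h(x,w)$;
\item the convexity/vertex bound giving $-\min_i\underline{\lambda}(Q_i)\norm{x}^2$;
\item the $P^{1/2}AP^{-1/2}$ factorization of the cross term, giving $e_2\sqrt{\nu(x)}\norm{x}^2$;
\item the bound $e_1\norm{x}^2\nu(x)$ on $h^\intercal Ph$;
\item the positive root $\sqrt{k_1}$ of $e_1t^2+e_2t-d$;
\item the contraction factor $1-\epsilon/\overline{\lambda}(P)$.
\end{itemize}
Reading the first displayed $c_1$ as $k_1$ is the intended interpretation. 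Your Cauchy--Schwarz argument for $\mathbb{E}_{c_1}\subseteq\mathcal{B}$ fills in a step the paper only asserts.
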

\begin{proof}
For $(x,w)\in \mathbb{R}^{n}\times W$,
$$
f(x,w)= D_{x}f(0_{n},w)x+h(x,w).
$$
Therefore, for $(x,w)\in \mathbb{R}^{n}\times W$ (the arguments of $h$, $f$, and $D_{x}f$ are dropped for convenience), 
\begin{align*}
\nu(f)-\nu(x) =&(f)^{\intercal}P(f)-x^{\intercal}Px\\
=&((D_{x}f)x+h)^{\intercal}P((D_{x}f)x+h)-x^{\intercal}Px\\
=& x^{\intercal}((D_{x}f)^{\intercal}P(D_{x}f)-P)x\\
&+ x^{\intercal}(D_{x}f)^{\intercal}Ph+h^{\intercal}P(D_{x}f)x+h^{\intercal}Ph.
\end{align*}
Note that the term 
$
x^{\intercal}((D_{x}f)^{\intercal}P(D_{x}f)-P)x
$
is a convex function of $D_{x}f$. Hence, for all $w\in W$,
\begin{align*}
x^{\intercal}((D_{x}f)^{\intercal}P(D_{x}f)-P)x\leq& \sup_{A\in  \mathbf{A}} x^{\intercal}(A^{\intercal}PA-P)x\\
&=\sup_{i\in \intcc{1;q}} x^{\intercal}(A_{i}^{\intercal}PA_{i}-P)x\\
&= -\sup_{i\in \intcc{1;q}} x^{\intercal}Q_{i}x.
\end{align*}
Therefore,
$$
\nu(f)-\nu(x)\leq  -\min_{i\in \intcc{1;q}}\underline{\lambda}(Q_{i})\norm{x}^{2}+2x^{\intercal}(D_{x}f)^{\intercal}Ph+h^{\intercal}Ph.
$$ 
For $(x,w)\in \mathcal{B}\times W$, 
\begin{align*}
2x^{\intercal}(D_{x}f)^{\intercal}Ph=&2(P^{-1/2}P^{\frac{1}{2}}x)^{\intercal}(D_{x}f)^{\intercal}P^{\frac{1}{2}}P^{\frac{1}{2}}h\\
&=(P^{\frac{1}{2}}x)^{\intercal}P^{-1/2} (D_{x}f)^{\intercal}P^{\frac{1}{2}}(2P^{\frac{1}{2}}h)\\
&\leq \norm{P^{\frac{1}{2}}x}\norm{P^{\frac{1}{2}}(D_{x}f)P^{-1/2}}\norm{\abs{P^{\frac{1}{2}}}\eta_{\mathcal{B}}}\norm{x}^{2}\\
& \leq \sqrt{\nu(x)}\norm{\abs{P^{\frac{1}{2}}}\eta_{\mathcal{B}}}\norm{P^{\frac{1}{2}}(D_{x}f)P^{-\frac{1}{2}}}\norm{x}^{2}.
\end{align*}
 Also, 
 \begin{align*}
 h^{\intercal}Ph =& \norm{P^{\frac{1}{2}}h}^{2} \leq \norm{\abs{P^{\frac{1}{2}}}\abs{h}}^{2} \leq \norm{\abs{P^{\frac{1}{2}}}\eta_{\mathcal{B}}}^{2}\frac{\norm{x}^{4}}{4}\\
 &\leq \norm{\abs{P^{\frac{1}{2}}}\eta_{\mathcal{B}}}^{2}\frac{\norm{x}^{2}\nu(x)}{4\underline{\lambda}(P)}.
 \end{align*}
 We consequently have, for $(x,w)\in \mathcal{B}\times W$,
\begin{align*}
\nu(f)-\nu(x)\leq&-\epsilon \norm{x}^{2}\\
&+\sqrt{\nu(x)}\norm{\abs{P^{\frac{1}{2}}}\eta_{\mathcal{B}}}\norm{P^{\frac{1}{2}}(D_{x}f)P^{-\frac{1}{2}}}\norm{x}^{2}\\
&+
\norm{\abs{P^{\frac{1}{2}}}\eta_{\mathcal{B}}}^{2}\frac{\norm{x}^{2}\nu(x)}{4\underline{\lambda}(P)}
-d \norm{x}^{2}.
\end{align*}
For $x\in \mathbb{E}_{c_{1}}$,  $\nu(x)\leq c_{1}\leq k_{2}\Rightarrow
\nu(x)\leq { R_{\mathcal{B},i}^{2}}/{P^{-1}_{i,i}}~\forall i\in \intcc{1;n}, 
$
indicating  $x\in \mathcal{B}$ (i.e.,  $\mathbb{E}_{c_{1}}\subseteq \mathcal{B}$) and the bounds above hold over $\mathbb{E}_{c_{1}}$.  Let $(x,w)\in \mathbb{E}_{c_{1}}\times W$ (note that $\nu(x)\leq k_{1}$),  it then follows, using the definitions of $e_{1}$, and $e_{2}$, and $k_{1}$, that:
\begin{align*}
\sqrt{\nu(x)}\norm{\abs{P^{\frac{1}{2}}}\eta_{\mathcal{B}}}\norm{P^{\frac{1}{2}}(D_{x}f)P^{-\frac{1}{2}}}\norm{x}^{2}\\
+
\norm{\abs{P^{\frac{1}{2}}}\eta_{\mathcal{B}}}^{2}\frac{\norm{x}^{2}\nu(x)}{4\underline{\lambda}(P)}-d \norm{x}^{2}\\
\leq
e_{2}\sqrt{\nu(x)}\norm{x}^{2}+
e_{1} \norm{x}^{2}\nu(x)
-d \norm{x}^{2}\leq 0.
\end{align*}
Therefore,
$
\nu(f(x,w))-\nu(x)\leq -\epsilon\norm{x}^{2}\leq -({\epsilon}/{\overline{\lambda}(P)}) \nu(x),
$
and hence,
$
\nu(f(x,w))\leq \left(1-{\epsilon}/{\overline{\lambda}(P)}\right)\nu(x)$. Using induction, it can then be shown that for any $x\in \mathbb{E}_{c_{1}}$ and any disturbance sequence $\pi\in W^{\mathbb{Z}_{+}}$,
\begin{equation}\label{eq:LyapunovFunctionDecreasing}
\nu(\varphi_{x}^{\pi}(j))\leq \left(1-\frac{\epsilon}{\overline{\lambda}(P)}\right)^{j}\nu(x)~\forall j\in \mathbb{Z}_{+},
\end{equation}
implying $ \varphi_{x}^{\pi}(j)\in \mathbb{E}_{c_{1}}~~\forall j\in \mathbb{Z}_{+}$ and
\begin{align*}
\lim_{j\rightarrow \infty}\norm{\varphi_{x}^{\pi}(j)}^{2}&\leq\lim_{j\rightarrow \infty} \frac{1}{\underline{\lambda}(P)}\nu(\varphi_{x}^{\pi}(j))\\
&\leq \lim_{j\rightarrow \infty} \frac{1}{\underline{\lambda}(P)}\left(1-\frac{\epsilon}{\overline{\lambda}(P)}\right)^{j}\nu(x)=0.
\end{align*}
\end{proof}
\section{Lyapunov analysis of system \eqref{eq:cos_poly_sys}}
\label{sec:LyapunovAnalysisPolynomial}
The analysis was initially drafted with the assistance of ChatGPT \cite{chatgpt2026} and subsequently verified and refined by the authors.  
Let $\nu(x)=\|x\|^{2}$ for $x\in\mathbb{R}^{2}$ (herein, $\norm{\cdot}$ is the Euclidean norm).  
From \eqref{eq:cos_poly_sys},
$\nu(f(x,w))
=
\bigl(\cos(\nu(x))-w\nu(x)\bigr)^{2}\nu(x)$, $\in\mathbb{R}^{2}\times W$. To preserve order after squaring, we enforce positivity of 
$\cos(\nu(x))-w\nu(x)$ near the origin.  
Since $\cos(s)\ge 1-{s^{2}}/{2}$ for all $s\ge 0$,
$
\cos(\nu(x))-w\nu(x)
\ge
1-{\nu(x)^{2}}/{2}-w\nu(x)$. Thus, $\cos(\nu(x))-w\nu(x)\ge0$ whenever
$
1-{\nu(x)^{2}}/{2}-w\nu(x)\ge0,
$
i.e.,
$
0\le\nu(x)\le -w+\sqrt{w^{2}+2}$.
Because $w\mapsto -w+\sqrt{w^{2}+2}$ is decreasing on $\mathbb{R}_{+}$,
for all $w\in\intcc{w_{\min},w_{\max}}$ we have
\begin{equation}\label{eq:rho_pos}
\begin{split}
0\le\nu(x)\le\rho_{\mathrm{pos}}
:=
-w_{\max}+\sqrt{w_{\max}^{2}+2}\\
\;\Longrightarrow\;
\cos(\nu(x))-w\nu(x)\ge0 .
\end{split}
\end{equation}
Using $\cos(\tau)\le1$ for $\tau\ge0$, $
\cos(\nu(x))-w\nu(x)\le 1-w\nu(x)$.

On \eqref{eq:rho_pos}, both sides are nonnegative; hence
\[
\bigl(\cos(\nu(x))-w\nu(x)\bigr)^{2}
\le (1-w\nu(x))^{2}
\le (1-w_{\min}\nu(x))^{2}.
\]
Therefore, whenever $\nu(x)\le\rho_{\mathrm{pos}}$,
\begin{equation}\label{eq:Vk_bound_tight}
\nu(f(x,w))
\le
\nu(x)-2w_{\min}\nu(x)^2
+w_{\min}^{2}\nu(x)^3 .
\end{equation}
If additionally $\nu(x)\leq 1/w_{\min}$, then
$w_{\min}^{2}\nu(x)^3 \le w_{\min}\nu(x)^2$, yielding
\begin{equation}\label{eq:poly_dec_tight}
\nu(f(x,w))
\le
\nu(x)-w_{\min}\nu(x)^2 .
\end{equation}

Define $\rho\defas
\min\!\left\{
\rho_{\mathrm{pos}},
{1}/{w_{\min}}-\varepsilon
\right\}$,  $\varepsilon>0$, so that $w_{\min}\nu(x)<1$ for all $\nu(x)\le\rho$.
Then, \eqref{eq:poly_dec_tight} implies
$\nu(f(x,w))\le\nu(x)$ for all $\nu(x)\le\rho$,
and the sublevel set $\Omega_\rho=\{x\in\mathbb{R}^{2}:\nu(x)\le\rho\}$ is a robust invariant set. For $\nu(x)\in \intoc{0,\rho}$, and using  \eqref{eq:poly_dec_tight},
\[
\frac{1}{\nu(f(x,w))}
\ge
\frac{1}{\nu(x)-w_{\min}\nu(x)^2}
=
\frac{1}{\nu(x)}\frac{1}{1-w_{\min}\nu(x)} .
\]
Since $0\le w_{\min}\nu(x)<1$, the fact that
$
{1}/(1-z)\ge 1+z$ for $z\in \intco{0,1}$
gives
$
{1}/{\nu(f(x,w))}
\ge
{1}/{\nu(x)}+w_{\min}.
$
Along trajectories $x_{k+1}=f(x_k,w_k)$,
forward invariance ensures $\nu(x_k)\le\rho$ for all $k$,
hence
${1}/{\nu(x_{k+1})}
\ge
{1}/{\nu(x_k)}+w_{\min}$, which by induction yields
$
{1}/{\nu(x_k)}
\ge
{1}/{\nu(x_0)}+w_{\min}k$, $k\in \mathbb{Z}_{+}$. Therefore, $
\nu(x_k)
\le
\nu(x_0)/(1+w_{\min}k\,\nu(x_0))$ .
Since $\nu(x)=\|x\|^2$ ,
$
\|x_k\|
\le
{\|x_0\|}/{\sqrt{1+w_{\min}k\|x_0\|^2}},~k\in \mathbb{Z}_{+}$, proving uniform local polynomial stability of the origin for
\eqref{eq:cos_poly_sys}.

\end{document}